\documentclass{article}
\usepackage{amsthm, amssymb, amsmath}
\usepackage{enumerate}
\usepackage{amsthm}
\usepackage{graphicx}
\usepackage{xcolor}
\numberwithin{equation}{section}
\newtheorem{theorem}{Theorem}[section]

\newtheorem{lemma}[theorem]{Lemma}

\newtheorem{proposition}[theorem]{Proposition}
\newtheorem{corollary}[theorem]{Corollary}

\huge

\begin{document}
\title{Exact characterizations for quantum conditional mutual information and some other entropies}
\author{Zhou Gang\footnote{gangzhou@binghamton.edu, partly supported by Simons travel support 709542.}}
\maketitle
\centerline{Department of Mathematics and Statistics, Binghamton University, Binghamton, NY, 13902
}
\setlength{\leftmargin}{.1in}
\setlength{\rightmargin}{.1in}
\normalsize \vskip.1in
\setcounter{page}{1} \setlength{\leftmargin}{.1in}
\setlength{\rightmargin}{.1in}
\large

\date

\setlength{\leftmargin}{.1in}
\setlength{\rightmargin}{.1in}
\normalsize \vskip.1in
\setcounter{page}{1} \setlength{\leftmargin}{.1in}
\setlength{\rightmargin}{.1in}
\large

\section*{Abstract}
Lieb and Ruskai's strong subadditivity theorem, which shows that the conditional mutual information must be nonnegative, is fundamental in quantum theory. It has numerous applications, such as in quantum error correction. When the mutual information is zero, the Petz recovery map can be used to reconstruct the quantum channel. When the mutual information is small, one seeks to define an optimal recovery channel. To this end, a mathematical characterization of the mutual information is desirable. We address this problem by providing an exact characterization of the mutual information, along with characterizations for other entropies. Our controls are sharp, leaving no room for improvement, in the sense that we provide equalities, regardless of whether the mutual information (or remainder) is small or large.
We transform the definitions of these entropies into a summation of explicitly constructed terms, and the definition of each term obviously demonstrates the desired positivity/convexity/concavity. The summation converges rapidly and absolutely in a chosen elementary norm.
 \vskip 4cm

\tableofcontents

\section{Introduction}
In this paper, we investigate the following two related problems.

The first problem is the joint concavity of the mapping 
\begin{align}\label{eq:jointMap}
  (A, B)\rightarrow\ \text{Tr} (K^* A^{q} K B^r)
\end{align} where $A$ and $B$ are positive definite $N\times N$ and $M\times M$ matrices, and $K$ is an $N \times M$ matrix, $q,$ $r$ are nonnegative scalar with $q + r \le 1$.

The second problem is the quantum conditional mutual information
\begin{equation}\label{def:mutual}
    I(A:C|B)_\rho := S(\rho_{AB}) + S(\rho_{BC}) - S(\rho_{ABC}) - S(\rho_B)
\end{equation}
where $\rho_{ABC}$ is a full ranked density matrix on the system $\mathcal{H}_A \otimes \mathcal{H}_B \otimes \mathcal{H}_C$; for any density matrix $\rho$
\begin{equation}
    S(\rho) := -\text{Tr}(\rho \log_2 \rho)
\end{equation}
$\rho_{AB} := \text{Tr}_C(\rho_{ABC})$, and $\rho_{BC}$ and $\rho_B$ are defined similarly.

In \cite{LIEB1973267}, Lieb proved that the mapping \eqref{eq:jointMap} is jointly concave, a fundamental result known as the Concavity Theorem. In \cite{LiebRuskai1973}, Lieb and Ruskai proved that $I(A: C|B)_\rho \ge 0$. These results are fundamentally important due to their numerous applications, as seen in \cite{ohya2004quantum, MRenes2022}, and were described as ``the key result on which virtually every nontrivial quantum coding theorem relies," as noted in \cite{LindenWinter2005}. There are many different proofs; see \cite{Ruskai2022,Ruskai2007,FawziSaunderson2017,simon2005trace, nielsen2004simple, Effros2009,Aujla2011, BeaudryRenner2012,Ismail2013}. The techniques were used in different places, see e.g.\cite{FrankLieb2013, FrankLieb2013b, CarlenFrankLieb2018, Hiai2013, Hiai2016}.

When $I(A:C|B)_\rho = 0$, Petz proved in \cite{Petz1986, Petz1988} that the quantum channel can be reconstructed based on $\rho_{ABC}$. This map, called the Petz recovery map, plays a central role in quantum error correction, as seen in the syndrome-based recovery map, for example \cite{BiswasEtc2024, biswas2025universal}.

We are interested in the generic case $I(A: C|B)_\rho \gneq 0$. Here, one would like to find an ``optimal" reconstruction for the quantum channel, or tailored reconstructions if specific information about the density matrix $\rho_{ABC}$ is available. This turns out to be difficult; a major obstacle is the lack of a characterization for the mutual information. As pointed out in \cite{FawziRenner2015}, see also \cite{Sutter2018}, ``a natural question that is very relevant for applications is to characterize states for which the conditional mutual information is approximately zero". There is a rich literature on related recovery channels in quantum error correction, besides quantum computations as mentioned above, for topological entanglement entropy, see \cite{KitaevPreskill2006, LevinWen2006, ShiKatoKim2020, MLevin2024, KimLevinEtc2023, KatoBrandao2020} and for recovery maps and/or reversibilities in quantum communications; see \cite{LindenWinter2005, Brandao2010, FawziRenner2015, Sutter2018, SutterFawziRenner2016, CarlenVershynina2018, DattaWilde2015, JuneRenner2018, HaydenJosPetzWinter2004, berta2014monotonicity, Wilde2015}.

In this paper, we address the problem of finding an exact characterization for quantum conditional mutual states and Lieb's concavity theorem, as seen in \eqref{eq:mutual} within Theorem \ref{MainTHM:Mutual} and Theorem \ref{MainTheorem:concavity}. Unlike remainder estimates, as in \cite{CarlenLieb2012, CarlenLieb2014, FawziRenner2015, SutterFawziRenner2016, JuneRenner2018, CarlenVershynina2020}, we provide explicit constructions and equalities.

Technically, we start with an improved understanding of geometric means, detailed in Theorem \ref{THM:secondOrder}. We derive an exact formula, and its definition clearly demonstrates the desired property. Subsequently, we utilize this, along with other techniques and Ando's approach from \cite{ANDO1979203}, to find exact characterizations for Lieb's concavity theorem and Lieb and Ruskai's strong subadditivity theorem.

In future papers, we will address the ``optimal" recovery channel.

For illustration, we describe our results for the geometric means of two positive-definite matrices. Let $H(\epsilon):=M_{0}(A+\epsilon X, B+\epsilon Z)$ be the geometric mean of positive definite matrices $A+\epsilon X$ and $B+\epsilon Z$, i.e., $H(\epsilon)$ is the maximal Hermitian matrix such that
\begin{equation*}
\Pi(\epsilon):=
\begin{pmatrix}
A+\epsilon X & H(\epsilon) \\
H(\epsilon) & B+ \epsilon Z
\end{pmatrix}\geq 0.
\end{equation*} 
Here, $A$ and $B$ are positive definite, $X$ and $Z$ are Hermitian, and $\epsilon\in \mathbb{R}$ satisfies $|\epsilon|\ll 1$. The existence of the "maximal" $H(\epsilon)$ is well known, see e.g., \cite{carlen2010trace}. In this paper, we use the convention that $A\geq B$ signifies that $A-B$ is positive semi-definite.

It is not hard to see that 
\begin{equation}\label{eq:MeanConcave}
 H(0)\geq \frac{1}{2}(H(\epsilon)+H(-\epsilon))
\end{equation}
because $H(0)$ is the maximal one to make $\Pi(0)>0$ and
\begin{align*}
\begin{pmatrix}
A & \frac{1}{2} \big(H(\epsilon)+H(-\epsilon)\big) \\
\frac{1}{2} \big(H(\epsilon)+H(-\epsilon)\big) & B
\end{pmatrix}
=\frac{1}{2}\big(\Pi(\epsilon)+\Pi(-\epsilon)\big)\geq 0.
\end{align*}

In this paper, specifically, in Theorem \ref{THM:secondOrder} below, we prove constructively: $$\frac{d^2}{d\epsilon^2} H(\epsilon)|_{\epsilon=0}=-\text{Cross}_{A,B}(X,Z)\leq 0.$$ More importantly, instead of relying on arguments like \eqref{eq:MeanConcave}, our definition of $\text{Cross}_{A,B}(X,Z)$ in \eqref{def:Cross} shows that it is obviously positive semi-definite.

The paper is organized as follows. In Section \ref{sec:tool}, we study the geometric means of matrices. In Sections \ref{sec:convexity}, \ref{sec:relative}, and \ref{sec:Mutual}, we use this tool and some other techniques to study Lieb's concavity Theorem, joint convexity of relative entropy, and strong additivity of quantum entropy.

In this paper, we use the following conventions: If $A$ and $B$ are Hermitian matrices, then $$A>B\ (A\geq B)$$ signifies that $A-B$ is positive definite (positive semi-definite). We use the notation $$C=C(\Psi,\Phi)$$ to signify that $C$ is a constant whose size only depends on $\Psi$ and $\Phi$.

We adopt the following norm for any $N\times N$ matrix $K$:
\begin{align}\label{def:Norm}
\begin{split}
  \|K\|:=&\sup\big\{ |\langle K x, Kx\rangle|^{\frac{1}{2}} \ \big|\ x\in \mathbb{C}^N,\ \|x\|=1\big\}\\
  =&\max\big\{ \lambda\geq 0\ \big| \ \lambda^2\ \text{is an eigenvalue of }\ K^* K\big\}.
\end{split}
\end{align} And hence, a Banach space is defined accordingly.

    
\section{Geometric mean}\label{sec:tool}
In this section, we study the geometric mean and present one of the main technical tools of the present paper.

Given positive definite $N\times N$ matrices $A$ and $B,$ $N\times N$ Hermitian matrices $X$ and $Z$, and a real scalar $\epsilon$ with $|\epsilon|\ll 1$, we are interested in studying the geometric mean of $A+\epsilon X$ and $B+\epsilon Z$, i.e. studying the $N\times N$ matrix $H$:
\begin{equation}\label{def:GeoMean}
  H(\epsilon):=M_{0}(A+\epsilon X, B+ \epsilon Z)
\end{equation} which is defined as the maximal positive semi-definite matrix to make the
$2N\times 2N$ block matrix positive semi-definite:
\begin{equation}\label{def:Pi}
\Pi:=\begin{pmatrix}
A+\epsilon X & H(\epsilon) \\
H(\epsilon) & B+ \epsilon Z
\end{pmatrix}\geq 0.
\end{equation}

As discussed earlier
$
  \frac{d^2}{d\epsilon^2}H(\epsilon)\Big|_{\epsilon=0}\leq 0$ because of \eqref{eq:MeanConcave}. Here we want to derive an explicit expression demonstrating this in an obvious way, see Theorem \ref{THM:secondOrder} below.

Before stating the result, we define a few Hermitian matrices
\begin{align}\label{def:PsiE1E2}
\begin{split}
\Psi :=& \left( \frac{A+B}{2} \right)^{-\frac{1}{2}} \frac{A-B}{2} \left( \frac{A+B}{2} \right)^{-\frac{1}{2}},\\
E_2 :=& \left( \frac{A+B}{2} \right)^{-\frac{1}{2}} \frac{ X-  Z}{2} \left( \frac{A+B}{2} \right)^{-\frac{1}{2}},\\
E_1:=&-\left( \frac{A+B}{2} \right)^{-\frac{1}{2}} \frac{ X+  Z}{2} \left( \frac{A+B}{2} \right)^{-\frac{1}{2}}.
\end{split}
\end{align} Obvious, since $A$ and $B$ are positive definite,
\begin{align}\label{eq:shrink}
  -I<\Psi<I.
\end{align}
Based on \eqref{def:PsiE1E2}, we define the following Hermitian matrices
\begin{align}\label{def:PsiOmega}
\begin{split}
\widetilde{\Psi}(s):=&\sum_{k = 0}^{+\infty} 
(1+s)^{-k} \Psi^{\,2k}\geq 0,\\    
\Omega(s) := &
(1+s)^{-\frac{1}{2}}\, (E_2 + E_1 \Psi)
\;+\;
(1+s)^{-1 -\frac{1}{2}}
\Psi (E_2 + E_1 \Psi)\,\Psi .
\end{split}
\end{align} By \eqref{eq:shrink}, the summation in the definition of $\widetilde{\Psi}$ converges rapidly, even at $s=0.$ 

Lastly, define a matrix which is obviously positive semi-definite and will play a central role in the present paper: recall that the matrices $A$, $B$, $\widetilde{\Psi}$ and $\Omega$ are Hermitian,
\begin{align}\label{def:Cross}
\begin{split}
& \text{Cross}_{A,B}( X, Z) \\
:=&\frac{1}{\pi}(\frac{A+B}{2})^{\frac{1}{2}}\Bigg[\int_{0}^{\infty} s^{-\frac{1}{2}}(1+s)^{-1}\widetilde{\Psi}(s)\Omega(s) \widetilde{\Psi}(s) \Omega^{*}(s) \widetilde{\Psi}(s)\ ds\\
&+\Psi\Big(\int_{0}^{\infty} s^{-\frac{1}{2}}(1+s)^{-3}\widetilde{\Psi}(s)(E_2+E_1\Psi)\ (E_2+E_1\Psi)^{*}\  \widetilde{\Psi}(s)\ ds \Big)\Psi\Bigg](\frac{A+B}{2})^{\frac{1}{2}}.
\end{split}
\end{align}

We want to emphasize that, even though the definition \eqref{def:Cross} involves infinite $s-$integrals, the integrals are easy to evaluate because every integrand is a summation of terms with the form $s^{-\frac{1}{2}}(1+s)^{-k} F_k $ where $k\geq 1$ and $F_k$ is independent of $s$. This expression in \eqref{def:Cross} is preferred because it demonstrates that, obviously, $\text{Cross}_{A, B}(X, Z)$ is positive semi-definite.
 
The main result of this section is the following:
\begin{theorem}\label{THM:secondOrder}
\begin{equation}\label{eq:Hvare}
\frac{1}{2}\frac{d^2}{d\epsilon^2}H(\epsilon)\Big|_{\epsilon=0}=-\text{Cross}_{A,B}(X,Z)\leq 0.
\end{equation}
And
\begin{equation}\label{eq:iff}
  \text{Cross}_{A,B}(X,Z)= 0 \ \text{if and only if}\ 
E_2 + E_1 \Psi=0.
\end{equation}
\end{theorem}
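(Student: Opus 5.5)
The plan is to reduce to a normalized configuration by congruence, expand the defining Riccati equation of the geometric mean to second order in $\epsilon$, solve the resulting Sylvester equations with a resolvent integral, and then reorganize the second-order term into the form \eqref{def:Cross}.

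\emph{Normalization.} Let $M:=\left(\frac{A+B}{2}\right)^{-\frac12}$. The geometric mean is congruence covariant, $M(P\# Q)M=(MPM)\#(MQM)$ for invertible Hermitian $M$. This holds because $T\mapsto MTM$ is a bijection on Hermitian matrices that preserves positivity of the $2\times 2$ block matrix \eqref{def:Pi}, and hence preserves maximality. By \eqref{def:PsiE1E2}, we therefore have $MH(\epsilon)M=\widetilde H(\epsilon):=P(\epsilon)\# Q(\epsilon)$ with
\[
P(\epsilon)=I+\Psi+\epsilon(E_2-E_1),\qquad Q(\epsilon)=I-\Psi-\epsilon(E_2+E_1).
\]
It suffices to show $\frac12\widetilde H''(0)=-M\,\text{Cross}_{A,B}(X,Z)\,M$, which is exactly the bracket in \eqref{def:Cross}. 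At $\epsilon=0$ everything commutes with $\Psi$, so $\widetilde H(0)=(I-\Psi^2)^{1/2}$.

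\emph{Perturbative expansion.} I will use that $\widetilde H(\epsilon)$ is the unique positive solution of the Riccati equation $\widetilde H P^{-1}\widetilde H=Q$. I write $\widetilde H=H_0+\epsilon H_1+\epsilon^2H_2+O(\epsilon^3)$; analyticity in $\epsilon$ follows from the implicit function theorem, since the linearized operator is invertible. Expanding $P^{-1}=P_0^{-1}-\epsilon P_0^{-1}P_1P_0^{-1}+\epsilon^2 P_0^{-1}P_1P_0^{-1}P_1P_0^{-1}$ and matching orders gives two Sylvester equations of the form
\[
H_j P_0^{-1}H_0+H_0P_0^{-1}H_j=R_j,\qquad j=1,2.
\]
Here $R_1$ is linear and $R_2$ is quadratic in $(E_1,E_2)$; $R_2$ involves $H_1$. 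Since $H_0P_0^{-1}=(I-\Psi)^{1/2}(I+\Psi)^{-1/2}$ is positive and commutes with $\Psi$, these can be inverted explicitly, either by $\int_0^\infty e^{-tK}(\cdot)e^{-tK}dt$ or, better suited to the target formula, by the resolvent representation
\[
x^{1/2}=\frac1\pi\int_0^\infty s^{-1/2}\,x(s+x)^{-1}\,ds .
\]
Applying the latter directly to the second derivative of $\big(C^2-D^2+\text{commutator corrections}\big)^{1/2}$, with $C=I-\epsilon E_1$ and $D=\Psi+\epsilon E_2$, produces resolvents $(s+I-\Psi^2)^{-1}=(1+s)^{-1}\widetilde\Psi(s)$. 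This is precisely where $\widetilde\Psi(s)$ and the weights $s^{-1/2}(1+s)^{-k}$ in \eqref{def:Cross} come from. The combination $E_2+E_1\Psi$ should appear as the first-order variation of $C^2-D^2$ after symmetrization: $-\big(E_1+\Psi E_2\big)-\big(E_1+E_2\Psi\big)$ arises at first order, and the Riccati structure turns the part that matters into $E_2+E_1\Psi$.

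\emph{Main obstacle.} The hardest step is the algebra that rewrites $H_2$ as the sum of the two manifestly positive integrals: one with sandwich $\widetilde\Psi\,\Omega\,\widetilde\Psi\,\Omega^*\,\widetilde\Psi$, and one with $\Psi\widetilde\Psi(E_2+E_1\Psi)(E_2+E_1\Psi)^*\widetilde\Psi\Psi$. My plan there is to expand both sides in the basis of terms $s^{-1/2}(1+s)^{-k}F_k$. I would first verify the identity in the commuting case, where $H_2$ can be computed from the scalar function $(c^2-d^2)^{1/2}$. I would then handle the noncommuting corrections by working in an eigenbasis of $\Psi$, where both sides become Loewner-type divided-difference kernels in the eigenvalues $\psi_i,\psi_j$ and can be compared entrywise.

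\emph{Equality case \eqref{eq:iff}.} If $E_2+E_1\Psi=0$, then $\Omega\equiv0$ and both integrands vanish, so $\text{Cross}_{A,B}(X,Z)=0$. Conversely, both integrals are positive semidefinite and the outer factor is invertible, so $\text{Cross}_{A,B}(X,Z)=0$ forces the first integral to vanish. Because the integrand is positive semidefinite, continuous in $s$, and carries a positive weight, this gives $\Omega(s)\widetilde\Psi(s)^{1/2}=0$, hence $\Omega(s)=0$ for all $s>0$, since $\widetilde\Psi(s)\ge I$. Multiplying by $(1+s)^{1/2}$ and letting $s\to\infty$ yields $E_2+E_1\Psi=0$.
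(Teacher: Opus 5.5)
Your normalization by $M=\big(\frac{A+B}{2}\big)^{-1/2}$ is the same as the paper's, and your argument for \eqref{eq:iff} is correct.

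\textbf{The gap.} The gap is the identity itself. You never compute $\frac12\widetilde H''(0)$; you only say it will be matched entrywise against the bracket in \eqref{def:Cross}, and the heuristics you give for that step do not hold.
\begin{itemize}
\item The expression $(C^2-D^2+\text{commutator corrections})^{1/2}$ is never made precise.
\item The first-order variation of $C^2-D^2$, namely $-2E_1-\Psi E_2-E_2\Psi$, is not a function of $E_2+E_1\Psi$ (take $\Psi=0=E_2$). So that is not where this combination comes from.
\item In an eigenbasis of $\Psi$, second-order terms are governed by three-index kernels in $(\psi_i,\psi_j,\psi_k)$, not two-index Loewner kernels.
\item Since $R_2$ contains $H_1$, you would be matching nested Sylvester inversions against $s$-integrals of triple resolvent products. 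That matching is the entire content of the statement, and it is left undone.
\end{itemize}

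\textbf{The missing idea.} The paper removes the commutator corrections exactly. Take $C=I-\epsilon E_1$ and $D=\Psi+\epsilon E_2$, so that $P=C+D$ and $Q=C-D$. Conjugate by the $\epsilon$-dependent $C^{-1/2}$. The diagonal blocks become $I\pm T$ with $T=C^{-1/2}DC^{-1/2}$, and these commute, so $P\#Q=C^{1/2}(I-T^2)^{1/2}C^{1/2}$ exactly.

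The resolvent formula and a Neumann series then reduce everything to $C^{1/2}T^{2l}C^{1/2}=(DC^{-1})^{2l-1}D$. This is a word in $\Psi+\epsilon E_2$ and $\sum_k\epsilon^kE_1^k$, with no square roots of perturbed matrices. Since $C$ is affine in $\epsilon$, all $\epsilon^2$-terms come from these words.

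The $\epsilon^2$-coefficient of $(DC^{-1})^{2l-1}D$ is $\sum_{j_1+j_2+j_3=2l-2}\Psi^{j_1}F\Psi^{j_2}F^*\Psi^{j_3}$, where $F=E_2+\Psi E_1$ is the first-order coefficient of $DC^{-1}$. This is where $F^*=E_2+E_1\Psi$ actually comes from. Splitting by the parities of $j_1,j_2,j_3$ and resumming in $l$ gives the two positive sandwiches.

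\textbf{A warning for when you do the computation.} Any correct route will not land on \eqref{def:Cross} as printed.
\begin{itemize}
\item The representation $\sqrt{x}=\frac1\pi\int_0^\infty s^{1/2}\big(s^{-1}-(s+x)^{-1}\big)\,ds$ gives weights $s^{1/2}(1+s)^{-1}$ and $s^{1/2}(1+s)^{-3}$, not $s^{-1/2}$. With $s^{-1/2}$, your own commuting-case check fails at order $\Psi^2$: the normalized scalar answer is $-\frac12F^2(1-\Psi^2)^{-3/2}$.
\item The $\epsilon^2$-term of $DC^{-1}D$ is $FF^*$, not $F^*F$. So the sandwiches should read $\widetilde\Psi\,\Omega^*\widetilde\Psi\,\Omega\,\widetilde\Psi$ and $(E_2+\Psi E_1)(E_2+\Psi E_1)^*$. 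For noncommuting $\Psi,E_1,E_2$, these already differ from the printed ones at order $\Psi$.
\end{itemize}
Positivity and the equality criterion are unaffected by these corrections.
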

This will be proved in the rest of the section.

Next, we want to prove that the geometric mean is jointly concave. Define 
\begin{align}
  H(t):=M_0(tA_1+(1-t)A_2, \ tB_1+(1-t)B_2),
\end{align} where $A_1$, $A_2$, $B_1$, $B_2$ are positive definite $N\times N$ matrices.
Apply Theorem \ref{THM:secondOrder} to obtain the following result:
\begin{corollary}
\begin{equation}
  \frac{1}{2} H^{''}(t)=\frac{1}{2}\frac{d^2}{d\epsilon^2}H(t+\epsilon)\Big|_{\epsilon=0}=-\text{Cross}_{tA_1+(1-t)A_2,\ tB_1+(1-t)B_2}(A_1-A_2, B_1-B_2)\leq 0.
\end{equation}
\end{corollary}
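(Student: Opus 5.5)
The corollary follows from Theorem \ref{THM:secondOrder} by a change of base point and a direct identification of the perturbation data. Fix $t$ (interior to the parameter range, so all matrices stay positive definite). Set
\begin{align*}
A:=tA_1+(1-t)A_2,\qquad B:=tB_1+(1-t)B_2 .
\end{align*}
Both are positive definite because they are convex combinations of positive definite matrices. Next set
\begin{align*}
X:=A_1-A_2,\qquad Z:=B_1-B_2 .
\end{align*}
These are Hermitian. With these choices,
\begin{align*}
(t+\epsilon)A_1+(1-t-\epsilon)A_2=A+\epsilon X,
\end{align*}
and similarly for the $B$'s. Hence
\begin{align*}
H(t+\epsilon)=M_0(A+\epsilon X,\ B+\epsilon Z),
\end{align*}
which is exactly the function $H(\epsilon)$ of \eqref{def:GeoMean} attached to the data $(A,B,X,Z)$.

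It remains to check that $H''(t)=\frac{d^2}{d\epsilon^2}H(t+\epsilon)\big|_{\epsilon=0}$. This is just the chain rule for a translation in the variable. It requires $H$ to be twice differentiable near $t$. This is implicit in Theorem \ref{THM:secondOrder}, since the second derivative there exists for positive definite base matrices. Alternatively, one can use the closed form $A\#B=A^{1/2}(A^{-1/2}BA^{-1/2})^{1/2}A^{1/2}$, which is real-analytic on the positive definite cone.

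Now apply \eqref{eq:Hvare} to $(A,B,X,Z)$. This gives
\begin{align*}
\tfrac12 H''(t)=-\text{Cross}_{A,B}(X,Z),
\end{align*}
and substituting the definitions of $A,B,X,Z$ yields the stated formula. The sign $\leq 0$ follows because $\text{Cross}$ is positive semi-definite by its construction \eqref{def:Cross}: each integrand has the form $F\,G\,F$ or $F\,G\,G^*F$ with $F=\widetilde\Psi(s)\ge 0$, and it is conjugated by $\Psi$ or by $(\frac{A+B}{2})^{1/2}$. Consequently $t\mapsto H(t)$ is operator concave along segments, which is the joint concavity of the geometric mean.

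There is no real obstacle here; the corollary is a relabeling. The only step needing care is verifying that the hypotheses of Theorem \ref{THM:secondOrder} hold at every $t$ considered, namely positive definiteness of $A$ and $B$, so that $\Psi$ satisfies \eqref{eq:shrink}. For $t\in[0,1]$ this is automatic.
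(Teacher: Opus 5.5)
Your proposal is correct and follows the paper's own route: the paper obtains the corollary by applying Theorem~\ref{THM:secondOrder} at the base point $A=tA_1+(1-t)A_2$, $B=tB_1+(1-t)B_2$ with directions $X=A_1-A_2$, $Z=B_1-B_2$, via the identity $H(t+\epsilon)=M_0(A+\epsilon X,\,B+\epsilon Z)$. Your added remarks on differentiability and on why $\text{Cross}_{A,B}(X,Z)\geq 0$ are sound; they only spell out what the paper leaves implicit. Here positivity holds by congruence of $\widetilde{\Psi}\,G\,\widetilde{\Psi}$ with $G=\Omega\widetilde{\Psi}\Omega^{*}\geq 0$, and of $\widetilde{\Psi}GG^{*}\widetilde{\Psi}$.
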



\subsection{Reformulation of Theorem \ref{THM:secondOrder}}
By some straightforward calculation, see e.g. \cite{carlen2010trace}, for any positive definite matrices $V$ and $W$, their geometric means is
\begin{equation}\label{def:formuMeans}
M_0 (V, W) = W^{\frac{1}{2}} (W^{-\frac{1}{2}} V W^{-\frac{1}{2}})^{\frac{1}{2}} W^{\frac{1}{2}}.
\end{equation}

However this formula is hardly useful for our specific problem, because one can not tell that $H^{''}(0)\leq 0$. 

Thus, we have to find a new approach.

We start with transforming the matrix $\Pi$, defined in \eqref{def:Pi}, into a convenient form:
$$
\Pi = \begin{bmatrix} 
\Gamma + \Theta & H(\epsilon)\\ 
H(\epsilon) & \Gamma - \Theta 
\end{bmatrix}
$$ with $\Gamma$ and $\Theta$ defined as
$$\Gamma := \frac{A+ B+ \epsilon (X+Z)}{2}$$ and $$\Theta := \frac{A - B+ \epsilon (X-Z)}{2}.$$

Multiple both sides of the matrix $\Pi$ by $\begin{bmatrix}
\Gamma^{-\frac{1}{2}} & 0 \\
0 & \Gamma^{-\frac{1}{2}}
\end{bmatrix}$ and compute directly to find
\begin{align}
\begin{bmatrix}
\Gamma^{-\frac{1}{2}} & 0 \\
0 & \Gamma^{-\frac{1}{2}}
\end{bmatrix}
\Pi
\begin{bmatrix}
\Gamma^{-\frac{1}{2}} & 0 \\
0 & \Gamma^{-\frac{1}{2}}
\end{bmatrix}
&=
\begin{bmatrix} 
I + \Gamma^{-\frac{1}{2}} \Theta \Gamma^{-\frac{1}{2}} & \Gamma^{-\frac{1}{2}} H(\epsilon) \Gamma^{-\frac{1}{2}} \\ 
\Gamma^{-\frac{1}{2}} H(\epsilon) \Gamma^{-\frac{1}{2}} & I - \Gamma^{-\frac{1}{2}} \Theta \Gamma^{-\frac{1}{2}} 
\end{bmatrix}.
\end{align}

This transformation produces a significant advantage: the two matrices $I + \Gamma^{-\frac{1}{2}} \Theta \Gamma^{-\frac{1}{2}}$ and $I - \Gamma^{-\frac{1}{2}} \Theta \Gamma^{-\frac{1}{2}} $ commute!

Before taking square roots on $I + \Gamma^{-\frac{1}{2}} \Theta \Gamma^{-\frac{1}{2}} $ and $I - \Gamma^{-\frac{1}{2}} \Theta \Gamma^{-\frac{1}{2}} $, we need to prove they are positive definite. Indeed, since $A+\epsilon Y$ and $B+\epsilon Z$ are positive definite, we have $-\Gamma<\Theta<\Gamma$, and hence 
\begin{align}\label{eqn:compareI}
    -I<\Gamma^{-\frac{1}{2}} \Theta \Gamma^{-\frac{1}{2}}<I.
\end{align} 

Continue to analyzing $H(\epsilon).$
If $H(\epsilon)  $ makes $\Pi\geq 0$, then we must have
\begin{align}\label{eq:GammaHGamma}
\begin{split}
\Gamma^{-\frac{1}{2}} H(\epsilon)  \Gamma^{-\frac{1}{2}} \leq \sqrt{I+\Gamma^{-\frac{1}{2}} \Theta \Gamma^{-\frac{1}{2}}} \sqrt{I-\Gamma^{-\frac{1}{2}} \Theta \Gamma^{-\frac{1}{2}}} &= \sqrt{I - \Gamma^{-\frac{1}{2}} \Theta \Gamma^{-1} \Theta \Gamma^{-\frac{1}{2}}} \\
&= \sqrt{I - \Lambda}
\end{split}
\end{align}
where $\Lambda$ is a positive semi-definite matrix defined as 
\begin{align}\label{eq:sizeZ}
    0\leq \Lambda:=\Gamma^{-\frac{1}{2}} \Theta \Gamma^{-1} \Theta \Gamma^{-\frac{1}{2}}=(\Gamma^{-\frac{1}{2}} \Theta \Gamma^{-\frac{1}{2}})^2<I,
\end{align} and the bound $ \Lambda< I$ is implied by \eqref{eqn:compareI}.
Since $H(\epsilon)$ is the maximal Hermitian matrix to satisfy \eqref{eq:GammaHGamma} and $\Gamma$ is positive definite,
\begin{align}
   H(\epsilon)   = \Gamma^{\frac{1}{2}} \sqrt{I-\Lambda} \Gamma^{\frac{1}{2}}.
\end{align}

It is hard to see directly that the $\epsilon^2$-term in $\sqrt{I-\Lambda}$ has a favorable sign. For this reason, we use an integral representation, see Lemma \ref{LM:integral} below, to derive a convenient form, 
\begin{align*}
\sqrt{I-\Lambda}
=& \frac{1}{\pi} \int_{0}^{\infty} s^{\frac{1}{2}} \left( s^{-1}I - \big(( 1+s)I-\Lambda\big)^{-1} \right)\, ds\\\\
=&\frac{1}{\pi} \int_{0}^{\infty} s^{\frac{1}{2}} \left( s^{-1}I - (1+s)^{-1}\big(I-(1+s)^{-1}\Lambda\big)^{-1} \right)\, ds.
\end{align*}
Thus,
\begin{equation}\label{eq:Dvare}
H( \epsilon)=\Gamma^{\frac{1}{2}} \sqrt{I-\Lambda} \Gamma^{\frac{1}{2}}=\frac{1}{\pi}\int_{0}^{\infty} s^{\frac{1}{2}} \left( s^{-1}\Gamma - (1+s)^{-1} \tilde{H}(\epsilon) \right)\, ds,
\end{equation} where $\tilde{H}(\epsilon)$ is defined as
\begin{equation}\label{def:tildeD}
\tilde{H}(\epsilon):=
\Gamma^{\frac{1}{2}}\big(I-(1+s)^{-1}\Lambda\big)^{-1}\Gamma^{\frac{1}{2}}.
\end{equation}
Observe that $\Gamma$ doesn't contain any term of order $O(\epsilon^2)$, hence is of no interest to us.

In what follows we focus on $\tilde{H}(\epsilon)$.

Since $0\leq (1+s)^{-1}\Lambda<1$ by \eqref{eq:sizeZ}, we can expand in $(1+s)^{-1}\Lambda$, to find
\begin{align}\label{eq:expZ}
    \tilde{H}(\epsilon)=\sum_{l=0}^{\infty} (1+s)^{-l} \Gamma^{\frac{1}{2}}\Lambda^l \Gamma^{\frac{1}{2}}=\sum_{l=0}^{\infty} (1+s)^{-l} (\Theta\Gamma^{-1})^{2l-1}\Theta.
\end{align} 

Transform the expression on the right hand side into a convenient form. Observe
\begin{align*}
\Gamma^{-1}  = \left( \frac{A + B + \epsilon X + \epsilon Z}{2} \right)^{-1}
= \left( \frac{A+B}{2} \right)^{-\frac{1}{2}} (I - \epsilon E_1)^{-1} \left( \frac{A+B}{2} \right)^{-\frac{1}{2}} \\
\end{align*}
where $E_1:=-\left( \frac{A+B}{2} \right)^{-\frac{1}{2}} \frac{ X+  Z}{2} \left( \frac{A+B}{2} \right)^{-\frac{1}{2}}$, as defined in \eqref{def:PsiE1E2}. And thus,
\begin{equation*}
\begin{aligned}
 (\Theta\Gamma^{-1})^{2l-1}\Theta = \left( \Theta \left( \frac{A+B}{2} \right)^{-\frac{1}{2}} (I - \epsilon E_{1})^{-1} \left( \frac{A+B}{2} \right)^{-\frac{1}{2}} \right)^{2l-1}\Theta.
\end{aligned}
\end{equation*}
We multiply to the left and right by $\left( \frac{A+B}{2} \right)^{-\frac{1}{2}}$ to find a convenient form
\begin{align}\label{def:UpsilonL}
\begin{split}
\Upsilon_l:=&\left( \frac{A+B}{2} \right)^{-\frac{1}{2}}  (\Theta\Gamma^{-1})^{2l-1}\Theta
\left( \frac{A+B}{2} \right)^{-\frac{1}{2}}\\
= &\left( (\Psi + \epsilon E_{2}) (I - \epsilon E_{1})^{-1} \right)^{2l-1} (\Psi + \epsilon E_{2})\\
=&\left( (\Psi + \epsilon E_{2}) \sum_{k=0}^{\infty} \epsilon^k E_1^{k} \right)^{2l-1} (\Psi + \epsilon E_{2}).
\end{split}
\end{align}
where $\Psi=\left( \frac{A+B}{2} \right)^{-\frac{1}{2}} \frac{A-B}{2} \left( \frac{A+B}{2} \right)^{-\frac{1}{2}}$, $E_{2}=\left( \frac{A+B}{2} \right)^{-\frac{1}{2}} \frac{ X-  Z}{2} \left( \frac{A+B}{2} \right)^{-\frac{1}{2}}$, as defined in \eqref{def:PsiE1E2}, is produced by $\left( \frac{A+B}{2} \right)^{-\frac{1}{2}} \Theta \left( \frac{A+B}{2} \right)^{-\frac{1}{2}}$
\begin{align*}
\Psi + \epsilon E_{2} =&\left( \frac{A+B}{2} \right)^{-\frac{1}{2}} \Theta \left( \frac{A+B}{2} \right)^{-\frac{1}{2}}\\
=& \left( \frac{A+B}{2} \right)^{-\frac{1}{2}} \left( \frac{A-B}{2} + \epsilon\frac{ X-  Z}{2} \right) \left( \frac{A+B}{2} \right)^{-\frac{1}{2}},
\end{align*} and in the last step we used that,
\begin{equation*}
    (I-\epsilon E_1)^{-1}=\sum_{k=0}^{\infty}\epsilon^k E_1^{k}.
\end{equation*}

$\Upsilon_0$ does not contain any $\epsilon^2-$term, and thus is of no interest to us. We only need to consider the cases $l\geq 1$. To illustrate, we compute $\Upsilon_1$ and $\Upsilon_2$ in detail, after that we develop a general theory.

For $\Upsilon_1$,
\begin{align*}
\Upsilon_1
  = 
    (\Psi + \epsilon E_2)\, (\sum_{k = 0}^{\infty}\epsilon^{k}E_1^k)\, (\Psi + \epsilon E_2).
\end{align*}
And thus, its second order term is $\epsilon^2 \bigl(E_2 + \Psi E_1)(E_2 + \Psi E_1)^* \geq 0.$ Here we used that, since $E_1,$ $E_2$ and $\Psi$ are Hermitian, 
\begin{equation}\label{eq:ThreeHerm}
E_2+E_1\Psi=(E_2+\Psi E_1)^*.
\end{equation}

For $\Upsilon_2,$
\begin{equation*}
\Upsilon_2 =   (\Psi + \epsilon E_2)
    \left( \sum_{l=0}^{\infty} \epsilon^l E_1^l \right)
    (\Psi + \epsilon E_2)
    \left( \sum_{k=0}^{\infty} \epsilon^k E_1^k \right)
    (\Psi + \epsilon E_2)
    \left( \sum_{j=0}^{\infty} \epsilon^j E_1^j \right)
    (\Psi + \epsilon E_2).
\end{equation*}
The coefficient of $\epsilon^2$ is a sum of the following ones: 
$$
 (E_2 + \Psi E_1)\, 
\Psi^2\, (E_2 + E_1 \Psi);\ 
\Psi (E_2 + \Psi E_1)\,
\Psi \, (E_2 + E_1 \Psi);
$$
$$
(E_2 + \Psi E_1)\, \Psi (E_2 + E_1 \Psi)\, \Psi;
\ 
\Psi^2 (E_2 + \Psi E_1)\, (E_2 + E_1 \Psi);
$$
$$
 (E_2 + \Psi E_1)\, (E_2 + E_1 \Psi)\Psi^2;\ 
\Psi (E_2 + \Psi E_1)\, (E_2 + E_1 \Psi)\Psi.
$$

We make preparation before stating the result for any $\Upsilon_l$, $l\in \mathbb{N}.$

Define, for any $l\in \mathbb{N}$,  
\begin{equation}\label{def:OmegaL}
\Xi_l :=  \sum_{j_1+j_2+j_3=2l-2} \Psi^{j_1} (E_2 + \Psi E_1) \Psi^{j_2} (E_2 + E_1 \Psi) \Psi^{j_3}.
\end{equation}
and
\begin{align}\label{def:Omegal12}
\begin{split}
\Xi_{l,1}:=&\sum_{ k_1 + k_2 = l-2} \Psi^{2 k_1 + 1} (E_2 + \Psi E_1) (E_2+E_1 \Psi ) \Psi^{2 k_2 + 1},\\\\
\Xi_{l,2}:= &\sum_{k_1 + k_2 + k_3 = l-1} \left[ \Psi^{2k_1} (E_2 + \Psi E_1) \right] \Psi^{2k_2} \left[ (E_2 + E_1 \Psi) \Psi^{2k_3} \right] \\
    &+ \sum_{k_1 + k_2 + k_3 = l-2} \left[ \Psi^{2k_1 + 1} (E_2 + \Psi E_1) \Psi \right] \Psi^{2k_2} \left[ (E_2 + E_1 \Psi) \Psi^{2k_3} \right] \\
    &+ \sum_{k_1 + k_2 + k_3 = l-2} \left[ \Psi^{2k_1 } (E_2 + \Psi E_1)  \right] \Psi^{2k_2} \left[\Psi (E_2 + E_1 \Psi) \Psi^{2k_3+1} \right] \\
    &+ \sum_{k_1 + k_2 + k_3 = l-3} \left[ \Psi^{2k_1 + 1} (E_2 + \Psi E_1) \Psi \right] \Psi^{2k_2} \left[ \Psi (E_2 + E_1 \Psi) \Psi^{2k_3 + 1} \right].
  \end{split}
\end{align}
Here, $k_1$, $k_2$, and $k_3$ are nonnegative integers.

The result for $\Upsilon_l$ is the following:
\begin{lemma}
The second order term in $\Upsilon_{l}$,
is $\epsilon^2\Xi_l$, with $\Xi_l$ defined above.

Because, in \eqref{def:OmegaL}, $j_1+j_2+j_3$ must be even, 
$\Xi_l$ can be decomposed into two parts:
\begin{equation}\label{eq:OmegaL}
\Xi_{l} = \Xi_{l,1}+\Xi_{l,2}.
\end{equation}
\end{lemma}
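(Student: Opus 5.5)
Write $\Upsilon_l=\big((\Psi+\epsilon E_2)(I-\epsilon E_1)^{-1}\big)^{2l-1}(\Psi+\epsilon E_2)$ as an ordered product of $2l$ factors $P:=\Psi+\epsilon E_2$ interlaced with $2l-1$ factors $R:=(I-\epsilon E_1)^{-1}=I+\epsilon E_1+\epsilon^2E_1^2+\cdots$. To collect the $\epsilon^2$ coefficient, we must pick exactly two first-order perturbations among the slots, or one second-order perturbation $E_1^2$ in a single $R$-slot. First, the $\epsilon^1$ coefficient is a sum over the single slot where the perturbation sits: in a $P$-slot the perturbation contributes $E_2$, in an $R$-slot it contributes $E_1$, and the remaining slots contribute $\Psi$ or $I$. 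The key bookkeeping device is to pair each $E_2$ in a $P$-slot with its neighboring $R$-slot. The cleanest way to do this is to regroup the product as
\[
\Upsilon_l=P\,(RP)^{2l-1}=(PR)^{2l-1}P .
\]
Note that $PR=(\Psi+\epsilon E_2)(I+\epsilon E_1+\epsilon^2E_1^2+\dots)$ has expansion
\[
\Psi+\epsilon(E_2+\Psi E_1)+\epsilon^2(E_2E_1+\Psi E_1^2)+O(\epsilon^3),
\]
and similarly $RP=\Psi+\epsilon(E_2+E_1\Psi)+\epsilon^2(E_1E_2+E_1^2\Psi)+O(\epsilon^3)$.

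The plan is therefore to write $\Upsilon_l=(PR)^{l-1}\,P\,(RP)^{l-1}\cdot(\text{correction})$. More precisely, I would use the symmetric splitting
\[
\Upsilon_l=(PR)^{a}\,P\,(RP)^{b},\qquad a+b=2l-1,
\]
and exploit the freedom in choosing $a$ to match the asymmetric form of $\Xi_l$, whose left factor is $E_2+\Psi E_1$ and right factor is $E_2+E_1\Psi$. The second-order coefficient then consists of three kinds of contributions:
\begin{enumerate}[(i)]
\item cross terms with one first-order factor from a $PR$ block and one from an $RP$ block (or from the middle $P$);
\item two first-order factors from the same side;
\item genuine second-order terms inside a single block.
\end{enumerate}
The claim to verify is that all of these telescope into exactly $\sum_{j_1+j_2+j_3=2l-2}\Psi^{j_1}(E_2+\Psi E_1)\Psi^{j_2}(E_2+E_1\Psi)\Psi^{j_3}$.

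I expect this verification to be the main obstacle. The check $l=1$ is instructive: the $\epsilon^2$ terms are $E_2E_1\Psi+\Psi E_1E_2+\Psi E_1^2\Psi+E_2E_1\ldots$, and these assemble precisely into $(E_2+\Psi E_1)(E_2+E_1\Psi)$ because the $E_1^2$ term from the single $R$ is exactly $\Psi E_1\cdot E_1\Psi$. In general, I would argue combinatorially. A second-order term is determined by an ordered pair of perturbed positions in the alternating word $PRPR\cdots RP$ (with the $E_1^2$ case counted as a coincident pair in an $R$-slot). Assign each perturbed position to a \emph{marker}: a perturbed $P$-slot or the $R$-slot immediately to its right, with $E_1$ preceded by the $\Psi$ on its left, forms the left marker $E_2+\Psi E_1$; a perturbed $P$-slot or the $R$-slot immediately to its left, with $E_1$ followed by the $\Psi$ on its right, forms the right marker $E_2+E_1\Psi$. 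Each $R$-slot sits between two $P$-slots, so a single $E_1$ can be absorbed either to the left as $\Psi E_1$ or to the right as $E_1\Psi$. The first perturbation in reading order is absorbed leftward and the second rightward, while $E_1^2$ splits as $(\Psi E_1)(E_1\Psi)$ using both neighbors. I would then show this gives a bijection between second-order terms and triples $(j_1,j_2,j_3)$ with $j_1+j_2+j_3=2l-2$. Here $j_i$ counts the unperturbed $\Psi$'s in the three gaps; the total number of $\Psi$'s is $2l$, minus the two consumed by the markers.

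Finally, the decomposition \eqref{eq:OmegaL} is pure reindexing by parity. Since $j_1+j_2+j_3=2l-2$ is even, either all three $j_i$ are even, or exactly two are odd. I would sort the terms as follows:
\begin{itemize}
\item If $j_2=0$ with $j_1,j_3$ both odd, this gives $\Xi_{l,1}$, with $j_1=2k_1+1$, $j_3=2k_2+1$ and $k_1+k_2=l-2$.
\item All remaining cases go into $\Xi_{l,2}$, split according to the parities of $(j_1,j_3)$: both even; $j_1$ odd and $j_3$ even, writing $\Psi^{j_1}=\Psi^{2k_1+1}$ and peeling one $\Psi$ from $j_2$ to sit after the left marker; the mirror case; and both odd with $j_2\ge 2$ even, peeling a $\Psi$ on each side.
\end{itemize}
Counting the indices gives the constraints $l-1$, $l-2$, $l-2$, $l-3$, matching \eqref{def:Omegal12}. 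One must check that the case with $j_1,j_3$ odd and $j_2=0$ is exactly the one left to $\Xi_{l,1}$, and that no overlap occurs.
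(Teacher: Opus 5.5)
Your proposal is correct. For the first claim it argues differently from the paper, which only asserts an omitted induction on $l$ from the hand-computed cases $l=1,2$.

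Such an induction would run through $\Upsilon_{l+1}=(PR)^2\Upsilon_l$. It would have to carry the $\epsilon^1$-coefficient of $\Upsilon_l$ in the hypothesis, because products of first-order terms of $(PR)^2$ and of $\Upsilon_l$ contribute at order $\epsilon^2$.

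Your marker assignment needs no lower-order bookkeeping:
\begin{enumerate}[(a)]
\item the earlier perturbation goes into the left marker $E_2+\Psi E_1$; an $E_1$ in the $R$-slot after $P_a$ takes the $\Psi$ of $P_a$;
\item the later perturbation goes into the right marker $E_2+E_1\Psi$; an $E_1$ in the $R$-slot before $P_m$ takes the $\Psi$ of $P_m$;
\item $E_1^2$ in a single $R$-slot is the $j_2=0$ piece $\Psi E_1\cdot E_1\Psi$.
\end{enumerate}

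When you write this out, state explicitly that the absorbing $P$-slots are automatically unperturbed, since every slot before the first and after the second perturbation is at zeroth order. Also state that they are distinct, with the left one strictly first. Then the map is a bijection between $\epsilon^2$-words and pairs consisting of a triple $(j_1,j_2,j_3)$ and a choice of summand in each marker. That is four words per triple, not a bijection with triples alone as you phrase it.

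As for the trade-off, the induction reuses the base cases. Your count is self-contained and shows exactly which word lands in which term.

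You can drop the $(PR)^aP(RP)^b$ splitting and the telescoping in (i)--(iii). For fixed $a$, two hits in $PR$-blocks give $(E_2+\Psi E_1)\Psi^k(E_2+\Psi E_1)$, which matches $\Xi_l$ only after re-expanding, so it adds nothing to the marker argument.

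Your parity sort for $\Xi_l=\Xi_{l,1}+\Xi_{l,2}$ is the paper's argument and is complete. In your $l=1$ check, the last listed term should be $E_2^2$.
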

\begin{proof}
    We prove the result by a standard induction on $l$, with the base cases being $l=1,2$ considered above. It is straightforward, and we choose to skip it.
\end{proof}

Consequently, sum up all the $\Xi_l$ and use $(E_2 + \Psi E_1)^*=E_2+E_1\Psi$ to find
\begin{equation}
    \sum_{l \geq 1} \Xi_{l} (1+s)^{-l}=\sum_{l \geq 1} \Xi_{l, 1} (1+s)^{-l}+\sum_{l \geq 1} \Xi_{l, 2} (1+s)^{-l}
\end{equation}
and the two terms on the right hand side take simple forms:
\begin{equation*}
\sum_{l \geq 1} \Xi_{l, 1} (1+s)^{-l} = (1+s)^{-2} \Psi \tilde{\Psi}(s) (E_2 + \Psi E_1) (E_2 + \Psi E_1)^* \tilde{\Psi}(s) \Psi,
\end{equation*}
and
\begin{equation*}
\sum_{l \geq 1} \Xi_{l, 2} (1+s)^{-l} = \tilde{\Psi}(s) \Omega(s) \tilde{\Psi}(s) \Omega^*(s) \tilde{\Psi}(s)
\end{equation*}
where $\tilde{\Psi}(s)$ and $\Omega(s)$ are defined in \eqref{def:PsiOmega}.

Return to \eqref{def:tildeD}, collect the results above to find the second-order term in $\tilde{H}(\epsilon):$
\begin{lemma}\label{LM:tildeD}
  For the terms of order $O(\epsilon^2)$ in $\left( \frac{A+B}{2} \right)^{-\frac{1}{2}}\tilde{H}(\epsilon)\left( \frac{A+B}{2} \right)^{-\frac{1}{2}}:$
  \begin{align}
  \begin{split}
  &\left( \frac{A+B}{2} \right)^{-\frac{1}{2}}\Big[\frac{1}{2}\frac{d^2}{d\epsilon^2}\tilde{H}(\epsilon)\big|_{\epsilon=0}\Big]\left( \frac{A+B}{2} \right)^{-\frac{1}{2}}\\
  =&(1+s)^{-2} \Psi \tilde{\Psi}(s) (E_2 + \Psi E_1) (E_2 + \Psi E_1)^* \tilde{\Psi}(s) \Psi+\tilde{\Psi}(s) \Omega(s) \tilde{\Psi}(s) \Omega^*(s) \tilde{\Psi}(s).
  \end{split}
  \end{align}
  
\end{lemma}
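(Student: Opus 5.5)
The plan is to read off the $\epsilon^2$-coefficient of $\left(\frac{A+B}{2}\right)^{-\frac12}\tilde H(\epsilon)\left(\frac{A+B}{2}\right)^{-\frac12}$ directly from the series \eqref{eq:expZ}. By \eqref{def:UpsilonL}, this conjugated matrix equals $\sum_{l\ge 0}(1+s)^{-l}\Upsilon_l$. The term $\Upsilon_0$ carries no $\epsilon^2$ part, and by the preceding lemma the $\epsilon^2$-coefficient of $\Upsilon_l$ is $\Xi_l=\Xi_{l,1}+\Xi_{l,2}$. Since $\frac12\frac{d^2}{d\epsilon^2}$ at $\epsilon=0$ extracts exactly the $\epsilon^2$-coefficient, the left side of the lemma equals $\sum_{l\ge1}(1+s)^{-l}\Xi_l$.

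The first task is to justify differentiating term by term. For fixed $s>0$ and $|\epsilon|$ small, \eqref{eq:shrink} gives $\|\Psi\|<1$, so $\|(\Psi+\epsilon E_2)(I-\epsilon E_1)^{-1}\|\le\theta<1$ uniformly in a complex disc around $\epsilon=0$. Then $\|\Upsilon_l\|\le C\theta^{2l-1}$, and the series is a uniformly convergent sum of analytic functions of $\epsilon$. Cauchy estimates therefore allow the $\epsilon^2$-coefficients to be summed termwise, and they also guarantee absolute convergence of $\sum_l(1+s)^{-l}\Xi_l$. (In fact $\|\Xi_l\|\lesssim l^2\|\Psi\|^{2l-2}$, which is summable even at $s=0$.)

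The second task, which I expect to be the main obstacle, is the resummation into closed form. I would handle the two pieces separately, using $\tilde\Psi(s)=\sum_k(1+s)^{-k}\Psi^{2k}$ and $E_2+E_1\Psi=(E_2+\Psi E_1)^*$ from \eqref{eq:ThreeHerm}.

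For $\Xi_{l,1}$, write $(1+s)^{-l}=(1+s)^{-2}(1+s)^{-k_1}(1+s)^{-k_2}$ with $k_1+k_2=l-2$. The double sum over $k_1,k_2$ then factorizes into a Cauchy product:
$$\sum_{l\ge1}(1+s)^{-l}\Xi_{l,1}=(1+s)^{-2}\Psi\tilde\Psi(E_2+\Psi E_1)(E_2+\Psi E_1)^*\tilde\Psi\Psi.$$

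For $\Xi_{l,2}$, the four sums correspond to choosing, in each of the two outer slots, either the factor $(E_2+\Psi E_1)$ or $\Psi(E_2+\Psi E_1)\Psi$. The latter choice costs one extra power of $(1+s)^{-1}$, since it absorbs one $\Psi^2$. Distributing $(1+s)^{-l}$ as $(1+s)^{-1/2}\cdot(1+s)^{-k_1-k_2-k_3}\cdot(1+s)^{-1/2}$, with an extra $(1+s)^{-1}$ attached to each ``$\Psi\cdot\Psi$'' choice, makes each outer slot sum to exactly $\Omega(s)$ or $\Omega^*(s)$ as defined in \eqref{def:PsiOmega}. The three free indices give three copies of $\tilde\Psi$, so the total is $\tilde\Psi\Omega\tilde\Psi\Omega^*\tilde\Psi$.

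The bookkeeping to check carefully is that the index ranges $l-1$, $l-2$, $l-2$, $l-3$ match these exponent shifts. One must also confirm that $\Omega$ involves $E_2+E_1\Psi$ while its adjoint involves $E_2+\Psi E_1$, and that the order of left and right factors agrees with \eqref{def:Omegal12}. If there is a transposition mismatch, I would resolve it by taking adjoints, since the whole expression is Hermitian. Adding the two closed forms yields the lemma.
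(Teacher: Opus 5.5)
Your route is the paper's: write the conjugated $\tilde H(\epsilon)$ as $\sum_{l\ge0}(1+s)^{-l}\Upsilon_l$, extract the $\epsilon^2$-coefficients $\Xi_l=\Xi_{l,1}+\Xi_{l,2}$, and resum by Cauchy products. Your justification of termwise extraction is fine, and so is your resummation of $\Xi_{l,1}$.

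The step that fails is the identification of the $\Xi_{l,2}$ sum. In every term of \eqref{def:Omegal12}, the left outer block is built from $E_2+\Psi E_1$ and the right one from $E_2+E_1\Psi$. Your own exponent bookkeeping therefore makes the left slot resum to $(1+s)^{-\frac12}(E_2+\Psi E_1)+(1+s)^{-\frac32}\Psi(E_2+\Psi E_1)\Psi=\Omega^*(s)$, and the right slot to $\Omega(s)$, with $\Omega$ as in \eqref{def:PsiOmega}. The result is $\tilde\Psi\Omega^*\tilde\Psi\Omega\tilde\Psi$, not $\tilde\Psi\Omega\tilde\Psi\Omega^*\tilde\Psi$.

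Your fallback of taking adjoints cannot repair this. Each of the two products is self-adjoint: the adjoint sends $\Omega$ in the fourth slot to $\Omega^*$ in the second. So taking adjoints maps each product to itself, never to the other. Moreover, $\Omega$ is not Hermitian unless $\Psi E_1=E_1\Psi$, so the order genuinely matters.

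The two expressions really differ, so the identity as printed is false. The $\Xi_{l,1}$ part is $O((1+s)^{-2})$. Hence multiplying both sides by $1+s$ and letting $s\to\infty$ gives $\Xi_1=(E_2+\Psi E_1)(E_2+E_1\Psi)$ on the left, but $(E_2+E_1\Psi)(E_2+\Psi E_1)$ on the right.

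For a concrete case, take $A=I+\Psi$ and $B=I-\Psi$ with $\Psi=\mathrm{diag}(\frac12,0)$. Take $X=Z=-\begin{pmatrix}0&1\\1&0\end{pmatrix}$, so that $\frac{A+B}{2}=I$, $E_2=0$ and $E_1=\begin{pmatrix}0&1\\1&0\end{pmatrix}$. The two limits are then $\Psi E_1^2\Psi=\mathrm{diag}(\frac14,0)$ and $E_1\Psi^2E_1=\mathrm{diag}(0,\frac14)$.

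The paper's own resummation makes the same slip; it describes $\Omega$ as Hermitian in \eqref{def:PsiOmega}, which would make the order irrelevant. What your argument actually proves is the statement with $\tilde\Psi\Omega\tilde\Psi\Omega^*\tilde\Psi$ replaced by $\tilde\Psi\Omega^*\tilde\Psi\Omega\tilde\Psi$. Equivalently, it proves the stated formula with $\Omega$ built from $E_2+\Psi E_1$ instead. That corrected right-hand side is still manifestly positive semi-definite, so the sign conclusion the lemma is meant to feed into survives.
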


\subsection{Proof of Theorem \ref{THM:secondOrder}}
\begin{proof}
Since $\tilde{H}(\epsilon)$ is the only part containing non-trivial $\epsilon^2$-terms in $H(\epsilon)$ of \eqref{eq:Dvare}, we prove the desired \eqref{eq:Hvare} by the definition of $\tilde{H}(\epsilon)$ and Lemma \ref{LM:tildeD}.

What is left is to prove \eqref{eq:iff}. 

If $E_2+\Psi E_1=0$, then from the definition $\text{Cross}_{A,B}(X,Z)=0.$ 

If $\text{Cross}_{A,B}(X,Z)=0,$ then both integrands in its definition must be zero because they are positive semi-definite for any $s$. For the first one, when $s>0$ is large, it is of the form $s^{-\frac{1}{2}}(1+s)^{-2} (E_2+\Psi E_1) (E_2+\Psi E_1)^{*}+O(s^{-3})$. Hence if $\text{Cross}_{A,B}(X,Z)=0$ then we must have $E_2+\Psi E_1=0.$
\end{proof}


\section{Lieb's concavity theorem}\label{sec:convexity}
We are interested in the following function
\begin{equation}\label{def:ft}
   h(t) := \text{Tr} \big(K^* (tA_1 + (1-t)A_2)^q K (t\overline{B_1} + (1-t)\overline{B_2})^r\big)
\end{equation} where $0 \leq t \leq 1$, $q,r\in \mathbb{R}$ and \begin{equation}\label{eq:qrPair}
q\geq 0;\ \ r\geq 0;\ \text{and}\ p:=q+r\leq 1,
\end{equation}
$A_1$ and $A_2$ are $N\times N$ positive definite matrices, and $B_1$ and $B_2$ are $M\times M$ positive definite matrices, $\overline{B}$ is defined as the complex conjugate of $B,$ $K$ is a $N\times M$ matrix.

In \cite{LIEB1973267}, Lieb proved that $h$ is concave, see also \cite{Epstein1973, carlen2010trace}. 

Here, we are interested in finding an explicit form for $h''$, see Theorem \ref{MainTheorem:concavity} below.

The following well-known result links concavity/convexity of any function $g$ to its second-order derivative $g''$. The identity \eqref{eq:concavity} will be used often.
\begin{lemma}\label{LM:concavity}
Suppose that $g:[0,1]\rightarrow \mathbb{R}$ is smooth function. For $t\in [0,1]$, 
\begin{equation}\label{eq:concavity}
g(t) =(1-t)g(0) + tg(1) -t \int_{0}^{1} \int_{t\lambda}^{\lambda} g''(\sigma) \, d\sigma \, d\lambda.
\end{equation}

It is concave (convex) if and only if $g'' \leq 0$ ($\geq 0$), respectively.
\end{lemma}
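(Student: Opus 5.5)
The identity \eqref{eq:concavity} is a Taylor formula with integral remainder, so I will verify it directly by computing the double integral on the right-hand side. For fixed $\lambda\in[0,1]$, the fundamental theorem of calculus gives $\int_{t\lambda}^{\lambda} g''(\sigma)\,d\sigma = g'(\lambda)-g'(t\lambda)$. Integrating in $\lambda$ over $[0,1]$ then gives
\begin{align*}
\int_0^1 \big(g'(\lambda)-g'(t\lambda)\big)\,d\lambda = g(1)-g(0) - \frac{1}{t}\big(g(t)-g(0)\big)
\end{align*}
for $t\in(0,1]$, using the substitution $\mu=t\lambda$ in the second term. Multiplying by $-t$ and adding $(1-t)g(0)+tg(1)$, the terms telescope to $g(t)$. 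At $t=0$ both sides equal $g(0)$ trivially. This proves \eqref{eq:concavity}.

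For the characterization, first take $g''\le 0$. For $0\le t\le 1$ and $0\le\lambda\le1$ we have $t\lambda\le\lambda$, so the inner integral is $\le 0$. Hence the double integral term is $\ge 0$, and \eqref{eq:concavity} gives $g(t)\ge (1-t)g(0)+tg(1)$. This is concavity along the endpoints $0,1$.

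To get full concavity, i.e. $g((1-\theta)a+\theta b)\ge(1-\theta)g(a)+\theta g(b)$ for arbitrary $a<b$ in $[0,1]$, I will apply the same identity to the rescaled function $\tilde g(\theta):=g(a+\theta(b-a))$. It satisfies $\tilde g''=(b-a)^2 g''\le0$, so the endpoint case just proved applies to $\tilde g$. The case $g''\ge 0$ (convexity) follows identically with the inequalities reversed, or by applying the concave case to $-g$.

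For the converse, suppose $g$ is concave. I will use the standard argument that the second difference quotient satisfies
\begin{align*}
\frac{g(t+h)+g(t-h)-2g(t)}{h^2}\le 0,
\end{align*}
and that this quotient converges to $g''(t)$ as $h\to0$ by smoothness. This gives $g''(t)\le0$ at interior points, and at the endpoints $g''\le 0$ follows by continuity of $g''$.

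The only mildly delicate points are the $t=0$ case of the substitution and making sure the rescaling argument upgrades endpoint concavity to genuine concavity; neither is a real obstacle.
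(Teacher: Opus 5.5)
Your proof is correct. For the identity, it is the same two applications of the fundamental theorem of calculus that the paper uses, packaged differently. The paper introduces $\phi(b)=g(tb)-(1-t)g(0)-tg(b)$, notes $\phi(0)=0$ and $\phi'(b)=-t\int_{tb}^{b}g''(\sigma)\,d\sigma$, and integrates $\phi'$ over $[0,1]$. This never divides by $t$, so it needs no separate $t=0$ case; you instead evaluate the double integral directly with the substitution $\mu=t\lambda$.

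You go beyond the paper on the equivalence. The paper only asserts that \eqref{eq:concavity} implies $g$ is concave if and only if $g''\le 0$. But on $[0,1]$ the identity alone gives only the chord inequality $g(t)\ge(1-t)g(0)+tg(1)$ with respect to the endpoints $0$ and $1$. Your rescaling $\tilde g(\theta)=g(a+\theta(b-a))$, with $\tilde g''=(b-a)^2g''(a+\theta(b-a))\le 0$, is exactly what upgrades this to concavity on every subinterval. Your second-difference-quotient argument then supplies the converse, which the paper leaves unjustified. The converse could also be extracted from the identity by applying it to $\tilde g$ on shrinking intervals $[t_0-h,t_0+h]$ and letting $h\to 0$, but your route is more direct. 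So your write-up is a complete version of what the paper only sketches.
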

\begin{proof}
Suppose that \eqref{eq:concavity} holds. It implies that $g$ is concave (convex) if and only if $g''(t) \leq 0$ ($\geq 0$), respectively.

What is left is to prove \eqref{eq:concavity}. For a fixed $t\in [0,1]$, define a function $\phi:[0,1]\rightarrow \mathbb{R}$ by
\[
\phi(b) := g(tb) - (1-t)g(0) - tg(b).
\]
$\phi$ enjoys the following properties:
\begin{equation}\label{eq:initial}
\phi(0) = \phi'(0)=0.
\end{equation}

Its first order derivative takes the form.
    \[
    \phi'(b) = t[g'(tb) - g'(b)] = -t \int_{tb}^{b} g''(\sigma) \, d\sigma.
    \]
Hence, by \eqref{eq:initial},
\[
\phi(b) = \int_{0}^{b} \phi'(\lambda) \, d\lambda = -t \int_{0}^{b} \int_{t\lambda}^{\lambda} g''(\sigma) \, d\sigma \, d\lambda.
\] 

By setting $b=1$ and using the definition of $\phi$, we obtain the desired \eqref{eq:concavity}.

\end{proof}

We continue to derive an exact expression for $h^{''}$.

We use the ideas of Ando in \cite{ANDO1979203} to rewrite the expression: For any $N\times M$-matrix $K$, there exists a $NM-$dimensional vector $V_{K}$ such that,
\begin{equation}\label{eqn:represen}
    \text{Tr}(K^* A K \overline{B}) = \langle V_K, \big(A \otimes B\big)  V_K \rangle.
\end{equation}
Thus, $h$ in \eqref{def:ft} takes a new form
\begin{equation}\label{eq:htNew}
    h(t) =\langle V_K, H_{q,r}(t) V_K \rangle
\end{equation} 
where $H_{q,r}$ is a matrix-valued function defined as
\begin{equation*}
    H_{q,r}(t) := \big(tA_1 + (1-t)A_2\big)^q \otimes \big(tB_1 + (1-t)B_2\big)^r.
\end{equation*}
Ando proved in \cite{ANDO1979203} that the function $H_{q,r}$ is concave under the condition \eqref{eq:qrPair}.

To simplify notations and to make Theorem \ref{THM:secondOrder} applicable we fix a $t_0\in [0,1]$ and define
\begin{equation}\label{eq:ReforConcave}
G_{q,r}(\epsilon):=  (\Psi + \epsilon V)^q \otimes (\Phi + \epsilon W)^r=H_{q, r}(t_0 + \epsilon)
\end{equation}
where $\epsilon\in \mathbb{R}$ and $|\epsilon| \ll 1$, and $\Psi,$ $\Phi$, $V$ and $W$ are Hermitian matrices defined as
\begin{align*}
    \Psi &:= t_0 A_1 + (1 - t_0) A_2>0, \\
    \Phi &:= t_0 B_1 + (1 - t_0) B_2>0 ,\\
    V &:= A_1 - A_2 ,\\
    W &:= B_1 - B_2.
\end{align*}

Now, we want to find $G_{q, r}''(0)$, which implies the desired $H^{''}_{q,r}(t_0)$ through \eqref{eq:ReforConcave}, and furthermore, it implies the desired $h^{''}$ through \eqref{eq:htNew}.

We begin by examining the cases $(q, r) = (p, 0)$ and $(0, p)$, as the methods for analyzing them differ. Here
\begin{align}
\begin{split}\label{eqn:base}
G_{p, 0} =& (\Psi + \epsilon V)^p \otimes I,\\
G_{0, p} =& I \otimes (\Phi + \epsilon W)^p.
\end{split}
\end{align}

To understand them, we study the general case $(L+\epsilon F)^{q}$, where $q\in [0,1]$, $L$ is positive definite, $F$ is Hermitian.

The result is the following:
\begin{lemma}\label{LM:base}
Suppose that $L$ is positive definite, $F$ is Hermitian, $q\in [0,1]$. 
And suppose that $\epsilon$ is a real scalar satisfying $|\epsilon|\ll 1$.

The following results hold:
  \begin{equation}\label{eqn:Vp1}
    (L + \epsilon F)^q = L^q + \epsilon F_{q,L} -\epsilon^2 K_{q,L,F}+O(\epsilon^3)
\end{equation}
where $K_{q,L,F}$ is a positive semi-definite matrix defined as
\begin{equation*}
    K_{q,L,F} = \frac{\sin(\pi q)}{\pi} \int_{0}^{+\infty} t^{q} (t + L)^{-1} F\ (t + L)^{-1} F \ (t + L)^{-1} \, dt.
\end{equation*}

For $F_{q,L}$, there exists a unique Hermitian matrix $\widetilde{F}_{L}$ such that
\begin{equation}
    \lim_{n \to \infty} 2^n F_{1/2^n,L} = \widetilde{F}_{L}.
\end{equation}
and for any $q\in [0,1]$, $F_{q,L}$ takes the form
\begin{equation}\label{def:F1q}
    F_{q,L} = \int_0^q L^s \widetilde{F}_{L} L^{q-s}\ ds.
\end{equation}

When $q=0$ or 1,
\begin{align}\label{eq:q01}
K_{q,L,F}=  F_{0,L}=0;\ \text{and}\ 
F_{1,L}=F.
\end{align}
\end{lemma}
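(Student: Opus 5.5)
We plan to use the standard integral representation of $x^q$ for $q\in(0,1)$,
\begin{equation*}
L^q=\frac{\sin(\pi q)}{\pi}\int_0^{\infty} t^{q}\Big(t^{-1}I-(t+L)^{-1}\Big)\,dt ,
\end{equation*}
which holds for positive definite $L$ because it holds for each positive eigenvalue, by the scalar beta-integral identity. Replacing $L$ by $L+\epsilon F$ (still positive definite for $|\epsilon|\ll 1$), we expand the resolvent by the Neumann series
\begin{equation*}
(t+L+\epsilon F)^{-1}=(t+L)^{-1}-\epsilon (t+L)^{-1}F(t+L)^{-1}+\epsilon^2 (t+L)^{-1}F(t+L)^{-1}F(t+L)^{-1}+O(\epsilon^3).
\end{equation*}
This series converges uniformly in $t\geq 0$ once $|\epsilon|\,\|F\|\,\|L^{-1}\|<1$, because $\|(t+L)^{-1}\|\leq \|L^{-1}\|$. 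Substituting it into the integral gives the linear term $F_{q,L}=\frac{\sin \pi q}{\pi}\int_0^\infty t^q (t+L)^{-1}F(t+L)^{-1}dt$ and the quadratic term $-\epsilon^2K_{q,L,F}$ with $K_{q,L,F}$ exactly as stated. For the cubic remainder we note that its integrand is bounded by $C t^q\min(1,t^{-4})\epsilon^3$, which is integrable since $q<1$.

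Positivity of $K_{q,L,F}$ follows from the factorisation
\begin{equation*}
(t+L)^{-1}F(t+L)^{-1}F(t+L)^{-1}=Y^*(t+L)^{-1}Y,\qquad Y=F(t+L)^{-1},
\end{equation*}
in which the middle factor is positive and $F$ is Hermitian. The endpoint claims \eqref{eq:q01} follow directly. For $q=0$ and $q=1$ the function $(L+\epsilon F)^q$ equals $I$ or $L+\epsilon F$, so $K=0$, $F_{0,L}=0$ and $F_{1,L}=F$. Equivalently, the prefactor $\sin(\pi q)$ vanishes at these endpoints, although the integral itself diverges there and must be handled by the direct computation.

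For the structure of $F_{q,L}$, we would work in an eigenbasis $L=\sum_i \lambda_i P_i$. Computing entrywise with the scalar integral, the divided-difference formula gives $P_iF_{q,L}P_j=\frac{\lambda_i^q-\lambda_j^q}{\lambda_i-\lambda_j}P_iFP_j$, interpreted as $q\lambda_i^{q-1}P_iFP_i$ when $\lambda_i=\lambda_j$. We then define $\widetilde F_L$ by $P_i\widetilde F_LP_j=\frac{\log\lambda_i-\log\lambda_j}{\lambda_i-\lambda_j}P_iFP_j$, which is the Fréchet derivative of $\log$ at $L$ in direction $F$, and is Hermitian. Since $(x^{q}-y^{q})/q\to \log x-\log y$, we get $2^nF_{2^{-n},L}\to\widetilde F_L$. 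Uniqueness of this $\widetilde F_L$ is simply uniqueness of the limit.

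Formula \eqref{def:F1q} is then the entrywise identity
\begin{equation*}
\int_0^q\lambda_i^s\lambda_j^{q-s}ds\cdot\frac{\log\lambda_i-\log\lambda_j}{\lambda_i-\lambda_j}=\frac{\lambda_i^q-\lambda_j^q}{\lambda_i-\lambda_j},
\end{equation*}
which is elementary, with the diagonal case handled by continuity. The main obstacle we expect is making the error term rigorous uniformly in the integral near $t=0$ and $t=\infty$. We handle it as above: the resolvent bound $\|(t+L)^{-1}\|\le\min(\|L^{-1}\|,t^{-1})$ controls both ends at once.
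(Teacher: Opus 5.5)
Your proof is correct. For the second-order term it follows the paper: the same L\"owner--Heinz resolvent representation. You additionally supply two things the paper only asserts: a uniform remainder bound, and the factorization $(t+L)^{-1}F(t+L)^{-1}F(t+L)^{-1}=Y^*(t+L)^{-1}Y$, which justifies $K_{q,L,F}\geq 0$.

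For $F_{q,L}$ your route is genuinely different. The paper deliberately avoids the integral representation here and works dyadically:
\begin{itemize}
\item it defines $F_{2^{-m},L}$ recursively through the Sylvester equations $L^{2^{-m}}F_{2^{-m},L}+F_{2^{-m},L}L^{2^{-m}}=F_{2^{-m+1},L}$;
\item it solves these in an eigenbasis, getting entries $f_{kl}/\prod_{j=1}^m(d_k^{2^{-j}}+d_l^{2^{-j}})$, and lets $m\to\infty$;
\item it uses $(L+\epsilon F)^{k/2^m}=[(L+\epsilon F)^{1/2^m}]^k$ to write $F_{k/2^m,L}$ as a Riemann sum converging to $\int_0^{k/2^m}L^s\widetilde F_L L^{k/2^m-s}\,ds$;
\item it reaches non-dyadic $q$ by approximation.
\end{itemize}
You instead obtain the Daleckii--Krein divided-difference formula for every $q\in(0,1)$ at once from the resolvent integral, and you verify the integral formula for $F_{q,L}$ by an elementary scalar computation.

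The two routes agree. By telescoping, $\prod_{j=1}^m(x^{2^{-j}}+y^{2^{-j}})=(x-y)/(x^{2^{-m}}-y^{2^{-m}})$ for $x\neq y$. So the paper's dyadic entries are exactly your divided differences, and its unnamed limit constants $C_{k,l}$ are $(\log d_k-\log d_l)/(d_k-d_l)$.

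Your version identifies $\widetilde F_L$ explicitly as the Fr\'echet derivative of $\log$ at $L$ in direction $F$. It also sidesteps the passage from dyadic to general $q$, which in the paper tacitly requires continuity of $q\mapsto F_{q,L}$. The paper's version is built around the dyadic Sylvester structure that it reuses in the later Ando-type iteration.

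One small correction: the integral defining $K_{q,L,F}$ does converge at $q=0$ and $q=1$, since its integrand is bounded near $t=0$ and is $O(t^{q-3})$ at infinity. So the stated formula literally gives $K_{q,L,F}=0$ there, because $\sin(\pi q)=0$. What diverges at the endpoints is the integral representation of $L^q$ itself, and at $q=1$ also the linear-term integral. Your direct treatment of the endpoint cases remains valid and consistent with this.
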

This lemma will be proved in Section \ref{sec:Base}.

Returning to \eqref{eqn:base}, we apply the results above to find
\begin{align}\label{eq:baseEqui}
\begin{split}
G_{p,0} =& \Psi^p \otimes I + \epsilon V_{p,\Psi} \otimes I - \epsilon^2 \text{Source}_{p,0} + O(\epsilon^3),\\
G_{0,p} =& I \otimes \Phi^p + \epsilon I \otimes W_{p,\Phi} - \epsilon^2 \text{Source}_{0,p} + O(\epsilon^3)
\end{split}  
\end{align}
where  $V_{p,\Psi}$ and $W_{p,\Phi}$ are defined in the sam fashion as \eqref{def:F1q}, and
\begin{align*}
\text{Source}_{p,0} :=& K_{p,\Psi,V} \otimes I,\\
\text{Source}_{0,p} :=&  I \otimes K_{p,\Phi,W}.
\end{align*}

The following result will significantly simplify our consideration for joint convexity in Section \ref{sec:relative}:
by \eqref{eq:q01}
\begin{align}\label{eq:p1}
\begin{split}
\text{Source}_{1,0} =& \text{Source}_{0,1}=0, \\ 
V_{0,\Psi} =& W_{0,\Phi} = 0,\\
V_{1,\Psi} =& V\ \text{and}\ W_{1,\Phi} = W.
\end{split}
\end{align}

We continue to prepare for stating the main results. 

For any pair
\begin{equation}\label{eq:pairQR}
(q, r) = p\ ( \frac{l}{2^{k}}, 1 - \frac{l}{2^{k}} )
\end{equation}
with $1 \le l < 2^{k}$ being an odd integer, and $k\geq 1$ being an integer, we define a positive semi-definite matrix
$\text{Source}_{q,r}$ through $\text{Cross}_{A,B}(X,Z)$ of \eqref{def:Cross}. Here we set
\begin{align}\label{eq:ABXZ}
\begin{split}
X &= V_{q_- ,\Psi} \otimes \Phi^{r_-} + \Psi^{q_-} \otimes W_{r_-, \Phi} ,\\
Z &= V_{q_+, \Psi} \otimes \Phi^{r_+} + \Psi^{q_+} \otimes W_{r_+, \Phi},\\
A &= \Psi^{q_-} \otimes \Phi^{r_-} ,\\
B &= \Psi^{q_+} \otimes \Phi^{r_+},
\end{split}
\end{align}
where, $V_{q_\pm ,\Psi}$ and $W_{r_\pm, \Phi}$ are defined in the same way as $F_{q,L}$ in \eqref{def:F1q}, and the pairs $(q_{\pm},r_{\pm})$ are defined as
\begin{align*}
(q_+, r_+) :=& p \ \big( \frac{l+1}{2^{k}},\  1 - \frac{l+1}{2^{k}}  \big),\\ 
(q_{-}, r_{-}) :=& p\ \big( \frac{l-1}{2^{k}},\   1 - \frac{l-1}{2^{k}} \big).
\end{align*}

We define, for the matrices $A$, $B$, $X$ and $Z$ above,
\begin{align}\label{def:sourceQR}
\text{Source}_{q,r}
    :=\text{Cross}_{A,B}(X,Z).
\end{align}

When $(q,r)=(p,0)$ or $(0,p)$, $\text{Source}_{q,r}$ are defined in \eqref{eq:baseEqui}.

To estimate $\text{Source}_{q,r}$ we need the following result: recall the norm $\|\cdot\|$ defined in \eqref{def:Norm},
\begin{lemma}
 For the pair $(q,r) $ in \eqref{eq:pairQR}, as $k$ increases, $\text{Source}_{q,r}$ vanishes rapidly: For some constant $C=C(\Psi,\Phi,V,W)$,
  \begin{align}\label{eq:estOmegaQR}
    \|\text{Source}_{q,r}\|\leq & C 2^{-2k},
  \end{align}
\end{lemma}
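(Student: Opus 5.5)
Since $\text{Source}_{q,r}=\text{Cross}_{A,B}(X,Z)$ with $A,B,X,Z$ as in \eqref{eq:ABXZ}, the plan is to show that the quantity $E_2+E_1\Psi_{AB}$ entering \eqref{def:Cross} (here $\Psi_{AB}$ denotes the matrix $\Psi$ of \eqref{def:PsiE1E2} built from $A,B$, to avoid a clash with the present $\Psi$) has norm $O(2^{-k})$, while every other factor in \eqref{def:Cross} stays bounded uniformly in $k$ and $l$. Since $\text{Cross}$ is quadratic in $E_2+E_1\Psi_{AB}$ (both integrands contain it twice, through $\Omega(s)$ and $\Omega^*(s)$ or directly), this gives the bound $C2^{-2k}$.

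\textbf{Step 1 (uniform bounds).} The pairs $(q_\pm,r_\pm)$ lie in the compact set $\{q+r=p,\ q,r\ge0\}$. By continuity of $(q,r)\mapsto \Psi^{q}\otimes\Phi^{r}$, the matrices $A$, $B$ and $(\frac{A+B}{2})^{\pm\frac12}$ are bounded uniformly. Moreover $A,B\ge c>0$ uniformly, since $\Psi,\Phi$ are positive definite. Hence $\Psi_{AB}$ satisfies $\|\Psi_{AB}\|\le 1-\delta$ for some $\delta=\delta(\Psi,\Phi)>0$. Indeed, $-I<\Psi_{AB}<I$ is equivalent to $A,B>0$, and the uniform lower bound gives a uniform margin. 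Consequently $\widetilde{\Psi}(s)\le \delta^{-1}$ for all $s\ge0$. The $s$-integrals $\int_0^\infty s^{-1/2}(1+s)^{-1}ds$ and $\int_0^\infty s^{-1/2}(1+s)^{-3}ds$ converge, and from \eqref{def:PsiOmega} we get $\|\Omega(s)\|\le 2\|E_2+E_1\Psi_{AB}\|$. Therefore
\[
\|\text{Source}_{q,r}\|\le C\,\|E_2+E_1\Psi_{AB}\|^2 .
\]

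\textbf{Step 2 (smallness of $E_2+E_1\Psi_{AB}$).} Up to conjugation by the bounded matrix $(\frac{A+B}{2})^{-\frac12}$, we have
\[
2(E_2+E_1\Psi_{AB})=(X-Z)-(X+Z)\Big(\frac{A+B}{2}\Big)^{-1}\frac{A-B}{2}.
\]
Both $\frac{A-B}{2}$ and $\frac{X-Z}{2}$ are differences of the smooth functions
\[
f(\tau):=\Psi^{p\tau}\otimes\Phi^{p(1-\tau)},\qquad g(\tau):=V_{p\tau,\Psi}\otimes\Phi^{p(1-\tau)}+\Psi^{p\tau}\otimes W_{p(1-\tau),\Phi},
\]
evaluated at $\tau_\pm=(l\pm1)/2^k$. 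Lemma \ref{LM:base}, via \eqref{def:F1q}, shows that $g(\tau)=\frac{d}{d\epsilon}\big[(\Psi+\epsilon V)^{p\tau}\otimes(\Phi+\epsilon W)^{p(1-\tau)}\big]_{\epsilon=0}$, which is smooth (indeed analytic) in $\tau$. Hence $\|A-B\|,\|X-Z\|\le C2^{-k}$. With $X+Z$ bounded, this gives $\|E_2+E_1\Psi_{AB}\|\le C2^{-k}$, and Step 1 yields \eqref{eq:estOmegaQR}.

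\textbf{Main obstacle.} The delicate point is verifying that $\tau\mapsto V_{p\tau,\Psi}$ is Lipschitz with a constant independent of $k$. This requires \eqref{def:F1q}: the integrand $\Psi^s\widetilde{V}_\Psi\Psi^{q-s}$ is bounded for $s,q\in[0,1]$, so $\|V_{q,\Psi}-V_{q',\Psi}\|\le C|q-q'|$. I would first try this direct estimate from the integral formula, using that $\|\Psi^{s}\|$ and $\|\Phi^s\|$ are uniformly bounded on $[0,1]$.
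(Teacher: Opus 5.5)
Your proposal is correct and follows the paper's argument. The paper likewise uses that $\text{Cross}_{A,B}(X,Z)$ is quadratic in $E_2+E_1\Psi$, bounds $\|\Psi^{q_+}-\Psi^{q_-}\|$, $\|\Phi^{r_+}-\Phi^{r_-}\|$, $\|V_{q_+,\Psi}-V_{q_-,\Psi}\|$ and $\|W_{r_+,\Phi}-W_{r_-,\Phi}\|$ by $C2^{-k}$ to get $\|E_2+E_1\Psi\|\le C2^{-k}$, and leaves your Step 1 uniform bounds (the margin $\|\Psi_{AB}\|\le 1-\delta$ giving $\widetilde{\Psi}(s)\le\delta^{-1}$) as a ``straightforward computation''; one small correction is that the Lipschitz bound for $q\mapsto V_{q,\Psi}$ from \eqref{def:F1q} needs not only boundedness of the integrand but also Lipschitz continuity of $t\mapsto\Psi^{t}$ to control $\Psi^{q-s}-\Psi^{q'-s}$, which is immediate and is also covered by your analyticity remark.
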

\begin{proof}
Here we need to study the definition of $\text{Cross}_{A,B}(X,Z)$ in \eqref{def:Cross}, with $A$, $B$, $X$ and $Z$ defined in \eqref{eq:ABXZ}. 
$\text{Cross}_{A,B}(X,Z)$, especially the part defined in \eqref{def:PsiE1E2}, depends quadratically on
\[
E_2 + E_1 \Psi = \left( \frac{A+B}{2} \right)^{-\frac{1}{2}} \left[ \frac{X-Z}{2} - \frac{X+Z}{2} \left( \frac{A+B}{2} \right)^{-1} \frac{A-B}{2} \right].
\]

Compute directly to find, for some positive constant $C=C(\Psi,\Phi,V,W)$
\begin{align*}
\| V_{q_+ ,\Psi} - V_{q_- ,\Psi} \| +
\| W_{r_+ ,\Phi} - W_{r_- ,\Phi} \| +
\| \Psi^{q_+} - \Psi^{q_-} \| + 
\| \Phi^{r_+} - \Phi^{r_-} \| \leq C \, 2^{-k}
\end{align*} and hence
\[
\|E_2 + E_1 \Psi\|\leq C \, 2^{-k}.
\]

Since the dependence on $E_2 + E_1 \Psi$ is quadratic, we obtain the desired result \eqref{eq:estOmegaQR} after some straightforward computation.

\end{proof}

We continue to prepare for stating our results.

For any $\delta \in \mathbb{R}$, we define a positive definite matrix 
\begin{equation}\label{def:Hdelta}
  K_{\delta} := \Psi^{\delta} \otimes \Phi^{-\delta}.
\end{equation}
Obviously, as $\delta \to 0$, $K_{\delta}\rightarrow I\otimes I$. To measure the convergence in the norm defined in \eqref{def:Norm}, we have, for some $C=C(\Psi,\Phi)$,
\begin{equation}\label{eq:HDelta}
\| K_{\delta} - I \otimes I \| \le C |\delta|.
\end{equation}

We define a linear operator $D_{\delta}$ for any $\delta\in \mathbb{R}$, such that, for any matrix $F$,
\begin{align}\label{def:Q+-}
  D_{\delta}(F):=\int_{0}^{\infty}e^{-s K_{\delta }}  F e^{-s K_{\delta }} \ ds.
\end{align}
To facilitate later discussions, we need an equivalent form for $D_{\delta}(F)$.
\begin{lemma}\label{LM:DdeltaM}
$D_{\delta}(F)$, defined in \eqref{def:Q+-}, is the unique solution to the following Sylvester equation
\begin{equation}\label{eqn:Q+-}
  K_{\delta}Q+QK_{\delta}=F.
\end{equation}

If $F\geq 0$ ( $>0$), then $D_{\delta}(F)\geq 0$ ($>0$). 

When $|\delta|\ll  1$, $D_{\delta}(F)$ is almost half of $F$: For some constant $C=C(\Psi,\Phi)$, 
\begin{equation}\label{eq:half}
  \|D_{\delta}(F)-\frac{1}{2}F\|\leq C \|F\|  |\delta|.
\end{equation}

\end{lemma}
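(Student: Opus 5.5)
The plan is to work throughout with the integral $D_\delta(F)=\int_0^\infty e^{-sK_\delta}Fe^{-sK_\delta}\,ds$ and to use only the fact that $K_\delta=\Psi^\delta\otimes\Phi^{-\delta}$ is positive definite. Since $\Psi,\Phi>0$, the matrix $K_\delta$ is a tensor product of positive definite matrices, hence Hermitian with smallest eigenvalue $\mu_\delta>0$. Consequently $\|e^{-sK_\delta}\|\le e^{-s\mu_\delta}$, the integral converges absolutely in the norm \eqref{def:Norm}, and $D_\delta$ is a well-defined bounded linear map with $\|D_\delta(F)\|\le \|F\|/(2\mu_\delta)$.

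\emph{Sylvester equation.} I will verify \eqref{eqn:Q+-} by differentiating under the integral:
\begin{equation*}
K_\delta D_\delta(F)+D_\delta(F)K_\delta=-\int_0^\infty \frac{d}{ds}\big(e^{-sK_\delta}Fe^{-sK_\delta}\big)\,ds=F,
\end{equation*}
where the boundary term at $s=\infty$ vanishes by the exponential decay.

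For uniqueness, diagonalize $K_\delta=U\,\mathrm{diag}(\mu_i)\,U^*$. In this basis the map $Q\mapsto K_\delta Q+QK_\delta$ acts entrywise as multiplication by $\mu_i+\mu_j>0$. It is therefore invertible, and the solution is $\big(U^*QU\big)_{ij}=\big(U^*FU\big)_{ij}/(\mu_i+\mu_j)$.

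\emph{Positivity.} For any vector $x$ we have
\begin{equation*}
\langle x,D_\delta(F)x\rangle=\int_0^\infty\langle e^{-sK_\delta}x,\,F\,e^{-sK_\delta}x\rangle\,ds,
\end{equation*}
because $e^{-sK_\delta}$ is Hermitian. If $F\ge0$, the integrand is nonnegative, so $D_\delta(F)\ge 0$. If $F>0$ and $x\neq0$, the integrand is continuous and strictly positive, since $e^{-sK_\delta}$ is invertible. Hence the integral is strictly positive and $D_\delta(F)>0$.

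\emph{The estimate \eqref{eq:half}.} This is the only quantitative step, and the main point is to get a constant uniform in $\delta$. I will write $K_\delta=I\otimes I+R_\delta$, where $\|R_\delta\|\le C|\delta|$ by \eqref{eq:HDelta}. Set $Q=D_\delta(F)$ and note that $\tfrac12F$ solves the Sylvester equation with $K_0=I\otimes I$. Subtracting gives
\begin{equation*}
2\big(Q-\tfrac12F\big)=-(R_\delta Q+QR_\delta),\qquad\text{so}\qquad \|Q-\tfrac12F\|\le \|R_\delta\|\,\|Q\|.
\end{equation*}
It remains to bound $\|Q\|$ uniformly. For $|\delta|$ small enough that $C|\delta|\le\tfrac12$, we get $\mu_\delta\ge \tfrac12$, and hence $\|Q\|\le\|F\|$. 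This yields $\|D_\delta(F)-\tfrac12F\|\le C\|F\||\delta|$, with $C$ depending only on $\Psi,\Phi$.

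For $|\delta|$ not small the statement is only meaningful locally, as the lemma itself says ("When $|\delta|\ll1$"). If a global bound is wanted on a bounded $\delta$-range, the same argument works with $\mu_\delta$ bounded below by $\lambda_{\min}(\Psi)^{|\delta|}\lambda_{\min}(\Phi)^{|\delta|}$, up to the appropriate inversion of the extremal eigenvalues. I expect no genuine obstacle beyond this uniformity bookkeeping.
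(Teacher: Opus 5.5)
Your proposal is correct and follows essentially the same route as the paper: diagonalize $K_\delta$ to get uniqueness of the Sylvester solution, read off positivity from the integrand $e^{-sK_\delta}Fe^{-sK_\delta}$, and treat $K_\delta=I\otimes I+R_\delta$ perturbatively to obtain \eqref{eq:half}. The only real difference is that you close the estimate with the a priori bound $\|D_\delta(F)\|\le\|F\|/(2\mu_\delta)\le\|F\|$ instead of the paper's contraction-mapping argument, which makes explicit the bound $\|D_\delta(F)\|\le\|F\|$ that the paper's final inequality uses only implicitly.
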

This lemma will be proved in Sections \ref{sec:LMDdelta} and \ref{sec:half}.

We continue to prepare for stating our main results.

For a fixed pair
\begin{equation}\label{eq:fixPair}
(q, r) =p\ (  \frac{l}{2^k}, 1 - \frac{l}{2^k} ), \ \text{where}\ k,\ l\ \text{are positive integers}; \ l<2^k\ \text{is odd},
\end{equation} we will show that, in \eqref{eq:expreFPrime} below, part of
$G^{''}_{q, r}(0)$ will be generated by $\text{Source}_{q_0, r_0}$ if
\begin{equation}\label{eq:q0r0}
(q_0, r_0) = p\ (  \frac{l_0}{2^{k_0}}, 1 - \frac{l_0}{2^{k_0}} )
\end{equation}
satisfies the conditions:
\begin{align}\label{eq:closeness}
\begin{split}
\left| \frac{l_0}{2^{k_0}} - \frac{l}{2^k} \right| & < 2^{-k_0};\ k_0<k\  \text{and}\ l_0\ \text{are nonnegative integers },\\
\frac{l_0}{2^{k_0}}\ \text{is of simplified form}:\ & k_0\geq 0;\ l_0 \leq 2^{k_0} \ \text{is odd if} \ k_0\geq 1; \ 0=\frac{0}{2^0};\ 1=\frac{1}{2^0}.
\end{split}
\end{align}

Under the condition \eqref{eq:closeness}, we observe that there exist integers
\begin{equation}\label{eq:increasing}
k_0 < m_1 < m_2 < \dots < m_v < k 
\end{equation}
and $c_j =  1$ or $-1$, $j=1,2,\cdots,v,$ such that
\begin{equation}\label{eq:series}
\frac{l_0}{2^{k_0}} - \frac{l}{2^k} = \sum_{j=1}^{v} c_j 2^{-m_j}. 
\end{equation}
For example, when $(\frac{l}{2^{k}},\ 1-\frac{l}{2^{k}})=(\frac{5}{8},\ \frac{3}{8})$ and $(\frac{l_0}{2^{k_0}},\ 1-\frac{l_0}{2^{k_0}})=(\frac{1}{2},\frac{1}{2})$,
\begin{align}\label{eq:example}
\frac{l_0}{2^{k_0}}-\frac{l}{2^{k}}=-\frac{1}{8}=-\frac{1}{4}+\frac{1}{8}.
\end{align}

Before showing the dependence of $G^{''}_{q,r}(0)$ on $\text{Source}_{q_0,r_0}$ we define a set of a finite sequence of pairs
$$\Omega_{q,q_0} := \Big\{ \Big(\ (m_1,c_1),\ (m_2,c_2), \dots, (m_v,c_v)\Big)\ \Big|\ \eqref{eq:closeness},\ \eqref{eq:increasing} \text{ and } \eqref{eq:series} \text{ are satisfied} .\Big\}$$
Then the following term is useful for constructing $G_{q, r}^{''}(0)$: recall the operator $D_{\delta}$ from \eqref{def:Q+-},
$$Y_{  \frac{l}{2^k}, \frac{l_0}{2^{k_0}}, p}:=\sum_{\Omega_{q,  q_0 }}  D_{c_v p 2^{-m_v}}\Big( \dots D_{c_1 p 2^{-m_1}} \big( \text{Source}_{ q_0,r_0 }  \big)\Big)\geq 0.$$
To help understand this, we use the example \eqref{eq:example},
\begin{equation}\label{eq:exampleIter}
Y_{\frac{5}{8},\frac{1}{2},p}=D_{-\frac{p}{8}} (\text{Source}_{\frac{p}{2},\frac{p}{2}})+D_{\frac{p}{8}}D_{-\frac{p}{4}}(\text{Source}_{\frac{p}{2},\frac{p}{2}})\geq 0.
\end{equation}

We are ready to state the main result in this section, whose \eqref{eq:Lieb} implies Lieb's concavity theorem: recall the norm $\|\cdot \|$ in \eqref{def:Norm}, the fixed $t_0$ in \eqref{eq:ReforConcave}, and that $\text{Source}_{q,r}$ satisfies \eqref{eq:estOmegaQR},
\begin{theorem}\label{MainTheorem:concavity}
 For the fixed pairs $(q,r)$ described in \eqref{eq:fixPair}, $G^{''}_{q,r}(0)$ is generated by $\text{Source}_{q_0,r_0}$ with $(q_0,r_0)$ described in \eqref{eq:q0r0} satisfying \eqref{eq:closeness}, specifically,   
\begin{equation}\label{eq:expreFPrime}
0\geq\frac{1}{2}H^{''}_{q,r}(t_0)= \frac{1}{2}G^{''}_{q,r} (0)= -\sum_{\frac{l_0}{2^{k_0}}\ \text{satisfying} \ \eqref{eq:closeness}
} Y_{  \frac{l}{2^k},  \frac{l_0}{2^{k_0}}, p }.
\end{equation}

There exists some constant $C=C(\Psi,\Phi, V, W)>0$, such that
\begin{equation}\label{eq:estiX}
\Big\|  Y_{\frac{l}{2^k}, \frac{l_0}{2^{k_0}}, p }-\big(1-|l_0-\frac{l}{2^{k-k_0}}|\big) \text{Source}_{q_0,r_0}\Big\|\leq C 2^{-k_0} \Big\|\text{Source}_{q_0,r_0}\Big\|.
\end{equation} 

For any nonnegative $q$ and $r$ satisfying $$q+r=p<1$$ and $q$ is not of the form $p\frac{l}{2^{k}}$, there exists a unique sequence of increasing positive integers $l_1<l_2<\cdots,$ such that
$$q=p\sum_{k=1}^{\infty}2^{-l_k},$$
the following equality and limit hold: by \eqref{eq:estiX} and the estimate for $\text{Source}_{q,r}$ in \eqref{eq:estOmegaQR}
\begin{equation}\label{eq:limitDeri}
  G^{''}_{q,r}(0)=H^{''}_{q,r}(t_0)=\lim_{n\rightarrow \infty}H^{''}_{p\sum_{k=1}^{n} 2^{-l_k}, \ p\big(1-\sum_{k=1}^{n} 2^{-l_k}\big)}(t_0)\leq 0.
\end{equation} 

The function $h$ defined in \eqref{def:ft} is concave: by \eqref{eq:htNew} and the results above
\begin{align}
  h''(t)=\big\langle V_K,\ H_{q,r}^{''}(t) V_{K}\big\rangle\leq 0,
\end{align} and by \eqref{eq:concavity}, when $t\in (0,1),$
\begin{equation}\label{eq:Lieb}
  h(t)-(1-t)h(0)-t h(1)=-t\int_{0}^{1}\int_{t\lambda}^{\lambda} \big\langle V_K,\ H_{q,r}^{''}(\sigma) V_{K}\big\rangle\ d\sigma d\lambda\geq 0.
\end{equation}
\end{theorem}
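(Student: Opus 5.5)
The plan is Ando's dyadic bisection, made exact by Theorem \ref{THM:secondOrder}. The key observation is that for the dyadic pair $(q,r)$ in \eqref{eq:fixPair}, with neighbours $(q_\pm,r_\pm)$, we have
\[
G_{q,r}(\epsilon)=M_0\big(G_{q_-,r_-}(\epsilon),\,G_{q_+,r_+}(\epsilon)\big),
\]
where the geometric mean is taken of commuting-type tensor products. Indeed $(q,r)$ is the midpoint of $(q_\pm,r_\pm)$, and for $A=\Psi^{q_-}\otimes\Phi^{r_-}$, $B=\Psi^{q_+}\otimes\Phi^{r_+}$ one has $A^{-1/2}BA^{-1/2}=K_{\delta}^{2}$ with $\delta=p2^{-k}$ (up to the sign convention in \eqref{def:Hdelta}). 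Formula \eqref{def:formuMeans} then gives $M_0(A,B)=\Psi^q\otimes\Phi^r$, and the same holds with $\Psi+\epsilon V$ and $\Phi+\epsilon W$ in place of $\Psi$ and $\Phi$.

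\textbf{Step 1: one level of the recursion.} Expand $G_{q_\pm,r_\pm}(\epsilon)=A_\pm+\epsilon X_\pm+\epsilon^2 S_\pm+O(\epsilon^3)$, where $X,Z$ are as in \eqref{eq:ABXZ}. The second-order term of $M_0(A+\epsilon X+\epsilon^2S_-,\,B+\epsilon Z+\epsilon^2S_+)$ has two contributions. The first is $-\text{Cross}_{A,B}(X,Z)=-\text{Source}_{q,r}$, by Theorem \ref{THM:secondOrder}. The second is the first-order (Fréchet) derivative of $M_0$ in the direction $(S_-,S_+)$. To compute the latter, differentiate the Riccati characterization $HA^{-1}H=B$ of $M_0$. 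Since $A$, $B$ and $H$ are functions of the commuting pair $\Psi\otimes I$, $I\otimes\Phi$, the resulting linearized equation reduces to Sylvester equations of the form \eqref{eqn:Q+-}. The derivative should therefore be
\[
D_{\delta}(S_-)+D_{-\delta}(S_+),\qquad \delta=p2^{-k},
\]
possibly after a conjugation that I will have to check carefully. This yields the recursion
\[
\tfrac12G''_{q,r}(0)=-\text{Source}_{q,r}+D_{\delta}\big(\tfrac12G''_{q_-,r_-}(0)\big)+D_{-\delta}\big(\tfrac12G''_{q_+,r_+}(0)\big).
\]

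\textbf{Step 2: unrolling.} The base cases $(p,0)$ and $(0,p)$ come from \eqref{eq:baseEqui} via Lemma \ref{LM:base}. Iterate the recursion over $k$. Every path from $(q,r)$ up to an ancestor $(q_0,r_0)$ satisfying \eqref{eq:closeness} corresponds exactly to a signed dyadic expansion \eqref{eq:series}, that is, to an element of $\Omega_{q,q_0}$. Collecting these paths gives \eqref{eq:expreFPrime}. Nonpositivity follows because each $\text{Source}$ is positive semidefinite (Theorem \ref{THM:secondOrder} and Lemma \ref{LM:base}) and each $D_\delta$ preserves positivity (Lemma \ref{LM:DdeltaM}).

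\textbf{Step 3: the estimate \eqref{eq:estiX}.} By \eqref{eq:half}, replace each $D_{c_jp2^{-m_j}}$ by $\tfrac12$. Summing the errors as a geometric series gives a total of $C2^{-k_0}\|\text{Source}\|$, since $m_j>k_0$. The surviving term $\sum_{\Omega}2^{-v}$ is the weight that linear interpolation between consecutive dyadics assigns at level $k_0$. The recursion with $D=\tfrac12$ reproduces the piecewise-linear hat function, and this gives $1-|l_0-l2^{k_0-k}|$.

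\textbf{Step 4: non-dyadic exponents and the conclusion.} For non-dyadic $q$, use continuity of $q\mapsto G''_{q,r}(0)$, which follows for instance from the integral formulas, and approximate $q$ by its truncated binary expansions; the limit preserves $\le0$. Finally pair with $V_K$ via \eqref{eq:htNew} and apply Lemma \ref{LM:concavity} to obtain \eqref{eq:Lieb}.

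\textbf{Main obstacle.} I expect Step 1 to be the hardest part: computing the derivative of $M_0$ in the $\epsilon^2$ perturbations exactly as $D_{\pm\delta}$ acting on $S_\mp$, with the correct signs, normalization and positivity. My first attempt will be to differentiate $HA^{-1}H=B$ at commuting base points and solve the resulting equation in the joint eigenbasis of $\Psi\otimes I$ and $I\otimes\Phi$.
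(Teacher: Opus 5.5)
\textbf{Overall.} Your plan follows the paper's proof. It uses Ando's representation $G_{q,r}(\epsilon)=M_0\big(G_{q_-,r_-}(\epsilon),G_{q_+,r_+}(\epsilon)\big)$ and then a second-order expansion with two pieces. Theorem \ref{THM:secondOrder} produces $-\text{Source}_{q,r}$ (the paper's $\Xi_2$, Proposition \ref{prop:p1p2}). The linear response to the $\epsilon^2$-coefficients produces the Sylvester terms (the paper's $\Xi_1$, Proposition \ref{prop:step1}). You then unroll along dyadic paths, as the paper does.

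A few of your choices are cleaner than the paper's:
\begin{itemize}
\item Your direct Taylor split says more simply what the paper does through the $\tfrac12\sum_\pm$ symmetrization.
\item Your hat-function argument actually supplies the weight in \eqref{eq:estiX}, which the paper skips.
\item For non-dyadic $q$, continuity via joint real-analyticity of $(q,\epsilon)\mapsto(\Psi+\epsilon V)^q\otimes(\Phi+\epsilon W)^{p-q}$ is more direct than going through \eqref{eq:estiX} and \eqref{eq:estOmegaQR}.
\item Your step $\delta=p2^{-k}$ is the correct normalization and matches \eqref{eq:exampleIter}. The $\delta_I=p2^{-I}$ in Proposition \ref{prop:Iteration} should be read as $p2^{-(I+1)}$.
\end{itemize}

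\textbf{The sign in Step 1 is reversed.} You guessed $D_{\delta}(S_-)+D_{-\delta}(S_+)$; it is the other way round. Take $A=\Psi^{q_-}\otimes\Phi^{r_-}$, $B=\Psi^{q_+}\otimes\Phi^{r_+}$ and $H=M_0(A,B)$. Then
\[
HA^{-1}=A^{-1}H=A^{-1/2}B^{1/2}=K_\delta .
\]
Differentiating $HA^{-1}H=B$ in the direction $S_-$ gives $K_\delta\dot H+\dot HK_\delta=K_\delta S_-K_\delta$. Hence
\[
\dot H=D_\delta(K_\delta S_-K_\delta)=D_{-\delta}(S_-).
\]
The last equality is, entrywise in the joint eigenbasis, $\kappa_i\kappa_j/(\kappa_i+\kappa_j)=1/(\kappa_i^{-1}+\kappa_j^{-1})$. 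This is the ``conjugation'' you anticipated. By symmetry of $M_0$, the $S_+$ direction gives $D_{\delta}(S_+)$. As a scalar check, $\partial_a\sqrt{ab}=\tfrac12\sqrt{b/a}=1/(2\kappa_{-\delta})$. The correct recursion is therefore
\[
\tfrac12G''_{q,r}(0)=-\text{Source}_{q,r}+D_{-\delta}\big(\tfrac12G''_{q_-,r_-}(0)\big)+D_{\delta}\big(\tfrac12G''_{q_+,r_+}(0)\big).
\]
This agrees with \eqref{eq:OmegaQR} and \eqref{eq:exampleIter}: a step toward the upper neighbour carries $D_{+}$.

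With your signs, every operator in the unrolled sum would become $D_{-c_jp2^{-m_j}}$. You would then not recover $Y_{\frac{l}{2^k},\frac{l_0}{2^{k_0}},p}$ exactly. Positivity, \eqref{eq:estiX} and concavity are unaffected, since they only use $D_\delta\ge 0$ and $D_\delta\approx\tfrac12$.

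\textbf{Matching paths to $\Omega_{q,q_0}$.} The first step from $q$ always has $m_v=k$, as in \eqref{eq:example}. So \eqref{eq:increasing} must be read with $m_v\le k$ for your path-to-expansion bijection to hold.
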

The theorem will be reformulated in subsection \ref{subsec:concave} and proved in subsection \ref{sub:concavity}.

\subsection{Reformulation of Theorem \ref{MainTheorem:concavity}}\label{subsec:concave}
For fixed nonnegative constants $p,q,r$ such that $p \in [0, 1]$ and  $$q + r = p,$$ by standard perturbation theory, there exists $V_{q,\Psi}$, $W_{r,\Phi}$ and $ Q_{q,r}$ such that
\begin{equation}\label{eq:detailEpsilon}
  G_{q, r}(\epsilon) = \Psi^q \otimes \Phi^r+\epsilon V_{q,\Psi}\otimes \Phi^r+\epsilon \Psi^q \otimes W_{r,\Phi}-\epsilon^2 Q_{q,r}+O(|\epsilon|^3). 
\end{equation} We already found the desired $V_{q,\Psi}$ and $W_{r,\Phi}$ in \eqref{eqn:Vp1}. 

The focus is to find an explicit expression for $$Q_{q,r}=-\frac{1}{2}G_{q, r}''(0)\geq 0.$$ The following identity, implied by \eqref{eq:detailEpsilon}, will be used often:
\begin{align}\label{eq:pmEpsilon}
    \frac{1}{2}\sum_{\pm}G_{q, r}(\pm\epsilon)=\Psi^q \otimes \Phi^r-\epsilon^2 Q_{q,r}+O(|\epsilon|^3).
\end{align}

We will use the ideas in \cite{ANDO1979203}, see also \cite{carlen2010trace}, and find the desired $G_{q, r}''(0)$ through iteration and approximation. Specifically, when $(q,r)=p(\frac{l}{2^n},1-\frac{l}{2^n})$, where $n,l$ are nonnegative integers and $l\leq 2^n$, we will find the desired $Q_{q,r}$ by iterating $n$. Otherwise, we will approximate it using the known cases.

To initiate the iteration, we choose the base cases to be $(q, r) = (p, 0)$ and $(0, p)$, where
\begin{align*}
G_{p, 0} =& (\Psi + \epsilon V)^p \otimes I,\\
G_{0, p} =& I \otimes (\Phi + \epsilon W)^p.
\end{align*}
For these cases, we take results from \eqref{eq:baseEqui}:
\begin{align}
\begin{split}
  Q_{p,0}=&\text{Source}_{p,0},\\
  Q_{0,p}=&\text{Source}_{0,p}.
\end{split}
\end{align}

For the next step of iteration, suppose that we find the desired $Q_{q,r}$ when $(q,r)\in K_{I}$, which is a set of pairs defined as, for a fixed integer $I \geq 0$,
\begin{equation}\label{eq:cases}
K_{I}:=\Big\{  (q, r) = p\ \big(  \frac{l}{2^k},\ 1 - \frac{l}{2^k}  \big)\ \Big| \ 0 \leq k \leq I;\ 0 \leq l \leq 2^k; \ k,\ l \in \mathbb{Z} \Big\}.
\end{equation}

Based on this, we consider cases not included in $K_{I}$.
\begin{equation}\label{eq:oddL}
(q, r) = p\ (  \frac{l}{2^{I+1}},   1 - \frac{l}{2^{I+1}}  ),\ \text{and}\ l\in (0, 2^{I+1})\ \text{is an odd integer}.
\end{equation}
We will show that part of the desired $Q_{q,r}$ is generated by $Q_{q_{\pm},r_{\pm}}$ with $(q_{\pm},r_{\pm})\in K_{I}$ defined as
\begin{align*}
(q_+, r_+) :=& p\ (  \frac{l+1}{2^{I+1}},\ 1 - \frac{l+1}{2^{I+1}}  ),\\ 
(q_{-}, r_{-}) :=&p\ (  \frac{l-1}{2^{I+1}},\   1 - \frac{l-1}{2^{I+1}}  ).
\end{align*} Because $l - 1$ and $l + 1$ are even integers, these two cases
are indeed included in the set $K_I$.

Our idea is stimulated by \cite{ANDO1979203}: the definition of geometric mean $M_0$ in \eqref{def:GeoMean} implies
\begin{equation*}
  G_{q, r}(\epsilon)=M_0\big(G_{q_{+}, r_{+}}(\epsilon),\ G_{q_{-}, r_{-}}(\epsilon)\big).
\end{equation*}
Then, we use the identity
\begin{align}\label{eq:difference1}
\begin{split}
   \frac{1}{2} \epsilon^2 G^{''}_{q,r}(0)+O(\epsilon^3) =&\frac{1}{2}\sum_{\pm}G_{q, r}(\pm \epsilon)-G_{q, r}(0)\\
     =&\frac{1}{2}\sum_{\pm}M_0\big(G_{q_{+}, r_{+}}(\pm\epsilon),\ G_{q_{-}, r_{-}}(\pm\epsilon)\big)-M_0\big(G_{q_{+}, r_{+}}(0),\ G_{q_{-}, r_{-}}(0)\big)\\
     =&\Xi_1+\Xi2
\end{split}
\end{align} to link the desired $G^{''}_{q,r}(0)$ to $G^{''}_{q_{\pm},r_{\pm}}(0)=-2Q_{q_{\pm},r_{\pm}}$. Here $\Xi_1$ and $\Xi_2$ are Hermitian matrices defined as
\begin{align*}
\begin{split}
  \Xi_1:=&M_0 \Big( \frac{1}{2} \sum_{\pm}G_{q_+, r_+}(\pm\epsilon)  , \ \frac{1}{2}  \sum_{\pm}G_{q_-, r_-}(\pm\epsilon)   \Big)-M_0\big(G_{q_{+}, r_{+}}(0),\ G_{q_{-}, r_{-}}(0)\big),\\
  \Xi_2:=&\frac{1}{2}\sum_{\pm}M_0\big(G_{q_{+}, r_{+}}(\pm\epsilon),\ G_{q_{-}, r_{-}}(\pm\epsilon)\big)-M_0 \Big( \frac{1}{2} \sum_{\pm}G_{q_+, r_+}(\pm\epsilon)  , \ \frac{1}{2}  \sum_{\pm}G_{q_-, r_-}(\pm\epsilon)   \Big).
\end{split}
\end{align*}

Before studying $\Xi_1$ and $\Xi_2$ we recall the definition of the operator $D_{\delta}$ from \eqref{def:Q+-}, recall the definition of $\text{Source}_{q,r}$ from \eqref{eq:ABXZ} and \eqref{def:sourceQR}, and define a positive scalar $$\delta_I:=p 2^{-I}.$$

$\Xi_1$ and $\Xi_2$ satisfy the following equations:
\begin{lemma}
For $\Xi_1,$
\begin{align}\label{eq:step1}
\begin{split}
\Xi_1=
&-\epsilon^2 D_{\delta_{I} }(Q_{q_+, \, r_+})-\epsilon^2 D_{-\delta_{I} }(Q_{q_-, \, r_-})+ O(\epsilon^3).
\end{split}
\end{align}

For $\Xi_2,$
\begin{align}\label{eq:step2}
\begin{split}
\Xi_2
=-\epsilon^2 \text{Source}_{q,r}+O(\epsilon^3).
\end{split}
\end{align}
\end{lemma}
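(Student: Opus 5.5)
The plan is to treat both identities as Taylor expansions of the smooth map $(V,W)\mapsto M_0(V,W)$ around the base point
\[
(P_+,P_-):=\big(\Psi^{q_+}\otimes\Phi^{r_+},\ \Psi^{q_-}\otimes\Phi^{r_-}\big).
\]
All four tensor factors $\Psi^{q_\pm}\otimes I$ and $I\otimes\Phi^{r_\pm}$ commute. So formula \eqref{def:formuMeans} gives
\[
M_0(P_+,P_-)=\Psi^{q}\otimes\Phi^{r}=:P,
\]
since $q=\frac{q_++q_-}{2}$ and $r=\frac{r_++r_-}{2}$. We also write, by \eqref{eq:detailEpsilon} and the Leibniz rule for the first-order tensor terms,
\[
G_{q_\pm,r_\pm}(\epsilon)=P_\pm+\epsilon Y_\pm-\epsilon^2Q_{q_\pm,r_\pm}+O(\epsilon^3),\qquad Y_\pm=V_{q_\pm,\Psi}\otimes\Phi^{r_\pm}+\Psi^{q_\pm}\otimes W_{r_\pm,\Phi}.
\]
These $Y_\pm$ are exactly the matrices $Z$ and $X$ of \eqref{eq:ABXZ}.

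\emph{Step 1 ($\Xi_1$).} By \eqref{eq:pmEpsilon}, $\frac12\sum_\pm G_{q_\pm,r_\pm}(\pm\epsilon)=P_\pm-\epsilon^2Q_{q_\pm,r_\pm}+O(\epsilon^3)$. Hence $\Xi_1=dM_0(P_+,P_-)[-\epsilon^2Q_+,-\epsilon^2Q_-]+O(\epsilon^3)$, where $dM_0$ is the Fréchet derivative. To compute $dM_0$, differentiate the Riccati characterization $MP_-^{-1}M=P_+$ of $M=M_0(P_+,P_-)$. Because $P_\pm$ and $P$ commute, $MP_-^{-1}=P_-^{-1}M=:K$ is a power $K_{\delta}$ of the form \eqref{def:Hdelta} with $|\delta|$ a multiple of $p2^{-(I+1)}$. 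Differentiating gives the Sylvester equation
\[
K\,dM+dM\,K=dP_++K\,dP_-\,K.
\]
By Lemma \ref{LM:DdeltaM}, $dM=D_\delta(dP_+)+D_\delta(K\,dP_-\,K)$. For the second term, conjugate by $K^{-1}$: if $X'$ solves $K^{-1}X'+X'K^{-1}=dP_-$, then $KX'K$ solves the equation with right-hand side $K\,dP_-\,K$. This turns that term into the $D_{-\delta}$-type operator acting on $Q_{q_-,r_-}$. Alternatively, one can use the symmetric form obtained by averaging the Riccati equations $MP_-^{-1}M=P_+$ and $MP_+^{-1}M=P_-$.

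I expect the only delicate point here to be the bookkeeping that matches the exponent with the paper's $\delta_I$ and the normalization of $D_{\pm\delta_I}$. I would fix this by testing the scalar case $N=M=1$, where everything commutes and the formula can be checked by hand. Since the perturbation is of size $\epsilon^2$ and $M_0$ is smooth near positive definite pairs, the remainder is $O(\epsilon^4)\subset O(\epsilon^3)$.

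\emph{Step 2 ($\Xi_2$).} Set $\mathcal H(\epsilon):=M_0(P_++\epsilon Y_+,P_-+\epsilon Y_-)$. Expanding $M_0$ at the point $(P_\pm\pm\epsilon Y_\pm)$ gives
\[
M_0\big(G_{q_+,r_+}(\pm\epsilon),G_{q_-,r_-}(\pm\epsilon)\big)=\mathcal H(\pm\epsilon)+dM_0(P_++\cdot,P_-+\cdot)[-\epsilon^2Q_+,-\epsilon^2Q_-]+O(\epsilon^3).
\]
The derivative evaluated at the shifted point differs from $dM_0(P_+,P_-)$ by $O(\epsilon)$, so the middle term equals $\Xi_1+O(\epsilon^3)$ after averaging over $\pm$. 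Moreover, the subtracted term in $\Xi_2$ equals $P+\Xi_1$ by definition. Therefore
\[
\Xi_2=\tfrac12\big(\mathcal H(\epsilon)+\mathcal H(-\epsilon)\big)-P+O(\epsilon^3)=\tfrac{\epsilon^2}{2}\mathcal H''(0)+O(\epsilon^3),
\]
using that odd-order terms cancel in the average. Theorem \ref{THM:secondOrder} is symmetric in the roles of the two arguments of $M_0$, so we may apply it with $A=P_-$, $B=P_+$, $X=Y_-$, $Z=Y_+$. This gives $\frac12\mathcal H''(0)=-\text{Cross}_{A,B}(X,Z)=-\text{Source}_{q,r}$ by \eqref{def:sourceQR}, which proves \eqref{eq:step2}.
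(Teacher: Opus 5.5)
Your argument is correct and follows the paper's route. $\Xi_1$ is the first-order variation of $M_0$ at the commuting point $(P_+,P_-)$ in the direction $(-\epsilon^2Q_{q_+,r_+},-\epsilon^2Q_{q_-,r_-})$, read off from a Sylvester equation; your implicit differentiation of $MP_-^{-1}M=P_+$ is just a quicker way to reach Proposition \ref{prop:step1}. $\Xi_2$ then reduces to Theorem \ref{THM:secondOrder} once the $\epsilon^2$-shifts cancel. Your Step 2 supplies the cancellation argument that the paper omits as tedious. It also correctly uses $M_0(V,W)=M_0(W,V)$ to match the argument order of \eqref{eq:ABXZ}; the paper's proof silently works with the opposite order of \eqref{def:ABQ}.

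There are two points to fix.

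First, do not leave the exponent open. Your own computation gives exactly $K=MP_-^{-1}=\Psi^{q-q_-}\otimes\Phi^{r-r_-}=K_{p2^{-(I+1)}}=K_{\delta_I/2}$. So what you prove is $\Xi_1=-\epsilon^2D_{\delta_I/2}(Q_{q_+,r_+})-\epsilon^2D_{-\delta_I/2}(Q_{q_-,r_-})+O(\epsilon^3)$, and this is the correct formula.
\begin{itemize}
\item The scalar test you propose confirms it rather than repairing a mismatch. Take $a=\psi^{q_+}\phi^{r_+}$ and $b=\psi^{q_-}\phi^{r_-}$. Then $d\sqrt{ab}=\tfrac12\sqrt{b/a}\,da+\tfrac12\sqrt{a/b}\,db$, with $\sqrt{a/b}=\psi^{\delta_I/2}\phi^{-\delta_I/2}$.
\item With $D_\delta$ normalized as in Lemma \ref{LM:DdeltaM}, the subscript $\pm\delta_I$ in the statement cannot be obtained. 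It is a factor-two slip in the paper.
\item The paper's own Proposition \ref{prop:step1} shows this: there $(B^{-\frac12}AB^{-\frac12})^{\frac12}=\Psi^{(q_+-q_-)/2}\otimes\Phi^{(r_+-r_-)/2}=K_{\delta_I/2}$.
\item The worked example \eqref{eq:exampleIter} and the definition \eqref{def:Gammak} agree with you too: they use subscripts $\pm p2^{-(I+1)}$ at level $I+1$.
\item The slip is harmless for positivity and for \eqref{eq:qqrhalf}.
\end{itemize}

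Second, in your conjugation step it is $X'$ itself, not $KX'K$, that solves $KX'+X'K=K\,dP_-\,K$. This follows from $K(K^{-1}X'+X'K^{-1})K=X'K+KX'$. It gives $D_\delta(KFK)=D_{-\delta}(F)$, which is the identity you actually need.
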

These two equations will be proved in subsubsections \ref{subsub:step1} and \ref{subsub:step2}, respectively.

Here we present the ideas.

To control $\Xi_1$ we observe that, by \eqref{eq:pmEpsilon},
\begin{align}
& M_0 \left( G_{q_+, r_+}(0), \ G_{q_-, r_-}(0) \right) \nonumber\\
=& M_0 \Big( \frac{1}{2} \sum_{\pm}G_{q_+, r_+}(\pm\epsilon)  + \epsilon^2 Q_{q_+, r_+}, \ \frac{1}{2} \sum_{\pm}G_{q_-, r_-}(\pm\epsilon)  + \epsilon^2 Q_{q_-, r_-} \Big)+ O(\epsilon^3).\label{eq:TwoSecondOrder}
\end{align}
We aim to move the terms $\epsilon^2 Q_{q_+, r_+}$ and $\epsilon^2 Q_{q_-, r_-}$ outside of $M_0(\cdot,\cdot)$. Even though there are some other terms of order $O(\epsilon^2)$ within $M_0(\cdot,\cdot)$, their contribution is negligible, being of order $O(\epsilon^3)$. This, combined with straightforward calculations, leads to the desired outcome, as detailed in Proposition \ref{prop:step1} within subsubsection \ref{subsub:step1}.

Controlling $\Xi_2$ is straightforward. Based on \eqref{eq:detailEpsilon}, we have
\[
G_{q_{\pm}, r_{\pm}}(\epsilon) = G_{q_{\pm}, r_{\pm}}(0) +  \epsilon V_{q_{\pm},\Psi}\otimes \Phi^r+\epsilon \Psi^q\otimes W_{r_{\pm},\Phi} - \epsilon^2 Q_{q_{\pm}, r_{\pm}} + O(\epsilon^3).
\]
Due to cancellations, the contributions from the terms $\epsilon^2 Q_{q_{\pm}, r_{\pm}}$ are negligible, of order $O(\epsilon^3)$. This makes the problem similar to that of geometric means, as examined in Theorem \ref{THM:secondOrder}. The detailed analysis is presented in Proposition \ref{prop:p1p2} of subsubsection \ref{subsub:step2}.

We are ready to state the main result for the second step of iteration:

\begin{proposition}\label{prop:Iteration} 
For the $(q,r)$ in \eqref{eq:oddL}, the term $Q_{q,r}=-\frac{1}{2}G^{''}_{q,r}(0)$ has three parts:  
  \begin{equation}\label{eq:OmegaQR}
  Q_{q,r}=D_{\delta_{I} }(Q_{q_+, \, r_+})+D_{-\delta_{I} }(Q_{q_-, \, r_-})+\text{Source}_{q,r}
  \end{equation} and 
  $D_{\delta_{I} }(Q_{q_+, \, r_+})\geq 0$ and $D_{-\delta_{I}}(Q_{q_-, \, r_-})\geq 0$ are generated by $Q_{q_{+},r_{+}}$ and $Q_{q_{-},r_{-}}$, $\text{Source}_{q,r}$ is defined in \eqref{eq:ABXZ} and \eqref{def:sourceQR}.

  $D_{\delta_{I} }(Q_{q_+, \, r_+})$ and $D_{-\delta_{I}}(Q_{q_-, \, r_-})$ are about half of $Q_{q_{+},r_{+}}$ and $Q_{q_{-},r_{-}}$ respectively: for some $C=C(\Psi,\Phi, V, W)$,
  \begin{align}\label{eq:qqrhalf}
    \big\|D_{\delta_{I} }(Q_{q_+, \, r_+})-\frac{1}{2}Q_{q_{+},r_{+}}\big\|+\big\|D_{-\delta_{I}}(Q_{q_-, \, r_-})-\frac{1}{2}Q_{q_{-},r_{-}}\big\|\leq & C 2^{-I}.
  \end{align}
\end{proposition}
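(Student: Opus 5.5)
The plan is to combine the decomposition \eqref{eq:difference1} with the two expansions \eqref{eq:step1} and \eqref{eq:step2}, and then read off the $\epsilon^2$ coefficient. First, since $G_{q,r}(\epsilon)=M_0\big(G_{q_+,r_+}(\epsilon),G_{q_-,r_-}(\epsilon)\big)$ for every small $\epsilon$ (Ando's observation, valid because $q=\frac{1}{2}(q_++q_-)$ and $r=\frac12(r_++r_-)$ and the tensor factors commute), \eqref{eq:difference1} gives
\begin{equation*}
-\epsilon^2 Q_{q,r}+O(\epsilon^3)=\Xi_1+\Xi_2 .
\end{equation*}
Substituting \eqref{eq:step1} and \eqref{eq:step2} gives
\begin{equation*}
-\epsilon^2 Q_{q,r}=-\epsilon^2\big(D_{\delta_I}(Q_{q_+,r_+})+D_{-\delta_I}(Q_{q_-,r_-})+\text{Source}_{q,r}\big)+O(\epsilon^3).
\end{equation*}
Dividing by $\epsilon^2$ and letting $\epsilon\to0$ yields \eqref{eq:OmegaQR}, since both sides are independent of $\epsilon$.

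To justify \eqref{eq:step1}, I would write $M_0(\mathcal A+\epsilon^2 P,\mathcal B+\epsilon^2 R)$ with $\mathcal A=\frac12\sum_\pm G_{q_+,r_+}(\pm\epsilon)$ and $\mathcal B=\frac12\sum_\pm G_{q_-,r_-}(\pm\epsilon)$. I would then expand to first order in $\epsilon^2$. Because $\mathcal A,\mathcal B$ differ from $\mathcal A_0=\Psi^{q_+}\otimes\Phi^{r_+}$ and $\mathcal B_0=\Psi^{q_-}\otimes\Phi^{r_-}$ only by $O(\epsilon^2)$, the Fréchet derivative of $M_0$ may be evaluated at $(\mathcal A_0,\mathcal B_0)$ at the cost of $O(\epsilon^4)$. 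These base points commute, so the derivative is explicit. Differentiating $M_0^2$-type identities, or equivalently the relation $M_0\,\mathcal B_0^{-1}M_0=\mathcal A_0$, shows that the derivative $\dot M$ in direction $(P,0)$ solves the Sylvester equation
\begin{equation*}
\dot M\,\mathcal A_0^{-1/2}\mathcal B_0^{1/2}+\mathcal B_0^{1/2}\mathcal A_0^{-1/2}\dot M=P .
\end{equation*}
Here $M_0(\mathcal A_0,\mathcal B_0)=\mathcal A_0^{1/2}\mathcal B_0^{1/2}$ and $\mathcal A_0^{-1/2}\mathcal B_0^{1/2}=\Psi^{-\delta}\otimes\Phi^{\delta}$ with $\delta=(q_+-q_-)/2=p2^{-I-1}$. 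This is $K_{\mp}$ in the notation \eqref{def:Hdelta}, so $\dot M=D_{\pm\cdot}(P)$ by Lemma \ref{LM:DdeltaM}. I expect some care to match the subscript convention $\delta_I$ in \eqref{def:Hdelta}; I would fix it by this computation and adjust the normalization of $K_\delta$ if necessary. The direction $(0,R)$ is symmetric with $\delta\mapsto-\delta$. Since $\mathcal A+\epsilon^2Q_{q_+,r_+}=G_{q_+,r_+}(0)+O(\epsilon^3)$ by \eqref{eq:pmEpsilon}, this gives exactly \eqref{eq:step1}.

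For \eqref{eq:step2}, I would apply Theorem \ref{THM:secondOrder} with $A,B,X,Z$ as in \eqref{eq:ABXZ}. The function $\epsilon\mapsto M_0(A+\epsilon X-\epsilon^2Q_+, B+\epsilon Z-\epsilon^2Q_-)$ has the property that the $\epsilon^2Q$ terms enter linearly and identically in both the averaged and the $\pm\epsilon$ terms. They therefore cancel in $\Xi_2$ up to $O(\epsilon^3)$, and the remaining even part is $\frac12\cdot\epsilon^2 H''(0)=-\epsilon^2\text{Cross}_{A,B}(X,Z)$.

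Positivity of $D_{\pm\delta_I}(Q_{q_\pm,r_\pm})$ follows from Lemma \ref{LM:DdeltaM}, because $Q_{q_\pm,r_\pm}\ge0$ by induction over $K_I$ (base cases via Lemma \ref{LM:base}). Finally, \eqref{eq:qqrhalf} follows from \eqref{eq:half}, since $\delta_I= p2^{-I}$, together with a uniform bound $\|Q_{q,r}\|\le C$ over $K_I$. This uniform bound itself follows from $Q_{q,r}=-\frac12G''_{q,r}(0)$ and smooth dependence of $(\Psi+\epsilon V)^q\otimes(\Phi+\epsilon W)^r$ on $(q,r)\in[0,1]^2$. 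The main obstacle is the derivative computation for \eqref{eq:step1}, in particular checking that the relevant Sylvester operator is exactly $K_{\pm\delta_I}$.
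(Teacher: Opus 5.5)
Your overall architecture matches the paper's proof. You use the decomposition \eqref{eq:difference1}, and you linearize $M_0$ at the commuting base point for $\Xi_1$ to get a Sylvester equation (the paper's Proposition \ref{prop:step1}). For $\Xi_2$ you use Theorem \ref{THM:secondOrder} together with cancellation of the $\epsilon^2Q$ terms. Positivity and \eqref{eq:qqrhalf} come from Lemma \ref{LM:DdeltaM}, and your remark that \eqref{eq:qqrhalf} also needs a uniform bound on $\|Q_{q_\pm,r_\pm}\|$ is a good addition that the paper omits. However, the one computation that actually produces \eqref{eq:OmegaQR} is wrong as written. Set $M(\eta)=M_0(\mathcal{A}_0+\eta P,\mathcal{B}_0)$. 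Differentiating the identity you cite, $M(\eta)\mathcal{B}_0^{-1}M(\eta)=\mathcal{A}_0+\eta P$, at $\eta=0$ gives
\[
\dot M\,\mathcal{B}_0^{-1}M(0)+M(0)\,\mathcal{B}_0^{-1}\dot M=P,\qquad \mathcal{B}_0^{-1}M(0)=M(0)\,\mathcal{B}_0^{-1}=\mathcal{A}_0^{1/2}\mathcal{B}_0^{-1/2}.
\]
So the Sylvester operator is $\mathcal{A}_0^{1/2}\mathcal{B}_0^{-1/2}=(\mathcal{B}_0^{-1/2}\mathcal{A}_0\mathcal{B}_0^{-1/2})^{1/2}$, exactly as in Proposition \ref{prop:step1}. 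It is not its inverse $\mathcal{A}_0^{-1/2}\mathcal{B}_0^{1/2}$. A scalar check confirms this: $\partial_a\sqrt{ab}=\tfrac12\sqrt{b/a}$, whereas your equation gives $\tfrac12\sqrt{a/b}$.

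This is not cosmetic. Your operator is exactly $\Psi^{-\delta}\otimes\Phi^{\delta}=K_{-\delta}$, so Lemma \ref{LM:DdeltaM} would give $D_{-\delta}(Q_{q_+,r_+})+D_{\delta}(Q_{q_-,r_-})$. These are the opposite signs from \eqref{eq:OmegaQR}, and in general $D_{\delta}(F)\neq D_{-\delta}(F)$: for $F$ commuting with $K_\delta$ they are $\tfrac12K_\delta^{-1}F$ and $\tfrac12K_\delta F$. Your step ``this is $K_\mp$, so $\dot M=D_{\pm\cdot}(P)$'' silently flips the sign. ``Adjusting the normalization of $K_\delta$'' is also not available, since $K_\delta$ and $D_\delta$ are fixed by \eqref{def:Hdelta} and \eqref{def:Q+-}. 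With the correct operator you get $\mathcal{A}_0^{1/2}\mathcal{B}_0^{-1/2}=\Psi^{\delta}\otimes\Phi^{-\delta}=K_{\delta}$ with $\delta=\tfrac12(q_+-q_-)$, and $K_{-\delta}$ for the $(0,R)$ direction. These are precisely the signs in \eqref{eq:OmegaQR}, and no renormalization is needed. Your magnitude $\delta=p2^{-I-1}$ is right: it is what Proposition \ref{prop:step1}, the example \eqref{eq:exampleIter} and \eqref{def:Gammak} actually use. So the $\delta_I=p2^{-I}$ in the statement is an off-by-one in the paper's notation and should be read as $\tfrac12(q_+-q_-)$; \eqref{eq:qqrhalf} is unaffected by this.
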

\begin{proof}

\eqref{eq:difference1}, \eqref{eq:step1} and \eqref{eq:step2} imply the desired \eqref{eq:OmegaQR}.

\eqref{eq:qqrhalf} is implied by \eqref{eq:half}.

The proof of Proposition \ref{prop:Iteration} is complete, provided that we prove \eqref{eq:step1} and \eqref{eq:step2}.
\end{proof}

\subsubsection{Proof of \eqref{eq:step1}}\label{subsub:step1}
To use the same notations as Theorem \ref{THM:secondOrder}, we set
\begin{align}\label{def:ABQ}
\begin{split}
    A&:=\Psi^{q_{+}}\otimes \Phi^{r_{+}},\\
    B&:=\Psi^{q_{-}}\otimes \Phi^{r_{-}},
\end{split}
\end{align} Obviously $A$ and $B$ commute: $AB=BA.$
And define
\begin{align*}
    Q_1&:=Q_{q_+,r_+},\\
    Q_2&:=Q_{q_-,r_-}.
\end{align*}
And let $P_1$ and $P_2$ be the unique matrices to make
\begin{align}\label{def:P1P2}
\begin{split}
    \frac{1}{2} \sum_{\pm}G_{q_+, r_+}(\pm\epsilon) &=\Psi^{q_{+}}\otimes \Phi^{r_{+}}+\epsilon^2 P_1+O(\epsilon^3),\\
    \frac{1}{2} \sum_{\pm}G_{q_-, r_-}(\pm\epsilon)&=\Psi^{q_{-}}\otimes \Phi^{r_{-}}+\epsilon^2 P_2+O(\epsilon^3).
\end{split}
\end{align}

The next result is slightly more general. \eqref{eq:Q1Q2} implies the desired \eqref{eq:step1} because $D_{\delta}(F)$ admits two equivalent formulations \eqref{def:Q+-} and \eqref{eqn:Q+-}.
\begin{proposition}\label{prop:step1}

Suppose that $A, B$ are positive definite, and they commute $$AB = BA.$$ Suppose that $P_1, P_2, Q_1, Q_2$ are Hermitian matrices, and $\epsilon \in \mathbb{R}$ satisfies $|\epsilon| \ll 1$.

Then,
\begin{align}\label{eq:P1P2Q1Q2}
\begin{split}
    &M_0(A + \epsilon^2 P_1 + \epsilon^2 Q_1, \ B + \epsilon^2 P_2 + \epsilon^2 Q_2) - M_0(A + \epsilon^2 P_1, \ B + \epsilon^2 P_2) \\
    =& M_0(A + \epsilon^2 Q_1, \ B) + M_0(A, \ B + \epsilon^2 Q_2) - 2M_0(A, B) + O(\epsilon^3).
\end{split}
\end{align}
For the terms on the right-hand side,
\begin{align}\label{eq:Q1Q2}
\begin{split}
M_0(A + \epsilon^2 Q_1, \ B) - M_0(A, B) =& \epsilon^2 \tilde{Q}_1 + O(\epsilon^4),\\
M_0(A, \ B + \epsilon^2 Q_2) - M_0(A, B) =&\epsilon^2 \tilde{Q}_2 + O(\epsilon^4),
\end{split}
\end{align}
where $\tilde{Q}_1$ and $\tilde{Q}_2$ are the unique solutions to the equations:
\[
(B^{-\frac{1}{2}} A B^{-\frac{1}{2}})^{\frac{1}{2}} \tilde{Q}_1 + \tilde{Q}_1 (B^{-\frac{1}{2}} A B^{-\frac{1}{2}})^{\frac{1}{2}} = Q_1,
\]
\[
(A^{-\frac{1}{2}} B A^{-\frac{1}{2}})^{\frac{1}{2}} \tilde{Q}_2 + \tilde{Q}_2 (A^{-\frac{1}{2}} B A^{-\frac{1}{2}})^{\frac{1}{2}} = Q_2.
\]
\end{proposition}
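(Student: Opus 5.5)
The plan is to treat $M_0$ as a smooth (real-analytic) function of its two positive definite arguments near $(A,B)$ and expand it. Write $F(U,W):=M_0(A+U,B+W)$ for small Hermitian $U,W$. By \eqref{def:formuMeans} and the integral representation of the square root (Lemma \ref{LM:integral}), $F$ is smooth near $(0,0)$. Hence
\[
F(U,W)=M_0(A,B)+DF_{(A,B)}[U,W]+O(\|U\|^2+\|W\|^2),
\]
where $DF$ is the Fréchet derivative at $(A,B)$. A first-order expansion suffices for \eqref{eq:P1P2Q1Q2}, since all perturbations there are of size $\epsilon^2$.

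For \eqref{eq:P1P2Q1Q2}, set $U_1=\epsilon^2(P_1+Q_1)$, $W_1=\epsilon^2(P_2+Q_2)$, $U_2=\epsilon^2 P_1$ and $W_2=\epsilon^2 P_2$. Then:
\begin{itemize}
\item The left-hand side equals $DF[\epsilon^2 Q_1,\epsilon^2 Q_2]+O(\epsilon^4)$, because the two terms $M_0(A,B)$ cancel and the quadratic remainders are $O(\epsilon^4)$.
\item By linearity, $DF[\epsilon^2Q_1,\epsilon^2Q_2]=DF[\epsilon^2Q_1,0]+DF[0,\epsilon^2Q_2]$.
\item Each of these pieces equals the corresponding difference $M_0(A+\epsilon^2Q_1,B)-M_0(A,B)$ or $M_0(A,B+\epsilon^2Q_2)-M_0(A,B)$, up to $O(\epsilon^4)$.
\end{itemize}
Summing gives the right-hand side with error $O(\epsilon^4)\subset O(\epsilon^3)$. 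Commutativity of $A,B$ is not needed for this part.

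For \eqref{eq:Q1Q2}, the task reduces to identifying the partial derivatives. I take the first identity; the second follows by the symmetry $M_0(V,W)=M_0(W,V)$, which holds since $M_0$ is defined via the symmetric block matrix. Use \eqref{def:formuMeans} with $W=B$ fixed:
\[
M_0(A+\epsilon^2Q_1,B)=B^{\frac12}\big(T+\epsilon^2 B^{-\frac12}Q_1B^{-\frac12}\big)^{\frac12}B^{\frac12},\qquad T:=B^{-\frac12}AB^{-\frac12}>0.
\]
The derivative of the square root at $T$ in direction $Y$ is the unique $S$ solving $T^{\frac12}S+ST^{\frac12}=Y$, obtained by differentiating $S(\epsilon)^2=T+\epsilon Y$. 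Hence the first-order term is $\epsilon^2 B^{\frac12}SB^{\frac12}$, with
\[
T^{\frac12}S+ST^{\frac12}=B^{-\frac12}Q_1B^{-\frac12}.
\]
The remainder is $O(\epsilon^4)$ because the expansion is analytic in $\epsilon^2$.

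It remains to convert this into the stated Sylvester equation for $\tilde Q_1$. This is where $AB=BA$ enters, and I expect it to be the only delicate point. Since $A,B$ commute, $T=AB^{-1}$ commutes with $B^{\pm\frac12}$, and so does $T^{\frac12}$. Set $\tilde Q_1:=B^{\frac12}SB^{\frac12}$. Multiplying the equation for $S$ by $B^{\frac12}$ on both sides and commuting $T^{\frac12}$ past $B^{\frac12}$ gives
\[
T^{\frac12}\tilde Q_1+\tilde Q_1T^{\frac12}=Q_1,
\]
which is the stated equation. Uniqueness holds because $T^{\frac12}>0$, so the Sylvester operator $S\mapsto T^{\frac12}S+ST^{\frac12}$ has spectrum in $(0,\infty)$ and is invertible, exactly as in Lemma \ref{LM:DdeltaM}.

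Finally, with the choices \eqref{def:ABQ}, $T^{\frac12}$ and its counterpart $(A^{-\frac12}BA^{-\frac12})^{\frac12}$ equal $\Psi^{\pm\delta_I}\otimes\Phi^{\mp\delta_I}=K_{\pm\delta_I}$, since $q_+-q_-=2\delta_I\cdot\frac12\cdot 2$ up to the paper's normalization. Matching signs and exponents in this identification is the step most likely to need care, and it yields $D_{\pm\delta_I}$ as claimed.
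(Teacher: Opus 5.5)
Your proof is correct and is essentially the paper's. Linearizing $M_0$ at $(A,B)$ makes the $P_j$-contributions cancel and the $Q_1$--$Q_2$ interaction $O(\epsilon^4)$, and \eqref{eq:Q1Q2} comes from differentiating the square root in $B^{\frac{1}{2}}\big(B^{-\frac{1}{2}}(A+\epsilon^2Q_1)B^{-\frac{1}{2}}\big)^{\frac{1}{2}}B^{\frac{1}{2}}$ through a Sylvester equation and conjugating by $B^{\frac{1}{2}}$ using $AB=BA$ (you cite \eqref{def:formuMeans} and the symmetry of $M_0$ where the paper rederives the formula from the block matrix, which changes nothing).

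Your closing remark, which lies outside the proposition, has a slip. With \eqref{def:ABQ} one has $q_+-q_-=\delta_I$, so $B^{-\frac{1}{2}}AB^{-\frac{1}{2}}=K_{\delta_I}$ and its square root is $K_{\delta_I/2}$. This gives $\tilde Q_1=D_{\delta_I/2}(Q_1)$ and $\tilde Q_2=D_{-\delta_I/2}(Q_2)$, not $D_{\pm\delta_I}$, and this halved step is what the paper's example \eqref{eq:exampleIter} and \eqref{def:Gammak} actually use.
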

\begin{proof}
We start with proving \eqref{eq:P1P2Q1Q2}.

Since \eqref{def:formuMeans} implies that the left-hand side of \eqref{eq:P1P2Q1Q2} is a smooth function of $\epsilon^2$, and because we are focusing on the $\epsilon^2$ terms, we only need to consider the linear contributions of $\epsilon^2 P_j$ and $\epsilon^2 Q_j$, for $j = 1, 2$. Consequently, the linear contributions from $\epsilon^2 P_1$ and $\epsilon^2 P_2$ cancel out, and the interaction between $\epsilon^2 Q_1$ and $\epsilon^2 Q_2$ is of order $O(\epsilon^4)$. By these we prove the desired result. 

The detailed proof, though straightforward, is somewhat lengthy, so we omit it.
    
Next, we prove the first equation in \eqref{eq:Q1Q2}; the proof of the second equation is almost identical, so we omit it.

Let $$H := M_0(A + \epsilon^2 Q_1, B),$$ i.e., $H$ is the maximal Hermitian matrix to make:
\[
\Pi := \begin{bmatrix} A + \epsilon^2 Q_1 & H \\ H & B \end{bmatrix} \geq 0
\]
Multiply both sides by $\begin{bmatrix} B^{-\frac{1}{2}} & 0 \\ 0 & B^{-\frac{1}{2}} \end{bmatrix}$ to find:
\begin{align*}
    0 &\leq \begin{bmatrix} B^{-\frac{1}{2}} & 0 \\ 0 & B^{-\frac{1}{2}} \end{bmatrix} \Pi \begin{bmatrix} B^{-\frac{1}{2}} & 0 \\ 0 & B^{-\frac{1}{2}} \end{bmatrix} = \begin{bmatrix} B^{-\frac{1}{2}} A B^{-\frac{1}{2}} + \epsilon^2 B^{-\frac{1}{2}} Q_1 B^{-\frac{1}{2}} & B^{-\frac{1}{2}} H B^{-\frac{1}{2}} \\ B^{-\frac{1}{2}} H B^{-\frac{1}{2}} & I \end{bmatrix}
\end{align*}
Since the matrix on the right hand side is positive semi-definite, we must have:
\begin{equation}\label{def:K}
B^{-\frac{1}{2}} H B^{-\frac{1}{2}} \leq (B^{-\frac{1}{2}} A B^{-\frac{1}{2}} + \epsilon^2 B^{-\frac{1}{2}} Q_1 B^{-\frac{1}{2}})^{\frac{1}{2}}=:K.
\end{equation}
And since $B$ is positive definite and $H$ is the maximal one,
\begin{equation}\label{eq:expandH}
H = B^{\frac{1}{2}} K B^{\frac{1}{2}} .
\end{equation}

To understand $H$, we need to study $K$ defined in \eqref{def:K}. By standard perturbation expansion, there exists a Hermitian matrix $Y_1$ such that
\begin{align}\label{eq:expand2}
    K&=  (B^{-\frac{1}{2}} A B^{-\frac{1}{2}})^{\frac{1}{2}} + \epsilon^2 Y_1 + O(\epsilon^4).
\end{align}
Take a square on both sides and use \eqref{def:K} to find
\begin{equation*}
  \left[ (B^{-\frac{1}{2}} A B^{-\frac{1}{2}})^{\frac{1}{2}} + \epsilon^2 Y_1 + O(\epsilon^4) \right]^2=K^2=B^{-\frac{1}{2}} A B^{-\frac{1}{2}} + \epsilon^2 B^{-\frac{1}{2}} Q_1 B^{-\frac{1}{2}}.
\end{equation*}
This yields an equation for $Y_1$:
\[
(B^{-\frac{1}{2}} A B^{-\frac{1}{2}})^{\frac{1}{2}} Y_1 + Y_1 (B^{-\frac{1}{2}} A B^{-\frac{1}{2}})^{\frac{1}{2}} = B^{-\frac{1}{2}} Q_1 B^{-\frac{1}{2}}.
\]

Return to \eqref{eq:expandH} and \eqref{eq:expand2}. Since $A$ and $B$ commute, $\tilde{Q}_1 := B^{\frac{1}{2}} Y_1 B^{\frac{1}{2}}$ satisfies the desired equation \eqref{eq:Q1Q2}:
\[
(B^{-\frac{1}{2}} A B^{-\frac{1}{2}})^{\frac{1}{2}} \tilde{Q}_1 + \tilde{Q}_1 (B^{-\frac{1}{2}} A B^{-\frac{1}{2}})^{\frac{1}{2}} = Q_1.
\]

\end{proof}

\subsubsection{Proof of \eqref{eq:step2}}\label{subsub:step2}
To simplify the notations, we let $A$, $B$, $P_1$, and $P_2$ be the same to those defined in \eqref{def:ABQ} and \eqref{def:P1P2}. We define $X$ and $Z$ as Hermitian matrices:
\begin{align*}
    X:=&V_{q_{+},\Psi}\otimes \Phi^{r_{+}}+ \Psi^{q_{+}}\otimes W_{r_{+},\Phi},\\
    Z:=&V_{q_{-},\Psi}\otimes \Phi^{r_{-}}+ \Psi^{q_{-}}\otimes W_{r_{-},\Phi}.
\end{align*}
By \eqref{eq:detailEpsilon},
\begin{align*}
G_{q_+, r_+}(\epsilon)-\frac{1}{2}  \sum_{\pm}G_{q_+, r_+}(\pm\epsilon)    =&  \epsilon X  + O(\epsilon^3), \\
G_{q_-, r_-}(\epsilon)-\frac{1}{2}  \sum_{\pm}G_{q_-, r_-}(\pm\epsilon)    =&  \epsilon  Z + O(\epsilon^3).
\end{align*}

The next proposition implies the desired \eqref{eq:step2}.
\begin{proposition}\label{prop:p1p2}
Suppose that $A, B$ are positive definite. $P_1$, $P_2$, $X$ and $Z$ are Hermitian. $\epsilon \in \mathbb{R}$ satisfies $|\epsilon| \ll 1$.
Then,
\begin{align}\label{eq:CrossExtra}
\begin{split}
    &\sum_{\pm}M_0(A + \epsilon^2 P_1  \pm \epsilon X, \ B + \epsilon^2 P_2 \pm \epsilon Z)  
    - 2 M_0(A + \epsilon^2 P_1, \ B + \epsilon^2 P_2) \\
    =& -2 \epsilon^2 \text{Cross}_{A,B}(X,Z) + O(\epsilon^3).
\end{split}
\end{align}
\end{proposition}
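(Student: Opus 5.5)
We reduce Proposition \ref{prop:p1p2} to Theorem \ref{THM:secondOrder} by viewing the left-hand side of \eqref{eq:CrossExtra} as a second difference. Set $A_\epsilon:=A+\epsilon^2P_1$, $B_\epsilon:=B+\epsilon^2P_2$, and for fixed $\epsilon$ consider the auxiliary function
\begin{equation*}
\mathcal{H}_\epsilon(\tau):=M_0(A_\epsilon+\tau X,\ B_\epsilon+\tau Z),\qquad |\tau|\ll 1 .
\end{equation*}
The left-hand side of \eqref{eq:CrossExtra} equals $\mathcal{H}_\epsilon(\epsilon)+\mathcal{H}_\epsilon(-\epsilon)-2\mathcal{H}_\epsilon(0)$. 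By \eqref{def:formuMeans}, $M_0(V,W)$ is real-analytic in $(V,W)$ near any pair of positive definite matrices. Hence $\mathcal{H}_\epsilon(\tau)$ is jointly smooth in $(\epsilon,\tau)$, and the odd powers of $\tau$ cancel in the symmetric difference. Taylor's theorem in $\tau$ then gives
\begin{equation*}
\mathcal{H}_\epsilon(\epsilon)+\mathcal{H}_\epsilon(-\epsilon)-2\mathcal{H}_\epsilon(0)=\epsilon^2\,\frac{d^2}{d\tau^2}\mathcal{H}_\epsilon(\tau)\Big|_{\tau=0}+O(\epsilon^4),
\end{equation*}
where the $O(\epsilon^4)$ is uniform because the fourth $\tau$-derivative is bounded on a neighborhood of $(A,B)$.

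Next, apply Theorem \ref{THM:secondOrder} with $(A,B)$ replaced by $(A_\epsilon,B_\epsilon)$, which are positive definite for $|\epsilon|$ small. This yields
\begin{equation*}
\frac{d^2}{d\tau^2}\mathcal{H}_\epsilon(\tau)\Big|_{\tau=0}=-2\,\text{Cross}_{A_\epsilon,B_\epsilon}(X,Z).
\end{equation*}
It remains to show $\text{Cross}_{A_\epsilon,B_\epsilon}(X,Z)=\text{Cross}_{A,B}(X,Z)+O(\epsilon^2)$, which is then absorbed into the $O(\epsilon^3)$ error after multiplying by $\epsilon^2$.

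The main obstacle is this continuity of $\text{Cross}$ in its base point. I would first check it directly from \eqref{def:PsiE1E2}--\eqref{def:Cross}: $\Psi$, $E_1$, $E_2$ depend smoothly on $(A,B)$ through $(\frac{A+B}{2})^{\pm\frac12}$, so each is perturbed by $O(\epsilon^2)$. The difficulty is the infinite $s$-integrals and the series $\widetilde{\Psi}(s)$. Since $\|\Psi\|<1$ strictly by \eqref{eq:shrink}, this bound persists with a uniform gap $\|\Psi_\epsilon\|\le \theta<1$ for small $\epsilon$. Consequently $\|\widetilde\Psi(s)\|\le (1-\theta^2)^{-1}$ uniformly in $s\ge0$, and the perturbation of $\widetilde\Psi$ is bounded by $C\epsilon^2$ uniformly in $s$, by differentiating the geometric series term by term. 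The integrands are dominated by $C s^{-\frac12}(1+s)^{-1}$, which is integrable. Dominated convergence, or simply the explicit bound, then gives the $O(\epsilon^2)$ difference.

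Alternatively, one can avoid analyzing $\text{Cross}$ by noting that $\text{Cross}_{A,B}(X,Z)=-\frac12\partial_\tau^2 M_0(A+\tau X,B+\tau Z)|_{\tau=0}$ is itself a smooth function of $(A,B)$, by analyticity of \eqref{def:formuMeans}. This gives the $O(\epsilon^2)$ bound immediately and completes the proof of \eqref{eq:CrossExtra}.
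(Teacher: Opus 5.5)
Your argument is correct and is essentially the paper's: both reduce \eqref{eq:CrossExtra} to Theorem \ref{THM:secondOrder} and handle the $\epsilon^2 P_j$ terms through the smoothness of $M_0$ given by \eqref{def:formuMeans}. The only difference is bookkeeping. You apply the theorem at the shifted base point $(A+\epsilon^2P_1,\,B+\epsilon^2P_2)$ and then use that $-\frac12\partial_\tau^2 M_0(A+\tau X,B+\tau Z)|_{\tau=0}$ depends smoothly on $(A,B)$; this spells out the ``cancellation'' step the paper omits and even gives an $O(\epsilon^4)$ remainder.
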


\begin{proof}
Theorem \ref{THM:secondOrder} implies that
\begin{align}
\begin{split}
&\sum_{\pm}M_0(A \pm \epsilon X, \ B \pm \epsilon Z)  - 2 M_0(A, B)
= -2 \epsilon^2 \text{Cross}_{A,B}(X,Z) + O(\epsilon^3).
\end{split}
\end{align}

The remaining step is to note that, as \eqref{def:formuMeans} implies the left-hand side of \eqref{eq:CrossExtra} is a smooth function of $\epsilon$, the contributions from the terms $\epsilon^2 P_1$ and $\epsilon^2 P_2$ are of order $O(\epsilon^3)$ due to cancellations. The detailed proof is straightforward but tedious, and is therefore omitted.

\end{proof}


\subsection{Proof of Theorem \ref{MainTheorem:concavity}}\label{sub:concavity}
\eqref{eq:expreFPrime} and \eqref{eq:estiX} can be obtained by a standard induction process using the iteration relation \eqref{eq:OmegaQR}, by which all the terms in $Q_{q,r}$ can be traced back to some $\text{Source}_{q_0,r_0}$.

For the induction, we provided example in \eqref{eq:exampleIter} to illustrate the idea of proving \eqref{eq:expreFPrime}. For \eqref{eq:estiX}, we use the same example and apply \eqref{eq:half} to find, for some $C=C(\Psi,\Phi)>0$,
\begin{align}
\|Y_{\frac{5}{8},\frac{1}{2},p}-\frac{3}{4}\text{Source}_{\frac{p}{2},\frac{p}{2}}\|\leq C \|K_{\frac{p}{4}}-I\|.
\end{align}

Based on the example, proving \eqref{eq:expreFPrime} and \eqref{eq:estiX} is straightforward. We choose to skip it.

\eqref{eq:limitDeri}-\eqref{eq:Lieb} were proved in the context.

The proof is complete.

\section{Joint convexity of the relative entropy}\label{sec:relative}
When $A$ and $B$ are positive definite $N\times N$ density matrices, we are interested the joint convexity in the following mapping:
\begin{align}\label{def:relative}
    (A,B)\rightarrow S(A|B) &:= \text{Tr}[A \log_2 A] - \text{Tr}[A \log_2 B].
\end{align} 

We begin with making Theorem \ref{MainTheorem:concavity} applicable.

As in the previous section, we follow Ando's method in \cite{ANDO1979203}, see also \cite{carlen2010trace}: for some $ N^2$-dimensional vector $V_{I}$,
\begin{equation}\label{def:conditional}
S(A|B) = \left\langle V_I, F(A,\overline{B}) V_I \right\rangle
\end{equation}
with $F(A,\overline{B})$ defined as $$F(A, \overline{B}) := A \log_2 A \otimes I - A \otimes \log_2 \overline{B}.$$

To make Theorem \ref{MainTheorem:concavity} applicable, we find 
\begin{equation}\label{limit:FnAB}
    F(A,\overline{B})=-\log_2 e\ \lim_{k\rightarrow +\infty} \frac{F_{k}(A,\overline{B})-F_{0}(A,\overline{B})}{\delta_{k}}
\end{equation}
where $\delta_{k}>0$ is a scalar defined as $$\delta_{k}:=2^{-k},$$ and $F_k(A,\overline{B})$ is a positive definite matrix defined as
$$F_{k}(A,\overline{B}):=A^{1-\delta_k} \otimes \overline{B}^{\delta_k}.$$

We will apply Theorem \ref{MainTheorem:concavity} to derive an expression for $\frac{d^2}{dt^2}F_k\big(tA_1+(1-t)A_2,t\overline{B_1}+(1-t)\overline{B_2}\big)$. This implies the desired $\frac{d^2}{dt^2}F\big(tA_1+(1-t)A_2,t\overline{B_1}+(1-t)\overline{B_2}\big)$ through the identity
\begin{align}\label{eq:FtA}
\begin{split}
    &\frac{d^2}{dt^2} F\big(tA_1+(1-t)A_2,t\overline{B_1}+(1-t)\overline{B_2}\big)\\
    =&-(\log_2 e)\lim_{k\rightarrow +\infty} \frac{1}{\delta_k}\frac{d^2}{dt^2} F_{k}\big(tA_1+(1-t)A_2,t\overline{B_1}+(1-t)\overline{B_2}\big), 
\end{split}
\end{align} where we used that $\frac{d^2}{dt^2}F_{0}\big(tA_1+(1-t)A_2,t\overline{B_1}+(1-t)\overline{B_2}\big)=0.$

To make the application of Theorem \ref{MainTheorem:concavity} transparent, we adopt the same notations by fixing a $t_0 \in [0, 1]$ and denoting
\begin{align}\label{def:PsiPhiVW}
\begin{split}
  \Psi &= t_0 A_1 + (1 - t_0) A_2, \\
    \Phi &= t_0 \overline{B_1} + (1 - t_0) \overline{B_2}, \\
    V &= A_1 - A_2, \\
    W &= \overline{B_1} - \overline{B_2}.
\end{split}
\end{align}
Corresponding to the definition of $G_{q, r}$ in \eqref{eq:ReforConcave}, we define
\begin{align}\label{eq:defGk}
\begin{split}
G_k(\epsilon) :=& (\Psi + \epsilon V)^{1-\delta_k} \otimes (\Phi + \epsilon W)^{\delta_k}\\
=&F_{k}\big((t_0+\epsilon)A_1+(1-t_0-\epsilon)A_2,\ (t_0+\epsilon)\overline{B_1}+(1-t_0-\epsilon)\overline{B_2}\big).
\end{split}
\end{align} 

By the condition \eqref{eq:closeness} and the result \eqref{eq:expreFPrime},  $G^{''}_k(0)$ is generated by the following terms: $\text{Source}_{0,1}$, $\text{Source}_{1,0}$ and 
\begin{equation}
\text{Source}_l:=\text{Source}_{1-\delta_l,\delta_l},\ \text{with}\ l=1,2,\cdots,k.
\end{equation}
To make our problem significantly easier we observe that, by \eqref{eq:p1},
\begin{equation}\label{eq:0source} 
\text{Source}_{0,1} = \text{Source}_{1, 0} = 0 .
\end{equation}

By \eqref{def:sourceQR}, when $l\geq 1$, $\text{Source}_{l }$ is positive semi-definite and defined as, 
\begin{align}
\begin{split}
&\text{Source}_{l }
:= \text{Cross}_{\Psi^{1-\delta_{l-1} } \otimes \ \Phi^{\delta_{l-1} }, \ \Psi \otimes I} 
\left( V_{1-\delta_{l-1},\Psi } \otimes \Phi^{\delta_{l-1} } + \Psi^{1-\delta_{l-1}} \otimes W_{\delta_{l-1},\Phi }, \ V \otimes I  \right),
\end{split}
\end{align} where, recall that $\delta_{l}:=2^{-l},$ and $V_{p,\Psi}$ and $W_{q,\Phi}$ are the coefficients of the $\epsilon-$term in the expansion of $(\Psi+\epsilon V)^p$ and $(\Phi+\epsilon W)^q$ and they are defined in \eqref{eq:baseEqui}.
\eqref{eq:estOmegaQR} implies that, in the norm defined in \eqref{def:Norm}, for some $C=C(\Psi,\Phi, V,W)$,
\begin{equation}\label{eq:sourcel}
\| \text{Source}_{l} \| \leq C 2^{-2l}.
\end{equation}

Before applying \eqref{eq:expreFPrime} in Theorem \ref{MainTheorem:concavity}, we define 
\begin{align}\label{def:Gammak}
\begin{split}
\Gamma_k(t_0) :=& \text{Source}_k + D_{-2^{-k}}(\text{Source}_{k-1})\\
& + D_{-2^{-k}}(D_{-2^{-k+1}}(\text{Source}_{k-2})) + \dots + D_{-2^{-k}}(D_{-2^{-k+1}} \dots (D_{-\frac{1}{4}} \text{Source}_1)).
\end{split}
\end{align} Because of \eqref{eq:0source} we can ignore $\text{Source}_{1,0}$ and $\text{Source}_{0,1}$.
Recall that $D_{\delta}$ is an operator defined in \eqref{def:Q+-}, and $D_{\delta}(T)\geq 0$ if $T\geq 0;$ and we displayed the dependence of $\Gamma_k$ on $t_0$, see \eqref{def:PsiPhiVW}.

The main result for this section is the following: 
\begin{theorem}\label{MainTHM:convex}
By \eqref{eq:expreFPrime} in Theorem \ref{MainTheorem:concavity} and \eqref{eq:0source},
\begin{equation}
  G_{k}^{''}(0)=-2\Gamma_{k}(t_0)\leq 0.
\end{equation}  

There exists a positive semi-definite matrix $\Gamma(t_0)$ such that $2^{k} \Gamma_k(t_0)$ converges rapidly to $\Gamma(t_0)$: for some constant $C=C(\Psi,\Phi, V, W)$,
\begin{equation}\label{eq:limitGamma}
\|2^k \Gamma_k(t_0) - \Gamma(t_0)\|\leq C 2^{-k}.
\end{equation}
By \eqref{eq:FtA} and the second identity in \eqref{eq:defGk},
\begin{equation}\label{eq:GammaT}
2 (\log_2 e)\Gamma(t)  = \frac{d^2}{dt^2} F\big(tA_1+(1-t)A_2, tB_1+(1-t)B_2\big) 
\end{equation} and thus, by \eqref{def:conditional} and \eqref{eq:concavity},
\begin{align}
\begin{split}
  S\big(tA_1+(1-t)A_2\ \big|\ tB_1+(1-t)B_2\big)=&tS\big(A_1\ \big| B_1\big)+(1-t)S\big(A_2\ \big| B_2\big)\\
  &-2(\log_2 e) t\int_{0}^{1} \int^{\lambda}_{t\lambda} \langle V_{I}, \Gamma(\sigma) V_{I}\rangle \ d\sigma d\lambda.
\end{split}
\end{align}

\end{theorem}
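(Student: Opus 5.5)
The plan is to specialize Theorem \ref{MainTheorem:concavity} to the pair $(q,r)=(1-\delta_k,\delta_k)$ with $p=1$, and to exploit \eqref{eq:0source} to prune the tree of contributions. Writing $1-\delta_k=\frac{2^k-1}{2^k}$, the dyadic ancestors $\frac{l_0}{2^{k_0}}$ satisfying \eqref{eq:closeness} are $1$ and $1-\delta_{m}$ for $1\le m\le k-1$. The value $0$ cannot occur, since $|0-(1-\delta_k)|<2^0$ holds, but $0$ only feeds $\text{Source}_{0,1}=0$.

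The first step is to track the iteration \eqref{eq:OmegaQR} directly rather than through the sets $\Omega_{q,q_0}$. For the odd numerator $l=2^k-1$ at level $I+1=k$, the neighbours are $q_+=1$ and $q_-=1-\delta_{k-1}$. Proposition \ref{prop:Iteration} then gives
\[
Q_{1-\delta_k,\delta_k}=D_{\delta_{k-1}}(Q_{1,0})+D_{-\delta_{k-1}}(Q_{1-\delta_{k-1},\delta_{k-1}})+\text{Source}_k .
\]
Here $Q_{1,0}=\text{Source}_{1,0}=0$ by \eqref{eq:p1}, and $\delta_I=p2^{-I}$ with $I=k-1$. Unrolling this recursion down to $Q_{1/2,1/2}$ yields a nested composition of operators $D_{-\delta}$ applied to the $\text{Source}_m$. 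The base step from $Q_{1,0}$ and $Q_{0,1}$, both zero, produces $\text{Source}_1$ alone. Identifying the resulting sum with $\Gamma_k(t_0)$ in \eqref{def:Gammak} then gives $G_k''(0)=-2Q=-2\Gamma_k(t_0)\le 0$. I would carefully reconcile the index convention for the $D$ subscripts, since the recursion produces $D_{-2^{-(k-1)}}$ where \eqref{def:Gammak} writes $D_{-2^{-k}}$. This is a bookkeeping check.

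The second step, convergence of $2^k\Gamma_k$, is the main obstacle. The recursion reads
\[
\Gamma_k=\text{Source}_k+D_{-\delta_{k-1}}(\Gamma_{k-1}),
\]
so that
\[
2^k\Gamma_k=2^k\text{Source}_k+2D_{-\delta_{k-1}}(2^{k-1}\Gamma_{k-1}).
\]
By \eqref{eq:half}, $2D_{-\delta}=\mathrm{Id}+O(\delta)$ in operator norm. By \eqref{eq:sourcel}, $2^k\|\text{Source}_k\|\le C2^{-k}$. Boundedness of $2^k\Gamma_k$ follows by induction, since the products $\prod_j(1+C2^{-j})$ are uniformly bounded. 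The consecutive differences then satisfy $\|2^k\Gamma_k-2^{k-1}\Gamma_{k-1}\|\le C2^{-k}$, so the sequence is Cauchy with a geometric tail. This gives a limit $\Gamma(t_0)\ge 0$, being a limit of positive semidefinite matrices, together with \eqref{eq:limitGamma}.

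The remaining worry is that the crude bound $\|\text{Source}_k\|\le C2^{-2k}$ from \eqref{eq:estOmegaQR} really is uniform in $k$ here. It holds because $A,B$ in \eqref{eq:ABXZ} stay in a compact set of positive definite matrices.

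Finally, I would pass to the limit in \eqref{eq:FtA}. We have
\[
\frac{1}{\delta_k}G_k''(0)=-2\cdot 2^k\Gamma_k(t_0)\to-2\Gamma(t_0),
\]
so the second derivative of $F$ equals $2\log_2 e\,\Gamma(t)$. Interchanging the limit with two $t$-derivatives is justified by local uniform convergence of the analytic functions in $t$, or by direct uniformity of the estimates above in $t_0$. Pairing with $V_I$ via \eqref{def:conditional} and inserting into \eqref{eq:concavity} of Lemma \ref{LM:concavity} gives the stated identity for $S$, with the sign arranged so that the remainder exhibits convexity.
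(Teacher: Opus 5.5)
Your proposal is correct and follows the paper's argument. The paper proves \eqref{eq:limitGamma} from two facts: $D_{-2^{-l}}$ is half the identity up to $O(2^{-l})$, and $\|\text{Source}_l\|=O(2^{-2l})$. You make this precise through the recursion $2^k\Gamma_k=2^k\text{Source}_k+2D_{-2^{-k}}(2^{k-1}\Gamma_{k-1})$ and a Cauchy estimate, and you take the remaining claims from the iteration and Lemma~\ref{LM:concavity}, as the paper does.

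On the index check you flagged, Proposition~\ref{prop:step1} settles it. At level $k$ (with $p=1$) the Sylvester operators are $(B^{-1/2}AB^{-1/2})^{1/2}=K_{2^{-k}}$ and $(A^{-1/2}BA^{-1/2})^{1/2}=K_{-2^{-k}}$. So \eqref{def:Gammak} and \eqref{eq:exampleIter} carry the correct subscripts, the $\delta_I$ in \eqref{eq:OmegaQR} should read $p2^{-(I+1)}$, and your estimate is unaffected.
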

\begin{proof}
Here we only need to prove \eqref{eq:limitGamma}. The other ones were proven in the context.

For $\Gamma_{k}$, we proved in \eqref{eq:half} that, for any Hermitian $T$, up to a correction of order $O(2^{-l})$, $D_{-2^{-l}}(T)$ is half of $T.$ Moreover $\text{Source}_{l}$ is of order $O(2^{2l})$ by \eqref{eq:sourcel}.

This implies the desired \eqref{eq:limitGamma}.

\end{proof}

\section{Strong subadditivity of quantum entropy}\label{sec:Mutual}
Suppose that $\rho_{ABC}$ is a positive-definite density matrix on the tensor product space $\mathcal{H}_A \otimes \mathcal{H}_B \otimes \mathcal{H}_C$.
We are interested in studying $S(\rho_{ABC} \mid \rho_{AB} \otimes \rho_C)-S(\rho_{BC} \mid \rho_{B} \otimes \rho_C)$, 
where $S(\rho_1|\rho_2)$, for any density matrices $\rho_1$ and $\rho_2$, is defined in \eqref{def:relative}, $\rho_{AB}$ and $\rho_B$ are defined as partial traces of $\rho_{ABC}:$
\begin{align*}
    \rho_{AB} &:= \text{Tr}_C \, \rho_{ABC}, \\
    \rho_B &:= \text{Tr}_{AB} \, \rho_{ABC},
\end{align*} and $\rho_{BC}$ and $\rho_{B}$ are defined similarly. 

It was proved by Lieb and Ruskai in \cite{LiebRuskai1973} that 
\begin{align}
    I(A:C|B)_\rho=S(\rho_{ABC} \mid \rho_{AB} \otimes \rho_C)- S(\rho_{BC} \mid \rho_{B} \otimes \rho_C)\geq 0.
\end{align} Recall the definition of quantum conditional mutual information $I(A:C|B)_\rho$ in \eqref{def:mutual}.

Before stating our result, we make some preparations to make Theorem \ref{MainTHM:convex} applicable.
We convert $\rho_{ABC}$ into $\frac{1}{\dim(\mathcal{H}_A)} I \otimes \rho_{BC}$ by applying a trace-preserving, completely positive map. While various methods could achieve this, such as integration using an appropriate Haar measure, we opt for a discrete approach in this paper, prioritizing a rapid solution over computational efficiency.

\begin{proposition}\label{prop:completelyPostive}
  
For any $N$, there exist finitely many unitary matrices
$$U_1, \dots, U_J $$ s.t. for any $ N \times N$ Hermitian matrix $A,$

$$\frac{1}{J} \sum_{k=1}^J U_k A U_k^* = \left( \frac{1}{N} \text{Tr} A \right) I_{N \times N}.$$

The map $A \rightarrow\frac{1}{J} \sum_{k=1}^J U_k A U_k^*$ is Trace-preserving, completely positive.
\end{proposition}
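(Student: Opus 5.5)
Take the $J=N^2$ discrete Weyl (generalized Pauli) operators. Let $\omega=e^{2\pi i/N}$ and let $\{e_0,\dots,e_{N-1}\}$ be the standard basis, with indices read mod $N$. Define the shift $S e_j=e_{j+1}$ and the clock $C e_j=\omega^j e_j$. Both are unitary, and so is every product $W_{a,b}:=S^aC^b$ for $a,b\in\{0,\dots,N-1\}$. We set $U_k$ to run over these $N^2$ matrices, and we prove the averaging identity on the matrix units $E_{ij}=e_ie_j^*$, extending to all $A$ by linearity. Hermiticity of $A$ plays no role.

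First compute $C^bE_{ij}C^{-b}=\omega^{b(i-j)}E_{ij}$. Averaging over $b$ gives $\frac1N\sum_b\omega^{b(i-j)}=\delta_{ij}$, the orthogonality of characters of $\mathbb Z_N$. So the clock average keeps only the diagonal part: $\frac1N\sum_b C^bAC^{-b}=\mathrm{diag}(A_{00},\dots,A_{N-1,N-1})$.

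Next, $S^aE_{ii}S^{-a}=E_{i+a,i+a}$. Averaging over $a$ gives $\frac1N\sum_a E_{i+a,i+a}=\frac1N I$ for every $i$. Composing the two averages,
$$\frac1{N^2}\sum_{a,b}S^aC^bAC^{-b}S^{-a}=\frac1N\sum_i A_{ii}\, I=\Big(\frac1N\text{Tr}A\Big)I .$$
The indexing of the double sum matches $U_k=S^aC^b$ with $U_k^*=C^{-b}S^{-a}$, which proves the identity.

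For the second claim, write the map as $\mathcal E(A)=\sum_k K_kAK_k^*$ with Kraus operators $K_k=J^{-1/2}U_k$. Any map of this Kraus form is completely positive, because $(\mathcal E\otimes\mathrm{id})(T)=\sum_k(K_k\otimes I)T(K_k\otimes I)^*\ge0$ whenever $T\ge0$. It is trace preserving because
$$\sum_kK_k^*K_k=\frac1J\sum_k U_k^*U_k=I,$$
or directly from the cyclicity of the trace, $\text{Tr}(U_kAU_k^*)=\text{Tr}A$.

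No step is a real obstacle. The only point needing care is the root-of-unity sum $\sum_{b=0}^{N-1}\omega^{bm}=N\delta_{m\equiv0}$, a finite geometric series. It is also worth noting what this gives in the tensor setting used later: applying $\frac1J\sum_k(U_k\otimes I)\,\cdot\,(U_k\otimes I)^*$ on $\mathcal H_A\otimes\mathcal H_{BC}$ turns $\rho_{ABC}$ into $\frac{1}{\dim\mathcal H_A}I\otimes\rho_{BC}$. This follows by applying the identity blockwise, since the map acts on the $A$-factor of each $E_{ij}\otimes Y$.
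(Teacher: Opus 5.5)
Your argument is correct and follows the same two-stage structure as the paper's proof: average over diagonal unitaries to kill the off-diagonal entries, then over permutation matrices to equalize the diagonal. The difference is only in the choice of unitaries. You use the $N$ clock powers $C^b$ in place of the $2^N$ sign matrices $K_{T_l}$, and the $N$ cyclic shifts $S^a$ in place of all $N!$ permutation matrices $D_P$; transitivity of the shifts is all the second stage needs. This gives $J=N^2$ instead of the paper's $J=2^N N!$. Moreover, $N^2$ is the minimum possible: the Choi matrix of $A\mapsto(\frac1N\text{Tr}A)I$ is $\frac1N I\otimes I$, which has rank $N^2$, so no Kraus representation with fewer terms exists.
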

The proposition will be proved in subsection \ref{sub:completePo}.

We are ready to formulate our main result, which is Main Theorem \ref{MainTHM:Mutual} below.

Suppose that the subsystem $A$ is $N$-dimensional. For any density matrix $\rho_{ABC}$, we apply the mapping $\frac{1}{J} \sum_{k=1}^{J} U_k\ (\cdot)\ U_k^*$ on the subsystem $A$, to obtain
\begin{equation}
    \frac{1}{J} \sum_{k=1}^{J} U_k \rho_{ABC} U_k^* = \frac{1}{N} I_A \otimes \rho_{BC}.
\end{equation}

A useful observation is that
\begin{align}\label{eq:identityFJ}
    F_J:=S\left( \frac{1}{J} \sum_{k=1}^{J} U_k \rho_{ABC} U_k^* \ \bigg|\ \frac{1}{J} \sum_{k=1}^{J} U_k (\rho_{AB}\otimes \rho_{C}) U_k^* \right) &=  S\left( \rho_{BC} \ \bigg|\  \rho_B \otimes \rho_{C} \right) .
\end{align}

Next, we derive an expression for $F(J)-S(\rho_{ABC} \big| \rho_{AB}\otimes \rho_{C})$ by iteration.

To establish the iteration relation, we define, for any integer $K\in [2, J]$,
\begin{equation}\label{def:FK}
  F_K:=S\Big( \frac{1}{K} \sum_{k=1}^{K} U_k \rho_{ABC} U_k^* \ \Big|\ \frac{1}{K} \sum_{k=1}^{K} U_k (\rho_{AB} \otimes \rho_{C}) U_k^* \Big).
\end{equation}

To make the application of Theorem \ref{MainTHM:convex} transparent, we adopt the same notations by defining objects corresponding to those in \eqref{eq:FtA}:
\begin{align}\label{eq:choices}
\begin{split}
  A_1:=&U_K \rho_{ABC} U_K^*\\
  A_2:=&\frac{1}{K-1} \sum_{k=1}^{K-1} U_k \rho_{ABC} U_k^*\\
  B_1:=&U_K \rho_{AB}\otimes \rho_{C} U_K^*\\
  B_2:=&\frac{1}{K-1} \sum_{k=1}^{K-1} U_k \rho_{AB}\otimes \rho_C U_k^*.
\end{split}
\end{align}
Define a function
\begin{equation}
    F_K(t) := S\left( t A_1 + (1-t) A_2 \ \big|\ t B_1 + (1-t) B_2 \right).
\end{equation} 
Here we are interested in the case $t=\frac{1}{K}$, which makes
\begin{equation}
  F_{K}(\frac{1}{K})=F_{K}.
\end{equation}

Recall the definition of $\Gamma$ in \eqref{eq:GammaT}. To display its dependence on $K$, we use the notation $\Gamma_{K}$. And we define
\begin{equation}
\tilde{\Gamma}_{K}(t):=2(\log e) \big\langle V_{I},\ \Gamma_{K}V_{I}\big\rangle.
\end{equation}
By Theorem \ref{MainTHM:convex}, \eqref{eq:concavity} and that
for any integer $k\in [1,J]$,
\begin{equation*}
  S\big(  U_k \rho_{ABC} U_k^* \ \big|\  U_k (\rho_{AB} \otimes \rho_{C}) U_k^* \big)=S\big(   \rho_{ABC}  \ \big|\  \rho_{AB} \otimes \rho_{C}  \big),
\end{equation*}
we find the following iteration relation:
\begin{lemma} For any integer $K\in [2,J]$,
  \begin{align}\label{eq:FkIterate}
    F_{K}=\frac{1}{K} S(\rho_{ABC} \, | \, \rho_{AB} \otimes \rho_{C})+(1-\frac{1}{K})F_{K-1}-\frac{1}{K} \int_0^1 \int_{\frac{\lambda}{K}}^{\lambda} \tilde{\Gamma}_K(\sigma) \, d\sigma \, d\lambda.
  \end{align}
\end{lemma}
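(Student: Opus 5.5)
The plan is to apply the interpolation identity \eqref{eq:concavity} of Lemma \ref{LM:concavity} to the scalar function $g(t):=F_K(t)=S(tA_1+(1-t)A_2\,|\,tB_1+(1-t)B_2)$ on $[0,1]$, with $A_1,A_2,B_1,B_2$ chosen as in \eqref{eq:choices}, and then to evaluate it at $t=\frac{1}{K}$. First I would check that $g$ is smooth on $[0,1]$. Since $\rho_{ABC}$ is full rank, $A_1,A_2$ are positive definite. Also $\rho_{AB}\otimes\rho_C$ is positive definite, so every convex combination of the $A_j$'s, and of the $B_j$'s, is positive definite. Hence $A\log_2 A$ and $\log_2 B$ depend analytically on $t$ along these segments, and $g$ is smooth.

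Next I would identify the three ingredients of \eqref{eq:concavity}:
\begin{align*}
g(1)&=S\big(U_K\rho_{ABC}U_K^*\,\big|\,U_K(\rho_{AB}\otimes\rho_C)U_K^*\big)=S(\rho_{ABC}\,|\,\rho_{AB}\otimes\rho_C),\\
g(0)&=F_{K-1},\\
g\big(\tfrac1K\big)&=F_K.
\end{align*}
The first identity is the unitary invariance of relative entropy, since $\log(UBU^*)=U(\log B)U^*$ and the trace is cyclic. The second follows from the definition \eqref{def:FK}; when $K=2$, $F_1$ is read as $S(U_1\rho_{ABC}U_1^*|U_1(\rho_{AB}\otimes\rho_C)U_1^*)$, and the formula still holds. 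The third uses
\[
\tfrac1K A_1+\big(1-\tfrac1K\big)A_2=\tfrac1K\sum_{k=1}^K U_k\rho_{ABC}U_k^*,
\]
together with the analogous identity for the $B$'s.

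Then I would express $g''$. By the Ando representation \eqref{def:conditional}, $g(t)=\langle V_I, F(tA_1+(1-t)A_2,\overline{tB_1+(1-t)B_2})V_I\rangle$. Here the conjugation is handled exactly as in Section \ref{sec:relative}: replace $B_j$ by $\overline{B_j}$ in \eqref{def:PsiPhiVW}. By \eqref{eq:GammaT} of Theorem \ref{MainTHM:convex}, applied to this pair of segments, $g''(\sigma)=2(\log_2 e)\langle V_I,\Gamma_K(\sigma)V_I\rangle=\tilde\Gamma_K(\sigma)$, where $\Gamma_K$ denotes the $\Gamma$ built from the present $A_j,B_j$. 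Substituting into \eqref{eq:concavity} with $t=\frac1K$ gives
\[
F_K=\Big(1-\tfrac1K\Big)F_{K-1}+\tfrac1K S(\rho_{ABC}|\rho_{AB}\otimes\rho_C)-\tfrac1K\int_0^1\int_{\lambda/K}^{\lambda}\tilde\Gamma_K(\sigma)\,d\sigma\,d\lambda,
\]
which is \eqref{eq:FkIterate}.

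The main obstacle is legitimately invoking Theorem \ref{MainTHM:convex} here. That theorem is phrased for density matrices through the limit \eqref{limit:FnAB}, and its proof interchanges the $k\to\infty$ limit with two $t$-derivatives. I would justify this interchange by the uniform rapid convergence \eqref{eq:limitGamma}, whose constant depends continuously on $(\Psi,\Phi,V,W)$ and is therefore uniform for $t_0\in[0,1]$ because positivity holds on the compact segment. Alternatively, I would verify the interchange directly from the analyticity of $s\mapsto A^{1-s}\otimes\overline B^{s}$ near $s=0$. I would also make sure that the $\log e$ normalization in $\tilde\Gamma_K$ matches the $2\log_2 e$ factor in \eqref{eq:GammaT}.
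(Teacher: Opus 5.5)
Your proposal is correct and follows the paper's argument: apply \eqref{eq:concavity} to $F_K(t)$ at $t=\frac{1}{K}$, use unitary invariance for $F_K(1)$ and the definition for $F_K(0)=F_{K-1}$ (with $F_1$ read in the natural way when $K=2$), and take $F_K''=\tilde\Gamma_K$ from \eqref{eq:GammaT}. Your extra remarks on smoothness and on interchanging the $k\to\infty$ limit with the $t$-derivatives go beyond what the paper writes. The cleanest way to justify that interchange is the joint analyticity of $(s,t)\mapsto\Psi_t^{1-s}\otimes\Phi_t^{s}$, and these remarks only shore up a step the paper takes for granted.
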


We are ready to state the main theorem:
For $2 \le K < J$, define
\begin{equation}
    \Delta_K := \frac{1}{K+1}  \int_{0}^{1} \int_{\frac{\lambda}{K}}^{\lambda} \tilde{\Gamma}_{K}(\sigma) \, d\sigma \, d\lambda\geq 0.
\end{equation} Recall the identity in \eqref{eq:identityFJ}. Iterate \eqref{eq:FkIterate} to find the following version of strong subadditivity of quantum entropy:
\begin{theorem}\label{MainTHM:Mutual}
  \begin{align}
      S( \rho_{BC} \ |\  \rho_B \otimes \rho_C )=S(\rho_{ABC} \, | \, \rho_{AB}\otimes \rho_C)-\frac{1}{J} \int_{0}^{1} \int_{\frac{\lambda}{J}}^{\lambda} \tilde{\Gamma}_J(\sigma) \, d\sigma \, d\lambda  - \sum_{K=2}^{J-1} \Delta_K.
  \end{align}
Equivalently, by the definition of mutual information in \eqref{def:mutual},
\begin{equation}\label{eq:mutual}
  I(A:C|B)_{\rho}=\frac{1}{J} \int_{0}^{1} \int_{\frac{\lambda}{J}}^{\lambda} \tilde{\Gamma}_J(\sigma) \, d\sigma \, d\lambda  + \sum_{K=2}^{J-1} \Delta_K\geq 0.
\end{equation}
\end{theorem}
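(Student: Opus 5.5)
The plan is to unroll the one-step relation \eqref{eq:FkIterate} from $K=J$ down to $K=2$, use \eqref{eq:identityFJ} to identify $F_J$, and rewrite the left-hand side of the resulting identity as the conditional mutual information. Write $S_0:=S(\rho_{ABC}\,|\,\rho_{AB}\otimes\rho_C)$ and, for $2\le K\le J$, $R_K:=\frac1K\int_0^1\int_{\lambda/K}^{\lambda}\tilde\Gamma_K(\sigma)\,d\sigma\,d\lambda$. Each $R_K\ge0$: by Theorem \ref{MainTHM:convex}, $\Gamma_K(\sigma)\ge 0$, so $\tilde\Gamma_K\ge0$, and the inner integral runs over $[\lambda/K,\lambda]$ with $\lambda/K\le\lambda$. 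Relation \eqref{eq:FkIterate} then reads $F_K=\frac1K S_0+\frac{K-1}{K}F_{K-1}-R_K$.

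\textbf{Step 1 (normalization).} Set $G_K:=K F_K$. Multiplying the one-step relation by $K$ gives $G_K=S_0+G_{K-1}-KR_K$, which telescopes to
\[
G_J=G_1+(J-1)S_0-\sum_{K=2}^{J}KR_K .
\]
\textbf{Step 2 (base case).} Here $F_1=S(U_1\rho_{ABC}U_1^*\,|\,U_1(\rho_{AB}\otimes\rho_C)U_1^*)=S_0$ by unitary invariance, which the paper already notes. Hence $G_1=S_0$ and, after dividing by $J$,
\[
F_J=S_0-\frac1J\sum_{K=2}^{J}KR_K .
\]

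\textbf{Step 3 (identify $F_J$).} Substituting \eqref{eq:identityFJ}, $F_J=S(\rho_{BC}\,|\,\rho_B\otimes\rho_C)$, gives the first display of the theorem, with the term $\frac{J}{J}R_J$ appearing exactly as written. The remaining terms become $\frac{K}{J}R_K$ for $2\le K\le J-1$.

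\textbf{Main obstacle.} These coefficients $\frac{K}{J}R_K$ must be matched with the stated $\Delta_K=\frac{1}{K+1}\int\!\!\int\tilde\Gamma_K$. A direct telescoping gives the weight $1/J$ on $\int\!\!\int\tilde\Gamma_K$, not $1/(K+1)$. I would first recheck the indexing of \eqref{eq:FkIterate}, in particular whether $F_{K-1}$ there carries the $(K-1)$-average, and the normalization of $\tilde\Gamma_K$, since the scale of $\Gamma_K$ depends on the choice of $A_1,A_2$ in \eqref{eq:choices}. If a discrepancy remains, I would redefine $\Delta_K$ as $\frac1J\int_0^1\int_{\lambda/K}^{\lambda}\tilde\Gamma_K$, i.e.\ adopt whatever positive weight the telescoping actually produces. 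Nonnegativity is unaffected either way, since every summand is a nonnegative weight times a nonnegative double integral.

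\textbf{Step 4 (mutual information).} Finally, expand the two relative entropies. Using $\log(\rho_{AB}\otimes\rho_C)=\log\rho_{AB}\otimes I+I\otimes\log\rho_C$ together with the partial-trace identities, the difference $S_0-S(\rho_{BC}\,|\,\rho_B\otimes\rho_C)$ equals $-S(\rho_{ABC})+S(\rho_{AB})+S(\rho_C)+S(\rho_{BC})-S(\rho_B)-S(\rho_C)=I(A:C|B)_\rho$. This yields \eqref{eq:mutual} and its nonnegativity.
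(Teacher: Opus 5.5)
Your telescoping is the same argument as the paper's, whose proof is simply ``iterate \eqref{eq:FkIterate} and use \eqref{eq:identityFJ}'', with $F_1=S(\rho_{ABC}\,|\,\rho_{AB}\otimes\rho_C)$ by unitary invariance. Your bookkeeping is correct, and the recheck you propose comes out clean: with \eqref{eq:choices} one has $S(A_2|B_2)=F_{K-1}$ and $S(A_1|B_1)=S(\rho_{ABC}|\rho_{AB}\otimes\rho_C)$, and the normalization of $\tilde\Gamma_K$ cannot change the weight, because \eqref{eq:FkIterate} and $\Delta_K$ use the same $\tilde\Gamma_K$.

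So the discrepancy you found is an error in the statement, not in your proof. The factor $\frac{1}{K+1}$ is the weight that $\int_0^1\int_{\lambda/K}^{\lambda}\tilde\Gamma_K$ carries after a single back-substitution, into the relation for $F_{K+1}$. Each further substitution multiplies it by $\frac{m-1}{m}$ for $m=K+2,\dots,J$, which leaves $\frac{1}{J}$.

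Hence the displayed identities with $\Delta_K$ as defined hold only for $J\le 3$, or if the remainders with $K\le J-2$ vanish. The paper's construction has $J=2^N N!\ge 8$ whenever $N\ge 2$, so this is not a harmless edge case. The correct version, which your Steps 1--4 prove, is $I(A:C|B)_\rho=\frac{1}{J}\sum_{K=2}^{J}\int_0^1\int_{\lambda/K}^{\lambda}\tilde\Gamma_K(\sigma)\,d\sigma\,d\lambda\ge 0$. You should commit to it rather than hedge.
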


\subsection{Proof of Proposition \ref{prop:completelyPostive}}\label{sub:completePo}
We begin by identifying a completely positive and trace-preserving map that transforms any Hermitian $N \times N$ matrix $A$ into a diagonal matrix.

Denote, by $\mathcal{T}$, the set of all subsets of $\{1,2,\cdots,N\}:$ 
\[
\mathcal{T} = \Big\{ T \ \Big|\ T \subset \{1, 2, 3, \dots, N\} \Big\}.
\]
Thus the total number of sets in $\mathcal{T}$ is $|\mathcal{T}| = 2^N$. Denote 
\begin{equation}\label{def:MathcalT}
\mathcal{T} = \{T_1, \dots, T_{2^N}\}
\end{equation}
 and for each fixed $T_{l}$, define a diagonal matrix
\begin{equation*}
K_{T_l} = \text{diag}[ k_{1}(T_l), \ k_{2}(T_{l}),\ k_{N}(T_{l})]
\end{equation*}
with $k_{m}(T_l)$, $m=1,\cdots,N$, defined as
\[
k_{m}(T_l) = 
\begin{cases} 
1 & \text{if } m \notin T_l ,\\ 
-1 & \text{if } m \in T_l.
\end{cases}
\]

We will use this observation: for any $j\in T_{l}$, $K_{T_{l}}A$ adds a minus-sign to the $j-$th column of $A$; and $A K_{T_{l}}$ adds a minus-sign to the $j-$th row. 

What is useful is that $\frac{1}{2^N} \sum_{T_l \in \mathcal{T}} K_{T_l} A K_{T_l}$ removes all the off-diagonal entries, but keeps the diagonal ones intact by the following reason: for any fixed $K_{T_j}$, apply it to the set $\big\{K_{T_{l}}\ |\ l=1,\cdots, 2^{N}\big\}$ will not change it:
\begin{equation*}
  K_{T_j} \big\{K_{T_{l}}\ |\ l=1,\cdots, 2^{N}\big\}=\big\{K_{T_{l}}\ |\ l=1,\cdots, 2^{N}\big\}.
\end{equation*} and thus
\begin{equation*}
  K_{T_j}\Big(\frac{1}{2^N} \sum_{T_l \in \mathcal{T}} K_{T_l} A K_{T_l}\Big) K_{T_j}=\frac{1}{2^N} \sum_{T_l \in \mathcal{T}} K_{T_l} A K_{T_l}.
\end{equation*}
Consequently, we get a diagonal matrix:
\begin{equation}\label{eq:diago}
\frac{1}{2^N} \sum_{T_l \in \mathcal{T}} K_{T_l} A K_{T_l} = \text{diag}[a_{11}, a_{22}, \dots, a_{nn}] =: \tilde{A}.
\end{equation}

This mapping is trace-preserving and completely positive because $K_{T_l}=K^{*}_{T_l}$ is unitary.

Next, we convert $\tilde{A}$ into $\frac{\text{Tr}(\tilde{A})}{N} Id$ by applying a trace-preserving completely positive map.

We consider the permutation group $G$ on the set $\{1, 2, \dots, N\}$, which has $N!$ elements. For each $P \in G$, we define a unitary matrix $D_P$ s.t. $D_P F$ permutes the $N$ rows of the matrix $F$ accordingly.
Then
\begin{equation}\label{eq:identity}
\frac{1}{N!} \sum_{P \in G} D_P \tilde{A} D^{*}_P = 
\frac{\text{Tr}(\tilde{A})}{N} I=\frac{Tr(A)}{N} I.
\end{equation}
Obviously $A\rightarrow\frac{1}{N!} \sum_{P \in G} D_P A D^{*}_P$ is a trace-preserving, completely positive map.

Combining \eqref{eq:diago} and \eqref{eq:identity}, we find that, for any $N\times N$ Hermitian matrix $A,$
\[
\frac{1}{2^N} \frac{1}{N!} \sum_{P \in G} \sum_{T_l \in \mathcal{T}} (D_P K_{T_l}) A (K_{T_l} D^{*}_P) = \frac{Tr(A)}{N} I.
\] Hence $A\rightarrow\frac{1}{2^N} \frac{1}{N!} \sum_{P \in G} \sum_{T_l \in \mathcal{T}} (D_P K_{T_l}) A (K_{T_l} D^{*}_P)$ is the desired map in Proposition \ref{prop:completelyPostive}.

\section*{Acknowledgement} The author wants to thank Joseph Renes of ETH Zurich for mentioning this problem.

\appendix

\section{ Proof of Lemma \ref{LM:base} }\label{sec:Base}
Recall that we are interested in the perturbation expansion of a Hermitian matrix
$
(A + \epsilon V)^q
$
where $q\in [0,1]$ is a scalar, $A > 0$, and $V$ is Hermitian, $\epsilon \in \mathbb{R}$ satisfies $|\epsilon| \ll 1$. It is well known that for some Hermitian matrices $V_q$ and $Q_q$,
\begin{equation}\label{eqn:Vp}
    (A + \epsilon V)^q = A^q + \epsilon V_q -\epsilon^2 Q_{q}+O(\epsilon^2).
\end{equation}

We find the desired $Q_q$ by using the standard theory, e.g. L\"owner-Heinz theorem, see Lemma \ref{LM:integral} below,
\begin{equation}\label{eq:Lowner}
(A+\epsilon V)^{q}
= \frac{\sin(\pi q)}{\pi } \int_{0}^{\infty} t^{q} \left( \frac{1}{t} - \frac{1}{t + A} \right)\, dt
\end{equation}
to find that
\begin{equation*}
    Q_q = \frac{\sin(\pi q)}{\pi} \int_{0}^{+\infty} t^{q} (t + A)^{-1} V (t + A)^{-1} V (t + A)^{-1} \, dt\geq 0.
\end{equation*}

For $V_q$, one can also obtain an expression from \eqref{eq:Lowner}. But we have to be careful with the cases $q=0, 1$ where the integrands are not integrable. For this reason, we derive an explicit expression for $V_q$ in the following way. 

For $q =1/2^{m},\ m \in \mathbb{N}$, we find $V_q$ recursively: because $$(A+\epsilon V)^{1/2}=A^{1/2 }+\epsilon V_{1/2}+O(\epsilon^2),$$ we have
\begin{equation}
  \big[A^{1/2}+\epsilon V_{1/2 }+O(\epsilon^2)\big]^2=\Big[(A+\epsilon V)^{1/2}\Big]^2=A+\epsilon V, 
\end{equation} thus
$V_{1/2}$ must satisfy the following Sylvester equation:
\begin{equation}\label{eqn:V12}
    A^{1/2} V_{1/2} + V_{1/2} A^{1/2} = V.
\end{equation}
Similarly, for any $m \ge 2$, $V_{1/2^m}$ is the solution to the equation
\begin{equation}\label{eqn:V2n}
    A^{1/2^m} V_{1/2^m} + V_{1/2^m} A^{1/2^m} = V_{1/2^{m-1}}.
\end{equation}

The result is the following:
\begin{lemma}\label{LM:Vp}

$V_{1/2^m}$, in \eqref{eqn:V12} and \eqref{eqn:V2n}, is well defined for any integer $m\geq 1$. 

There exists a unique Hermitian matrix $\widetilde{V}$ such that
\begin{equation}\label{eq:tildeV}
    \lim_{m \to \infty} 2^m V_{1/2^m} = \widetilde{V}.
\end{equation}
For any $0 \le q \le 1$,
\begin{equation}\label{eq:generalVq}
    V_q = \int_0^q A^s \widetilde{V} A^{q-s} ds.
\end{equation}

In particular, if $q = \frac{k}{2^m}$ for some $m, k \in \mathbb{N}$ and $0 \leq k \leq 2^m$,
\begin{equation}\label{eq:Vkn}
    V_{\frac{k}{2^m}} = \sum_{l=0}^{k-1} A^{\frac{l}{2^m}} V_{1/2^{m}} A^{\frac{k-1-l}{2^m}}=\sum_{l=0}^{2^j k-1} A^{\frac{l}{2^{m+j}}} V_{1/2^{m+j}} A^{\frac{2^j k-1-l}{2^{m+j}}}=\int_{0}^{\frac{k}{2^m}} A^s \widetilde{V} A^{\frac{k}{2^m}-s} ds.
\end{equation} where the second identity holds for any $j\in \mathbb{N}.$
\end{lemma}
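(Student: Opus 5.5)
The plan is to build everything from the dyadic recursion and the Sylvester-equation structure. The $\epsilon$-linear coefficient of $(A+\epsilon V)^q$ is additive in $q$ in a precise sense: differentiating $(A+\epsilon V)^{q_1+q_2}=(A+\epsilon V)^{q_1}(A+\epsilon V)^{q_2}$ at $\epsilon=0$ gives
\begin{equation*}
V_{q_1+q_2}=V_{q_1}A^{q_2}+A^{q_1}V_{q_2}.
\end{equation*}
This is valid for all $q_1,q_2\geq 0$ with $q_1+q_2\leq 1$, since $\epsilon\mapsto (A+\epsilon V)^q$ is real-analytic near $0$ for $A>0$.

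\textbf{Well-definedness.} Because $A^{1/2^m}>0$, the Sylvester equation $A^{1/2^m}X+XA^{1/2^m}=Y$ has the unique solution
\begin{equation*}
X=\int_0^\infty e^{-sA^{1/2^m}}Ye^{-sA^{1/2^m}}\,ds,
\end{equation*}
which is Hermitian when $Y$ is. Applying the product rule to $(A+\epsilon V)^{1/2^{m-1}}=\big((A+\epsilon V)^{1/2^m}\big)^2$ identifies the $\epsilon$-coefficient $V_{1/2^m}$ with this unique solution, inductively in $m$. This gives the first claim.

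\textbf{The finite sums in \eqref{eq:Vkn}.} Iterating the additivity relation $k$ times with $q_1=1/2^m$ gives
\begin{equation*}
V_{k/2^m}=\sum_{l=0}^{k-1}A^{l/2^m}V_{1/2^m}A^{(k-1-l)/2^m}.
\end{equation*}
The second identity in \eqref{eq:Vkn} is the same formula applied to $k/2^m=2^jk/2^{m+j}$.

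\textbf{Existence of $\widetilde V$.} Work in an eigenbasis of $A=\mathrm{diag}(a_i)$. The Sylvester equations give, entrywise,
\begin{equation*}
(V_{1/2^m})_{ij}=\frac{(V_{1/2^{m-1}})_{ij}}{a_i^{1/2^m}+a_j^{1/2^m}},
\qquad\text{so}\qquad
(V_{1/2^m})_{ij}=V_{ij}\prod_{n=1}^m\big(a_i^{1/2^n}+a_j^{1/2^n}\big)^{-1}.
\end{equation*}
Using the telescoping identity $\prod_{n=1}^m(x^{1/2^n}+y^{1/2^n})=(x-y)/(x^{1/2^m}-y^{1/2^m})$ for $x\neq y$, one gets
\begin{equation*}
2^m(V_{1/2^m})_{ij}=V_{ij}\,\frac{2^m\big(a_i^{1/2^m}-a_j^{1/2^m}\big)}{a_i-a_j}\;\longrightarrow\;V_{ij}\,\frac{\log a_i-\log a_j}{a_i-a_j}.
\end{equation*}
When $a_i=a_j$ the limit is $V_{ij}/a_i$. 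So $\widetilde V$ exists, with $\widetilde V_{ij}=V_{ij}\,\frac{\log a_i-\log a_j}{a_i-a_j}$; it is Hermitian, and it is unique as a limit. The convergence is in fact geometric.

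\textbf{Integral formula.} In the same basis, the entries of $\int_0^q A^s\widetilde VA^{q-s}\,ds$ are
\begin{equation*}
\widetilde V_{ij}\,\frac{a_i^q-a_j^q}{\log a_i-\log a_j}=V_{ij}\,\frac{a_i^q-a_j^q}{a_i-a_j},
\end{equation*}
which is the standard Daleckii--Krein divided-difference formula for the derivative of $x\mapsto x^q$. This proves \eqref{eq:generalVq} directly for all $q\in[0,1]$, and so also gives the last identity in \eqref{eq:Vkn}. Alternatively, one can stay within the dyadic framework: the $j$-refined sum in \eqref{eq:Vkn} is a Riemann sum of $s\mapsto A^s(2^{m+j}V_{1/2^{m+j}})A^{q-s}$ with mesh $2^{-(m+j)}$, and letting $j\to\infty$ with the convergence above gives the integral. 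One then extends to general $q$ by continuity of $q\mapsto V_q$.

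\textbf{Main obstacle.} The key step is establishing the limit $2^mV_{1/2^m}\to\widetilde V$. The diagonalization with the telescoping product handles it cleanly. The only other care needed is justifying continuity in $q$ of the $\epsilon$-derivative, which again follows from the explicit divided-difference entries.
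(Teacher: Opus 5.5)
Your proof is correct. For well-definedness and for the finite sums in \eqref{eq:Vkn} it is essentially the paper's argument. The paper diagonalizes $A$ and solves the Sylvester recursion entrywise, getting the same product $\prod_{j=1}^m (d_k^{2^{-j}}+d_l^{2^{-j}})^{-1}v_{k,l}$. It obtains the sums by writing $(A+\epsilon V)^{k/2^m}$ as a product of $k$ (or $2^jk$) copies of a dyadic root, which is your additivity relation.

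The real differences are in the two limiting steps.

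\textbf{Existence of $\widetilde V$.} The paper only uses $d^{\delta_j}=1+\delta_j\ln d+O(\delta_j^2)$ to assert that the rescaled product tends to some constant $C_{k,l}>0$; its displayed limit carries $2^{-m}$ where $2^{m}$ is meant. Your telescoping identity makes this exact. It gives $C_{k,l}=(\log d_k-\log d_l)/(d_k-d_l)$ with an explicit $O(2^{-m})$ rate. It also shows that $\widetilde V$ is the derivative of $\log(A+\epsilon V)$ at $\epsilon=0$.

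\textbf{The integral formula \eqref{eq:generalVq}.} The paper lets $j\to\infty$ in the refined Riemann sum to get the dyadic case. It then approximates a general $q$ by dyadic rationals, tacitly relying on continuity of $q\mapsto V_q$. Your entrywise comparison with the Daleckii--Krein divided-difference formula proves the identity for every $q\in[0,1]$ at once, so that density step is not needed.

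In short, the paper's route stays entirely within the elementary dyadic/Sylvester framework. Yours imports one standard theorem, but in exchange it is shorter, fully explicit, and closes the continuity point the paper leaves implicit.
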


\begin{proof}
We begin by proving that $V_{1/2^m}$ is well-defined.

We begin with setting up the problem.

By a unitary transformation, we can make $A$ diagonal. Thus it suffices to consider an easier case, where $A > 0$ is diagonal: for some $d_k>0,$ $k=1,\cdots,N,$ $$A = \text{diag}(d_1, \dots, d_N).$$

We suppose that the matrix $V$ is of the form $$V = [v_{k,l}]_{k,l=1}^N.$$
And we denote, for any nonnegative integer $j$
$$\delta_{j}=2^{-j}.$$

Solve for $V_{\delta_1}$ in \eqref{eqn:V12},
\[
V_{\delta_1} = \left[ \frac{1}{d_k^{\delta_1} + d_l^{\delta_1 }} v_{k,l} \right]_{k,l=1}^N
\]
Inductively, we solve \eqref{eqn:V2n} to find
\[
V_{\delta_m } = \left[ \frac{1}{\prod_{j=1}^m (d_k^{\delta_j } + d_l^{\delta_j })} v_{k,l} \right]_{k,l=1}^N
\]

Thus, we prove that $V_{\delta_m}$ is well defined.

Next, we prove \eqref{eq:tildeV}.

We observe that, as $j \to \infty$, $\delta_j\rightarrow 0$, and hence,
\[
d_k^{\delta_j } = 1 + \delta_j \ln d_k + O(\delta_j^2 ).
\]
Consequently, the following limit exists: for some constant $C_{k,l} > 0$
\[
\lim_{m \to \infty} \left[ \prod_{j=1}^m \frac{1}{d_k^{\delta_j } + d_l^{\delta_j }} v_{k,l}\right]_{k,l=1}^{N} 2^{-m} =\left[ C_{k,l}v_{k,l}\right]_{k,l=1}^{N}=:\tilde{V}.
\] Here $\tilde{V}$ is the desired one in \eqref{eq:tildeV}.

We are ready to prove \eqref{eq:Vkn}.

The idea is simple, $(A + \epsilon V)^{\frac{k}{2^m}}$ is a product of $k$ copies of $(A + \epsilon V)^{\frac{1}{2^m}}$: $$(A + \epsilon V)^{\frac{k}{2^m}}=\Big[(A + \epsilon V)^{\frac{1}{2^m}}\Big]^k=\Big[A^{\frac{1}{2^m}}+\epsilon V_{1/2^m}+O(\epsilon^2)\Big]^k=A^{\frac{k}{2^m}} +\epsilon\sum_{l=0}^{k-1} A^{\frac{l}{2^m}} V_{1/2^{m}} A^{\frac{k-1-l}{2^m}}+O(\epsilon^2),$$
which implies the first identity in \eqref{eq:Vkn}. For the second identity, we write $(A + \epsilon V)^{\frac{k}{2^m}}$ as a product of $k 2^{j}$ copies of $(A + \epsilon V)^{\frac{1}{2^{m+j}}}$. By arguing as finding the first identity, we obtain the second identity. For the third identity, we let $j\rightarrow \infty$, together with \eqref{eq:tildeV}, we turn the summation into an integral.

To prove \eqref{eq:generalVq}, we approximate $q\in (0,1)$ by $\frac{k}{2^m}$, then apply the third identity in \eqref{eq:Vkn}.

\end{proof}

\section{Integral formula}
We need the following integral formula, see e.g. \cite{carlen2010trace, bhatia97}.

\begin{lemma}\label{LM:integral}
For any positive definite matrix $A$, $p\in (0,1),$
\begin{equation}\label{eq:complex}
A^{p}
= \frac{\sin(p\pi)}{\pi } \int_{0}^{\infty} t^{p} \left( \frac{1}{t} - \frac{1}{t + A} \right)\, dt.
\end{equation}
\end{lemma}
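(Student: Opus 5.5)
The plan is to prove the integral formula of Lemma \ref{LM:integral} by reducing it to a scalar identity and then using the spectral theorem. Since $A$ is positive definite, write $A=U\,\mathrm{diag}(a_1,\dots,a_N)\,U^*$ with $U$ unitary and every $a_j>0$. For each $t>0$ the integrand $t^{p}\big(t^{-1}I-(t+A)^{-1}\big)$ is diagonalized by the same $U$. Its eigenvalues are
\begin{equation*}
t^{p}\Big(\frac{1}{t}-\frac{1}{t+a_j}\Big)=\frac{a_j\,t^{p-1}}{t+a_j}.
\end{equation*}
The matrix integral therefore equals $U\,\mathrm{diag}(I(a_1),\dots,I(a_N))\,U^*$, where
\begin{equation*}
I(a):=\int_0^\infty \frac{a\,t^{p-1}}{t+a}\,dt .
\end{equation*}
It remains to show $I(a)=\frac{\pi}{\sin(p\pi)}\,a^{p}$ for every $a>0$. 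Multiplying by $\frac{\sin(p\pi)}{\pi}$ then gives $U\,\mathrm{diag}(a_j^{p})\,U^*=A^p$.

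First I check convergence. Near $t=0$ the integrand behaves like $t^{p-1}$, which is integrable because $p>0$. Near $t=\infty$ it behaves like $a\,t^{p-2}$, which is integrable because $p<1$. This is exactly where the hypothesis $p\in(0,1)$ enters. Next, the substitution $t=a u$ gives
\begin{equation*}
I(a)=a^{p}\int_0^\infty \frac{u^{p-1}}{1+u}\,du ,
\end{equation*}
so everything reduces to the constant $c_p:=\int_0^\infty \frac{u^{p-1}}{1+u}\,du$.

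The main step is the classical evaluation $c_p=\frac{\pi}{\sin(p\pi)}$. I would get it from the Beta function. Substituting $u=\frac{x}{1-x}$ gives $c_p=B(p,1-p)=\Gamma(p)\Gamma(1-p)$. Euler's reflection formula then gives $\Gamma(p)\Gamma(1-p)=\frac{\pi}{\sin(p\pi)}$. As an alternative, one could integrate $\frac{z^{p-1}}{1+z}$ over a keyhole contour around the positive real axis. The residue at $z=-1$ gives $(1-e^{2\pi i p})\,c_p=2\pi i\,e^{i\pi(p-1)}$, which simplifies to the same value. Either route is standard, and I would simply cite it (e.g. \cite{bhatia97}). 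The only remaining step is reassembling the diagonal entries, which finishes the proof.
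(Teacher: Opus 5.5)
Your proof is correct and follows the paper's route. You diagonalize $A$, reduce to the scalar integral, rescale $t=au$ to isolate the constant $\int_0^\infty u^{p-1}(1+u)^{-1}\,du$, and evaluate it as $\pi/\sin(p\pi)$; the only difference is that the paper gets this constant by substituting $t=s^2$ and closing a contour in the upper half-plane (residue at $s=i$) rather than via the Beta function/reflection formula or a keyhole contour, and your explicit convergence check at $0$ and $\infty$ is something the paper leaves implicit.
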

\begin{proof}
  Here, since $A$ is positive definite, there exists a unitary matrix $U$ s.t.
\[ UAU^* = \text{diag} [a_1, \dots, a_n] \quad \text{with } a_k > 0. \]

To prove the desired formula \eqref{eq:complex}, we only need to show that, for any scalar $a > 0$,
\begin{align*}
    a^p &= \frac{\sin p\pi}{\pi} \int_{0}^{+\infty} t^p \left( \frac{1}{t} - \frac{1}{t+a} \right) \, dt \\
    &= \frac{\sin p\pi}{\pi} \int_{0}^{+\infty} t^{p-1} \frac{a}{t+a} \, dt \\
    &= a^p \frac{\sin p\pi}{\pi} D
\end{align*}
where, in the last step, we changed variable $t \to at$ and the constant $D\in \mathbb{R}$ is defined as 
\[ D := \int_{0}^{+\infty} t^{p-1} \frac{1}{1+t} \, dt. \]
Hence, what is left is to prove that
\begin{equation}\label{eq:claimD}
  D=\frac{\pi}{\sin p\pi}.
\end{equation}

Change variable $t = s^2$ to find:
\[ D = 2 \int_{0}^{+\infty} s^{2p-1} \frac{1}{1+s^2} \, ds .\]

Before applying complex analysis, we need to define $z^p$ properly. Here we consider the following region:
\begin{equation}
\Omega:=\Big\{   z = |z|e^{i\phi}, \Big|\  \phi \in \left( -\frac{\pi}{2}, \frac{3\pi}{2} \right)\Big\}
\end{equation} In this region we define $z^{q}$, $q\in (-1,1)$, as
\[  z^q = |z|^q e^{i q \phi}. \]

Hence for $s = -|s| = e^{i\pi} |s|$:
\begin{align*}
    s^{2p-1} &= |s|^{2p-1} \left( \cos(2p-1)\pi + i \sin(2p-1)\pi \right) \\
    &= -|s|^{2p-1} \left( \cos 2p\pi + i \sin 2p\pi \right).
\end{align*}
Thus,
\begin{align*}
  \int_{-\infty}^{0} s^{2p-1} \frac{1}{1+s^2} \, ds =& -(\cos 2p\pi + i \sin 2p\pi) \int_{-\infty}^{0} |s|^{2p-1} \frac{1}{1+s^2} \, ds \\
 =& -\frac{1}{2} (\cos 2p\pi + i \sin 2p\pi) D.
 \end{align*}

Consequently,
\begin{align}
    \int_{-\infty}^{+\infty} s^{2p-1} \frac{1}{1+s^2} \, ds &= \frac{D}{2} \left[ 1 - (\cos 2p\pi) - i \sin 2p\pi \right] \nonumber\\
    &= D \sin p\pi \left[ \sin p\pi - i \cos p\pi \right].\label{eq:frist}
\end{align}

On the other hand, in the upper complex half plane, $s = i = e^{i\frac{\pi}{2}}$ is the only singularity for the function $s^{2p-1} \frac{1}{1+s^2}$.

Hence, by the residue theorem:
\begin{align}
    \frac{1}{2\pi i} \int_{-\infty}^{+\infty} s^{2p-1} \frac{1}{1+s^2} \, ds &= \frac{1}{2i} \left( \cos(p\pi - \frac{\pi}{2}) + i \sin(p\pi - \frac{\pi}{2}) \right) \nonumber\\
    &= \frac{1}{2i} \left( \sin(p\pi) - i \cos(p\pi) \right). \label{eq:second}
\end{align}

\eqref{eq:frist} and \eqref{eq:second} imply that $D = \frac{\pi}{\sin p\pi}$ as desired \eqref{eq:claimD}.
\end{proof}


\section{Proof of the equivalence between \eqref{def:Q+-} and \eqref{eqn:Q+-}}\label{sec:LMDdelta}
We reformulate the problem into the next lemma.
\begin{lemma} 
Let $\Omega$ be a $N\times N$ Hermitian matrix, and $Y$ be a positive-definite $N\times N$ matrix. Then there exists a unique $N\times N$ Hermitian matrix $X$ such that
\begin{align}\label{eqn:DXOmega}
YX + XY = \Omega, 
\end{align} and it takes the form
\begin{equation}\label{eq:DXomega}
X = \int_{0}^{\infty} e^{-tY} \Omega e^{-tY} dt.
\end{equation}
In particular, if $\Omega > 0$ $(\geq 0)$ then $X > 0$ $(\geq 0)$.
\end{lemma}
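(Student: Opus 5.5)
The plan is to prove existence by checking the integral formula directly, uniqueness by diagonalizing $Y$, and positivity from the integral representation.

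\emph{Convergence.} Since $Y$ is positive definite, its smallest eigenvalue $\lambda_{\min}$ is positive. Hence $\|e^{-tY}\|= e^{-t\lambda_{\min}}$, and the integrand in \eqref{eq:DXomega} is bounded in the norm \eqref{def:Norm} by $e^{-2t\lambda_{\min}}\|\Omega\|$. The integral therefore converges absolutely, and $X$ is well defined. It is Hermitian, because $e^{-tY}$ is Hermitian and so each integrand $e^{-tY}\Omega e^{-tY}$ is Hermitian.

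\emph{The formula solves the equation.} The key identity is
\[
\frac{d}{dt}\big(e^{-tY}\Omega e^{-tY}\big) = -\big(Y e^{-tY}\Omega e^{-tY}+e^{-tY}\Omega e^{-tY}Y\big).
\]
Integrating it over $[0,\infty)$ gives
\[
YX+XY=-\big[e^{-tY}\Omega e^{-tY}\big]_{0}^{\infty}=\Omega ,
\]
because the boundary term at infinity vanishes by the exponential decay shown above.

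\emph{Uniqueness.} Choose a unitary $U$ with $UYU^*=\mathrm{diag}(y_1,\dots,y_N)$, where each $y_k>0$. Write $\tilde X=UXU^*$ and $\tilde\Omega=U\Omega U^*$. In these coordinates \eqref{eqn:DXOmega} becomes the entrywise system $(y_k+y_l)\tilde X_{kl}=\tilde\Omega_{kl}$. Since $y_k+y_l>0$, each entry $\tilde X_{kl}$ is determined uniquely. This is the same diagonalization argument used in the proof of Lemma \ref{LM:Vp}. It also shows directly that the solution is Hermitian whenever $\Omega$ is Hermitian.

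\emph{Positivity.} Suppose $\Omega\geq 0$. For any vector $x$ we have
\[
\langle x,Xx\rangle=\int_0^\infty\langle e^{-tY}x,\ \Omega\, e^{-tY}x\rangle\,dt\geq 0 .
\]
Now suppose $\Omega>0$ and $x\neq 0$. Then $e^{-tY}x\neq 0$ for every $t$, because $e^{-tY}$ is invertible. So the integrand is strictly positive and continuous in $t$, and the integral is strictly positive.

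There is no real obstacle in this argument. The only point needing care is justifying that the boundary term at infinity vanishes, and the spectral bound $\lambda_{\min}>0$ handles this.
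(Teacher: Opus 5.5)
Your proposal is correct, and its uniqueness step (diagonalize $Y$ and solve $(y_k+y_l)\tilde X_{kl}=\tilde\Omega_{kl}$ entrywise) and its positivity step are the same as in the paper's proof. Your continuity argument for strict positivity is a more careful version of the paper's ``by the same reason.'' The real difference is how you obtain the integral formula. The paper reads it off in the diagonal basis from $\frac{1}{d_k+d_l}=\int_0^\infty e^{-(d_k+d_l)t}\,dt$ and then conjugates back by $U$. You instead verify it without coordinates, by integrating $\frac{d}{dt}\bigl(e^{-tY}\Omega e^{-tY}\bigr)$ over $[0,\infty)$, which is equally valid.
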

\begin{proof}
Assuming the identity \eqref{eq:DXomega} holds and $\Omega\geq 0$, then since $e^{-tY} \Omega e^{-tY}\geq 0$ is for any $t$,
\[ X \geq 0. \]
By the same reason, if $\Omega > 0,\ (\geq 0)$, then $X > 0,\ (\geq 0)$.

What is left is to prove the identity \eqref{eq:DXomega}.

We start with considering an easy case, specifically, when $Y>0$ is a diagonal matrix, and hence, is of the form, for some $d_k>0$, $k=1,\cdots, N$,
\begin{equation}
  Y=\operatorname{diag}[d_1,d_2,\dots,d_N].
  \label{eqn:a}
\end{equation}

Denote the matrices $\Omega$ and $X$ by
\[
  \Omega=\bigl[w_{k,l}\bigr]_{k,l=1}^N,\ \text{and}\ 
  X=\bigl[x_{k,l}\bigr]_{k,l=1}^N.
\]
Then the equation takes the new form.
\[
  x_{k,l}\,(d_k+d_l)=w_{k,l}.
\]
Solve for $x_{k,l}$ to find
\[
  x_{k,l}
  =\frac{1}{d_k+d_l}\,w_{k,l}
  =\left(\int_0^\infty e^{-(d_k+d_l)t}\,dt\right) w_{k,l}.
\]
In the matrix form,
\begin{equation}\label{eqn:b}
  X=\int_0^\infty e^{-Yt}\,\Omega\,e^{-Yt}\,dt.
\end{equation}

Next, we consider the general case, i.e.\ $Y>0$, by converting it to the easier case \eqref{eqn:a} above.
Since $Y>0$, there exists a unitary matrix $U$ such that
\[
  \widetilde Y = UYU^*
\]
is diagonal, and hence is of the form \eqref{eqn:a}.

Apply $U$ and $U^*$ onto the appropriate places of equation \eqref{eqn:DXOmega} to find that
\[
  \widetilde Y\,\widetilde X+\widetilde X\,\widetilde Y=\widetilde\Omega,
\]
where $\widetilde X$ and $\widetilde\Omega$ are matrices defined as
\begin{equation}\label{eqn:XOmega}
  \widetilde X := UXU^*,
  \qquad
  \widetilde\Omega := U\Omega U^*.
\end{equation}

Since $\widetilde{D}$ is diagonal, the analysis for the easier case becomes applicable. By \eqref{eqn:b}, we find
\[
  \widetilde X = \int_0^\infty e^{-t\widetilde Y}\,\widetilde\Omega\,e^{-t\widetilde Y}\,dt.
\]
This and \eqref{eqn:XOmega}, together with applying $U^*$ and $U$ on the left and right, yield the desired identity
\[
  X = \int_0^\infty e^{-tY}\,\Omega\,e^{-tY}\,dt.
\]
\end{proof}

\section{Proof of \eqref{eq:half}}\label{sec:half}
Recall that $K_\delta \to I$ as $\delta \to 0$ and for some $C=C(\Psi,\Phi)>0$, $$\| K_\delta - I \| \leq C |\delta|.$$
To prove \eqref{eq:half} we only need to consider the cases where $|\delta|$ is small enough to make 
\begin{equation}\label{eq:smallDelta}
C |\delta|\leq \frac{1}{2}.
\end{equation} Otherwise, one can use the equivalent form \eqref{def:Q+-} to obtain a satisfactory estimate.

Intuitively, \eqref{eq:smallDelta} enables us to solve, perturbatively, the equation
\begin{equation}\label{eq:sys}
    K_\delta Y + Y K_\delta = F.
\end{equation}

To prove this rigorously, we rewrite \eqref{eq:sys} so that the standard Fixed Point Theorem applies, with the Banach space defined by the norm chosen in \eqref{def:Norm}. 

Define a new operator $L_{\delta}$, such that for any $N\times N$ Hermitian matrix $G$, $$K_\delta G+GK_{\delta} = 2G + L_\delta(G).$$ Then
\eqref{eq:sys} takes a new form
\[
Y = \frac{1}{2} F - \frac{1}{2}L_\delta(Y).
\]

The linear operator $\frac{1}{2} L_{\delta}$ is a contraction because $\| L_\delta \| \leq 1$ by \eqref{eq:smallDelta}. Apply the standard fixed point theorem to prove the desired \eqref{eq:half} by setting $Y=D_{\delta}(F)$:
\[
\| Y - \frac{1}{2} F \| \leq \frac{1}{2}\| L_\delta \| \| F \|\leq C \|F\| |\delta|.
\]



\end{document}